\tikzstyle{tre}=[circle,draw,minimum size=5.1mm]
\tikzstyle{hyb}=[rectangle,draw,minimum size=5.1mm]
\tikzstyle{indef}=[rectangle,rounded corners,draw,minimum size=5.1mm]
\newcommand{\etq}[1]{%
\draw (#1) node {$#1$};
}
\newcommand{\etqsm}[1]{%
\draw (#1) node {\small $#1$};
}
\newcommand{\etqfn}[1]{%
\draw (#1) node {\footnotesize $#1$};
}
\newcommand{\pathgr}{\!\rightsquigarrow\!{}}
\renewcommand{\leq}{\leqslant}
\renewcommand{\geq}{\geqslant}
\renewcommand{\le}{\leqslant}
\renewcommand{\ge}{\geqslant}
\newcommand{\NN}{\mathbb{N}}
\newcommand{\RR}{\mathbb{R}}
\newcommand{\TCTC}{\mathrm{TCTC}}
\begin{document}

\title{Path lengths in tree-child time consistent hybridization networks}

\author{Gabriel Cardona\inst{1} \and Merc\`e Llabr\'es\inst{1}\inst{2} \and Francesc Rossell\'o\inst{1}\inst{2} \and
Gabriel Valiente\inst{2}\inst{3}} \authorrunning{G. Cardona et al.}
\institute{Department of Mathematics and Computer Science, University
of the Balearic Islands, E-07122 Palma de Mallorca, Spain
\and
Research Institute of Health Science (IUNICS),  E-07122 Palma de Mallorca, Spain
\and
Algorithms, Bioinformatics, Complexity and Formal Methods Research
Group, Technical University of Catalonia, E-08034 Barcelona, Spain}

\maketitle

\begin{abstract}
Hybridization networks are representations of evolutionary
histories that allow for the inclusion of reticulate events like
recombinations, hybridizations, or lateral gene transfers.  The recent
growth in the number of hybridization network reconstruction
algorithms has led to an increasing interest in the definition of
metrics for their comparison  that can be used
to assess the accuracy or robustness of these methods.  In this paper
we establish some basic results that make it possible the
generalization to tree-child time consistent (TCTC) hybridization
networks of some of the oldest known metrics for phylogenetic trees:
those based on the comparison of the vectors of path lengths between
leaves.  More specifically, we associate to each hybridization network
a suitably defined vector of `splitted' path lengths between its
leaves, and we prove that if two TCTC hybridization networks have the
same such vectors, then they must be isomorphic.  Thus, comparing
these vectors by means of a metric for real-valued vectors defines a
metric for TCTC hybridization networks.  We also consider the case of
fully resolved hybridization networks, where we prove that simpler,
`non-splitted' vectors can be used.
\end{abstract}

\setcounter{footnote}{0}
\section{Introduction}
\label{sec:intro}

An evolutionary history is usually modelled by means of a rooted
phylogenetic tree, whose root represents a common ancestor of
all species under study (or whatever other taxonomic units are considered: genes,
proteins,\ldots), the leaves, the extant species, and the
internal nodes, the ancestral species.  But phylogenetic trees can
only cope with speciation events due to mutations, where each species
other than the universal common ancestor has only one parent in the
evolutionary history (its parent in the tree).  It is clearly
understood now that other speciation events, which cannot be properly
represented by means of single arcs in a tree, play an important role in
evolution~\cite{doolittle:99}.  These are \emph{reticulation events}
like genetic recombinations, hybridizations, or lateral gene
transfers, where a species is the result of the interaction between
two parent species.  This has lead to the introduction of
\emph{networks} as models of phylogenetic histories that capture these
reticulation events side by side with the classical mutations.

Contrary to what happens in the phylogenetic trees literature, where
the basic concepts are well established, there is still some
lack of consensus about terminology in the field of `phylogenetic networks'
\cite{huson.ea:06}.  Following \cite{semple:07}, in this paper we use
the term \emph{hybridization network} to denote the most general model
of reticulated evolutionary history: a directed acyclic graph with
only one root, which represents the last universal common ancestor and which we assume, thus,  of
out-degree greater than 1.  In such a graph, nodes represent species
(or any other taxonomy unit) and arcs represent direct descendance.  A
node with only one parent (a \emph{tree node}) represents a species
derived from its parent species through mutation, and a node with more
than one parent (a \emph{hybrid node}) represents a species derived
from its parent species through some reticulation event.

The interest in representing phylogenetic histories by means of
networks has lead to many hybridization network reconstruction
methods
\cite{gusfield.ea:fine.structure:2004,gusfield.ea:galled.trees:2004,huson:RECOMB07,moret.ea:2004,nakhleh.ea:2003,nakhleh.ea:2005,song.ea:2005,wang.ea:2001}.
These reconstruction methods often search for hybridization networks
satisfying some restriction, like for instance to have as few hybrid
nodes as possible (in \emph{perfect phylogenies}), or to have their
reticulation cycles satisfying some structural restrictions (in
\emph{galled trees} and \emph{networks}).  Two popular and
biologically meaningful such restrictions are the \emph{time
consistency} \cite{baroni.ea:sb06,moret.ea:2004}, the possibility of
assigning times to the nodes in such a way that tree children exist
later than their parents and hybrid children coexist with their
parents (and in particular, the parents of a hybrid species
coexist in time), and the \emph{tree child condition}
\cite{cardona.ea:07b,willson:07b}, that imposes that every non-extant
species has some descendant through mutation alone.  The
\emph{tree-child time consistent} (TCTC) hybridization networks
 have been recently proposed as the class
where meaningful phylogenetic networks should be
searched~\cite{willson:07}.  Recent simulations (reported in
\cite{valiente:08}) have shown that over  64\% of 4132
hybridization networks obtained using the coalescent model
\cite{coalescent} under various population and sample sizes, sequence
lengths, and recombination rates, were TCTC: the percentage
of TCTC networks among the time consistent networks obtained in these
simulations increases to 92.8\%.

The increase in the number of available hybridization networks
reconstruction algorithms has made it necessary the introduction of
methods for the comparison of hybridization networks to be used in
their assessment, for instance by comparing inferred networks with
either simulated or true phylogenetic histories, and by evaluating the
robustness of reconstruction algorithms when
adding new species~\cite{moret.ea:2004,woolley.ea:2008}.  This has
lead recently to the definition of several metrics defined on different
classes of hybridization networks
\cite{cardona.ea:sbTSTC:2008,comparison1,comparison2,cardona.ea:07a,cardona.ea:07b,moret.ea:2004,nakhleh.ea:03psb}.
All these metrics generalize in one way or another well-known metrics
for phylogenetic trees.

Some of the most popular metrics for phylogenetic trees are based on
the comparison of the vectors of path lengths between leaves
\cite{bluis.ea:2003,farris:sz69,farris:sz73,phipps:sz71,steelpenny:sb93,willcliff:taxon71}.
Introduced in the early seventies, with different names depending on
the author and the way these vectors are compared, they are globally
known as \emph{nodal distances}.  Actually, these vectors of paths
lengths only \emph{separate} (in the sense that equal vectors means isomorphic
trees), on the one hand, unrooted phylogenetic trees, and, on the
other hand, fully resolved rooted phylogenetic trees, and therefore, as far as rooted phylogenetic trees goes,
the distances defined through these vectors are only true metrics for  fully resolved  trees.
These metrics were recently generalized to arbitrary rooted
phylogenetic trees \cite{cardona.ea:08a}.  In this generalization,
each path length between two leaves was replaced by the pair of
distances from the leaves to their least common ancestor,
and the vector of paths lengths between leaves was replaced by the
\emph{splitted path lengths} matrix obtained in this way.  These
matrices separate arbitrary rooted phylogenetic trees, and therefore
the \emph{splitted nodal distances} defined through them are indeed
metrics on the space of rooted phylogenetic trees.

In a recent paper \cite{comparison2} we have generalized these
splitted nodal distances to TCTC hybridization networks with all their
hybrid nodes of out-degree 1.
The goal of this paper is to go one step beyond in two directions: to generalize to the TCTC hybridization
networks setting both the classical nodal distances for fully resolved
rooted phylogenetic trees and the new splitted nodal distances for
rooted phylogenetic trees.  Thus, on the one hand, we introduce a
suitable generalization of the vectors of path lengths between leaves
that separate \emph{fully resolved} (where every non extant species
has exactly two children, and every reticulation event involves
exactly two parent species) TCTC hybridization networks.  On the other
hand, we show that if we split these new path lengths in a suitable
way and we add a bit of extra information, the resulting vectors
separate arbitrary TCTC hybridization networks. Then, the vectors 
obtained in both cases can be used to define metrics that generalize, 
respectively, the nodal distances for fully resolved rooted 
phylogenetic trees and the splitted nodal distances for rooted 
phylogenetic trees. 

The key ingredient in the proofs of our main results is the use of
sets of suitable reductions that applied to  TCTC hybridization
networks with $n$ leaves and $m$ internal nodes produce TCTC
hybridization networks with either $n-1$ leaves or with $n$ leaves
and $m-1$ internal nodes (in the fully resolved case, the reductions we use are
specifically tailored to make them remove always one leaf).  Similar
sets of reductions have already been introduced for TCTC hybridization
networks with all their hybrid nodes of out-degree 1
\cite{comparison2} and for \emph{tree sibling} (where every hybrid
node has a tree sibling) time consistent hybridization networks with
all their hybrid nodes of in-degree 2 and out-degree 1 \cite{cardona.ea:sbTSTC:2008}, and they have 
been proved useful in those contexts not only to establish properties of 
the corresponding networks by algebraic induction, but also to 
generate in a recursive way all networks of the type under 
consideration. We hope that the reductions introduced in this paper 
will find similar applications elsewhere.

\section{Preliminaries}
\label{sec:prel}

\subsection{Notations on DAGs}
\label{subsec:dags}

Let $N=(V,E)$ denote in this subsection a directed acyclic (non-empty, finite) graph; a
\emph{DAG}, for short.  A node $v\in V$ is a \emph{child} of $u\in V$ if $(u,v)\in E$;
we also say in this case that $u$ is a \emph{parent} of $v$.  All children of the
same parent are said to be \emph{sibling} of each other.  

Given a node $v\in V$, its \emph{in-degree} $\deg_{in}(v)$ and its
\emph{out-degree} $\deg_{out}(v)$ are, respectively, the number of its
parents and the number of its children.  The \emph{type} of $v$ is the
ordered pair $(\deg_{in}(v),\deg_{out}(v))$.  A node $v$ is a \emph{root} when $\deg_{in}(v)=0$, a \emph{tree
node} when $\deg_{in}(v)\leq 1$, a \emph{hybrid node} when
$\deg_{in}(v)\geq 2$, a  \emph{leaf} when $\deg_{out}(v)=0$, 
\emph{internal} when
$\deg_{out}(v)\geq 1$, and  \emph{elementary} when $\deg_{in}(v)\leq 1$ and $\deg_{out}(v)= 1$. A \emph{tree arc} (respectively, a
\emph{hybridization arc}) is an arc with head a tree node
(respectively, a hybrid node). 
 A DAG $N$ is
\emph{rooted} when it has only one {root}.


%

A \emph{path} on $N$ is a sequence of nodes $(v_0,v_1,\dots,v_k)$ such
that $(v_{i-1},v_{i})\in E$ for all $i=1,\dots,k$.  We call $v_0$ the \emph{origin} of the
path, $v_{1},\ldots,v_{k-1}$ its \emph{intermediate nodes}, and
$v_{k}$ its \emph{end}.  The \emph{length} of the path
$(v_0,v_1,\dots,v_k)$ is $k$, and it is \emph{non-trivial} if $k\ge
1$.  The \emph{acyclicity} of $N$ means that it does not contain \emph{cycles}: non-trivial paths from a node to itself.

 We denote by $u\pathgr v$ any path
with origin $u$ and end $v$. Whenever
there exists a path $u\pathgr v$, we shall say that $v$ is a \emph{descendant} of $u$ and
also that $u$ is an \emph{ancestor} of $v$. When the path $u\pathgr v$ is non-trivial, we say that $v$ is a  \emph{proper} descendant of $u$ and
that $u$ is an \emph{proper}  {ancestor} of $v$.
The \emph{distance} from a node $u$ to a descendant $v$ is the length of a shortest
path from $u$ to $v$. 

The \emph{height} $h(v)$ of a node $v$ in a DAG $N$ is the largest
length of a path from $v$ to a leaf.  The absence of cycles implies
that the nodes of a DAG can be stratified by means of their heights:
the nodes of height 0 are the leaves, the nodes of height 1 are those
nodes all whose children are leaves, the nodes of height 2 are those
nodes all whose children are leaves and nodes of height 1, and so on.
If a node has height $m>0$, then all its children have height smaller
than $m$, and at least one of them has height exactly $m-1$.

A node $v$ of $N$ is a \emph{strict descendant} of a node $u$ if it is
a descendant of it, and every path from a root of $N$ to $v$ contains
the node $u$: in particular, we understand every node as a strict
descendant of itself.  When $v$ is a strict descendant of $u$, we
also say that $u$ is a \emph{strict ancestor} of $v$.

The following lemma will be used several times in this paper.

\begin{lemma}
\label{lem:str-int}
Let $u$ be a proper strict ancestor of a node $v$ in a DAG $N$, and let $w$ be an intermediate node in a path $u\pathgr v$. Then, $u$ is also a strict ancestor of $w$.
\end{lemma}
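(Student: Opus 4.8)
The plan is to argue by contradiction, exploiting the defining property of strict descendants together with the acyclicity of $N$. Fix a path $u\pathgr v$ in which $w$ occurs as an intermediate node, say this path is $(x_0,x_1,\dots,x_k)$ with $x_0=u$, $x_k=v$, and $x_j=w$ for some index $j$ with $1\le j\le k-1$. Two subpaths do all the work: the initial segment $(x_0,\dots,x_j)$ shows immediately that $w$ is a descendant of $u$, which is one of the two things we must check; and the final segment $Q=(x_j,x_{j+1},\dots,x_k)$ is a path $w\pathgr v$ that we will splice onto paths reaching $w$.

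For the second, harder requirement --- that every path from a root of $N$ to $w$ passes through $u$ --- suppose, aiming at a contradiction, that $P=(r,\dots,w)$ is a path from a root $r$ to $w$ with $u$ not among its nodes. Concatenating $P$ with $Q$ yields a path from the root $r$ to $v$. Since $v$ is a proper strict descendant of $u$, this concatenated path must contain $u$; as $u$ does not lie on $P$, it must lie on $Q$, i.e.\ $u=x_i$ for some $i$ with $j\le i\le k$. But $x_j=w$ is an intermediate node of $u\pathgr v$, hence $w\ne u$, so in fact $i\ge j+1\ge 2$. Then $(x_0,x_1,\dots,x_i)$ is a non-trivial path from $u=x_0$ back to $u=x_i$, that is, a cycle --- contradicting the acyclicity of $N$.

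Hence no path from a root to $w$ can avoid $u$, and since we already observed that $w$ is a descendant of $u$, we conclude that $u$ is a strict ancestor of $w$, as claimed. The only point that requires care is the last step: one must use that $w$ is a genuine intermediate node (so $w\ne u$) to push the index $i$ strictly past $0$ and thereby produce an honest cycle; without this observation one could only conclude $u=w$, which is the trivial case anyway. I expect this to be the sole obstacle --- everything else is routine concatenation of paths.
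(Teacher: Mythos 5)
Your proof is correct and follows essentially the same route as the paper: concatenate an arbitrary root-to-$w$ path with the piece $w\pathgr v$ of the given path, invoke strictness at $v$, and use acyclicity to rule out $u$ lying on the $w\pathgr v$ piece. The only difference is cosmetic --- you phrase it as a contradiction and spell out the acyclicity argument (which the paper asserts implicitly), and your extra care about $w\neq u$ is not even needed, since $u=x_i$ with $i\ge j\ge 1$ already yields a non-trivial cycle.
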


\begin{proof}
Let $r\pathgr w$ be a path from a root of $N$ to $w$, and concatenate to it the piece $w\pathgr v$ of the path $u\pathgr v$ under consideration. This yields a path $r\pathgr v$ that must contain $u$. Since $u$ does not appear in the piece $w\pathgr v$, we conclude that it is contained in the path $r\pathgr w$.
This proves that every path  from a root of $N$ to $w$ contains
the node $u$.
\end{proof}

For
every pair of nodes $u,v$ of $N$:
\begin{itemize}
\item $CSA(u,v)$ is the set of all common ancestors of $u$ and $v$
that are strict ancestors of at least one of them;

\item the \emph{least common semi-strict ancestor} (\emph{LCSA}) of
$u$ and $v$, in symbols $[u,v]$, is the node in $CSA(u,v)$ of minimum height.
\end{itemize}
The LCSA of two nodes $u,v$ in a phylogenetic network is well defined and it is unique: it is actually
the unique element of $CSA(u,v)$ that is a descendant of all elements of this set
\cite{ comparison1}.  The following result on LCSAs will be used often. It is the generalization to DAGs of Lemma 6 in
\cite{comparison2}, and we include its easy proof for the sake of completeness.

\begin{lemma}
\label{lem:LCSA}
Let $N$ be a DAG and let $u,v$ be a pair of nodes of
$N$ such that $v$ is not a descendant of $u$.  If $u$ is a tree node
with parent $u'$, then $[u,v]=[u',v]$.
\end{lemma}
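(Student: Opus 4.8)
The plan is to show that $CSA(u,v)=CSA(u',v)$, from which the equality of heights (and hence of the LCSAs, which are the minimum-height elements) follows immediately. The key observation is that since $v$ is not a descendant of $u$, and $u$ is a tree node with unique parent $u'$, every common ancestor of $u$ and $v$ is also a common ancestor of $u'$ and $v$, and conversely: $u'$ is itself an ancestor of $u$ (via the arc $(u',u)$), and any ancestor of $u'$ is an ancestor of $u$; on the other hand any ancestor of $u$ other than $u$ itself must, since $u$ has the single parent $u'$, be an ancestor of $u'$ as well — and $u$ cannot be a common ancestor of $u$ and $v$ because $v$ is not a descendant of $u$. So the underlying sets of common ancestors of the two pairs coincide.

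Next I would check that the ``strict ancestor'' qualifier is preserved, i.e. that a common ancestor $w$ is in $CSA(u,v)$ if and only if it is in $CSA(u',v)$. Here one uses that $u$ is a tree node: a node $w$ is a strict ancestor of $u$ precisely when it is a strict ancestor of $u'$, because every path from a root to $u$ factors through the unique parent $u'$ (so the paths to $u$ are exactly the paths to $u'$ extended by the arc $(u',u)$). Thus ``$w$ is a strict ancestor of $u$ or of $v$'' is equivalent to ``$w$ is a strict ancestor of $u'$ or of $v$''. Combining this with the previous paragraph gives $CSA(u,v)=CSA(u',v)$ as sets, and since the LCSA is defined as the element of minimum height in this set, we get $[u,v]=[u',v]$.

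The only point that needs a little care — and what I would flag as the main (minor) obstacle — is making sure the argument does not secretly use that $u\ne u'$ as a node versus $u$ being a non-root: one must invoke the hypothesis that $v$ is not a descendant of $u$ at exactly the right place, namely to rule out $u\in CSA(u,v)$, since otherwise $u$ could be the minimum-height element and the two sets would genuinely differ. Once that is pinned down, the proof is a short set-equality argument with no case analysis beyond the two directions of inclusion.
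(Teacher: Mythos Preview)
Your proposal is correct and follows essentially the same approach as the paper: both prove $CSA(u,v)=CSA(u',v)$, using that the unique parent $u'$ of the tree node $u$ lies on every root-to-$u$ path, and invoking the hypothesis that $v$ is not a descendant of $u$ precisely to exclude $u$ itself from the common ancestors. The only cosmetic difference is that the paper handles the two inclusions directly (and cites Lemma~\ref{lem:str-int} for the ``strict ancestor of $u$ implies strict ancestor of $u'$'' step), whereas you first match up the common ancestors and then argue the strict-ancestor equivalence via the bijection between root-to-$u$ and root-to-$u'$ paths; these are the same idea.
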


\begin{proof}
We shall prove that $CSA(u,v)=CSA(u',v)$.

Let $x\in CSA(u,v)$.
Since $u$ is not an ancestor of $v$, $x\neq u$ and hence any path $x\pathgr u$ is
non-trivial.  Then, since $u'$ is the only parent of $u$, it appears
in this path, and therefore $x$ is also an ancestor of $u'$.  This
shows that $x$ is a common ancestor of $u'$ and $v$.
Now, if $x$ is a strict ancestor of $v$, we already conclude that $x\in CSA(u',v)$, while
 if $x$ is a strict ancestor of $u$,  it will be also a strict ancestor of $u'$ 
 by Lemma \ref{lem:str-int},
and hence $x\in CSA(u',v)$, too.  This proves that $CSA(u,v)\subseteq CSA(u',v)$

Conversely, let $x\in CSA(u',v)$.
Since $u'$ is the parent of $u$, it is clear that $x$ is a common ancestor of $u$
and $v$, too.  If $x$ is a strict ancestor of $v$, this implies that $x\in CSA(u,v)$. If
 $x$ is a strict ancestor of $u'$, then it is
also a strict ancestor of $u$ (every path
$r\pathgr u$ must contain the only parent $u'$ of $u$, and then $x$ will belong to the piece $r\pathgr u'$ of
the path $r\pathgr u$), and therefore $x\in CSA(u,v)$, too. This finishes the proof of the equality.
\end{proof}

Let $S$ be any non-empty finite set of \emph{labels}.  We say that the DAG $N$
is \emph{labeled in $S$}, or that it is an \emph{$S$-DAG}, for short,
when its leaves are bijectively labeled by elements of $S$.  
Although in real applications  the set $S$
would correspond to a given set of extant  taxa, for the sake of simplicity we  shall assume henceforth that $S=\{1,\ldots,n\}$, with $n=|S|$.
We shall always identify, usually without any further notice,
each leaf of an $S$-DAG with its label in $S$.

Two $S$-DAGs $N,N'$ are \emph{isomorphic}, in symbols $N\cong N'$, when they are isomorphic as directed graphs and the isomorphism maps each leaf in $N$ to the leaf with the same label in $N'$.

\subsection{Path lengths in phylogenetic trees}
\label{sec:trees}

A \emph{phylogenetic tree} on a set $S$ of taxa is a rooted $S$-DAG
without hybrid nodes and such that its root is non-elementary.  A phylogenetic tree is \emph{fully resolved}, or
\emph{binary}, when every internal node has out-degree 2.  Since all ancestors of a node in a phylogenetic tree are strict, the LCSA
$[u,v]$ of two nodes $u,v$ in a phylogenetic tree is simply their
\emph{least common ancestor}: the unique common ancestor of them
that is a descendant of every other common ancestor of them.

Let $T$ be a phylogenetic tree on the set $S=\{1,\ldots,n\}$.  For
every $i,j\in S$, we shall denote  by
$\ell_T(i,j)$ and $\ell_T(j,i)$ the lengths of the paths
$[i,j]\pathgr i$ and $[i,j]\pathgr j$, respectively. In particular, 
$\ell_T(i,i)=0$  for every $i=1,\ldots,n$.

\begin{definition}
Let $T$ be a phylogenetic tree on the set $S=\{1,\ldots,n\}$.  The
\emph{path length} between two leaves $i$ and $j$ is
$$
L_T(i,j)=\ell_T(i,j)+\ell_T(j,i).
$$
The \emph{path lengths  vector} of $T$ is  the  vector 
$$
L(T)=\big(L_T(i,j)\big)_{1\leq i<j\leq n}\in \NN^{n(n-1)/2}
$$
with its entries ordered lexicographically in $(i,j)$.
\end{definition}

The following result is a special case of Prop.~2 in
\cite{cardona.ea:08a}.
\begin{proposition}
\label{prop:nod-bintree}
Two fully resolved phylogenetic trees on the same set $S$ of taxa are
isomorphic if, and only if, they have the same path lengths vectors.
\qed
\end{proposition}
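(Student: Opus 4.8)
The plan is to argue by induction on $n=|S|$; the implication "isomorphic $\Rightarrow$ equal path lengths vectors" is immediate from the definitions, so only the converse needs work. For $n\le 2$ there is, up to isomorphism, exactly one fully resolved phylogenetic tree on $S$, which settles the base case. So I would assume $n\ge 3$ and that the statement holds for $n-1$ taxa.

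The first step is to recover a cherry (a pair of leaves sharing a parent) purely from the vector. For any two distinct leaves $i,j$ of a fully resolved tree $T$ one has $\ell_T(i,j)\ge 1$ and $\ell_T(j,i)\ge 1$, so $L_T(i,j)\ge 2$, with equality exactly when $[i,j]$ is the common parent of $i$ and $j$; and every fully resolved tree with at least two leaves has a cherry (e.g.\ the parent of a leaf at maximum distance from the root, whose sibling must then also be a leaf). Hence the minimum entry of $L(T)$ equals $2$ and is attained precisely at the cherries of $T$. Choosing a pair $\{i,j\}$ realizing this minimum in the common vector $L(T)=L(T')$, we get that $\{i,j\}$ is a cherry of both $T$ and $T'$; let $p$ be the parent of $i$ and $j$ in $T$.

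Next I would carry out the obvious reduction: delete leaf $j$ together with its incoming arc, and then suppress the resulting elementary node $p$ (which does have a parent, since $n\ge 3$ forces $p$ to be non-root), obtaining a fully resolved phylogenetic tree $T_1$ on $S\setminus\{j\}$, and likewise $T_1'$ from $T'$. The key point is to express $L(T_1)$ in terms of $L(T)$. For two leaves $k,l$ both different from $i$, neither $[k,l]$ nor the two paths from it down to $k$ and $l$ meet $p$ (as $p$ is an ancestor of neither $k$ nor $l$), so $L_{T_1}(k,l)=L_T(k,l)$. For a leaf $k\ne i$, Lemma~\ref{lem:LCSA} applied to the tree node $i$ with parent $p$ gives $[i,k]_T=[p,k]_T$; this node is still the least common ancestor of $i$ and $k$ in $T_1$, the path from it down to $k$ is untouched, and the path down to $i$ loses exactly the vertex $p$ under the suppression, whence $L_{T_1}(i,k)=L_T(i,k)-1$. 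Thus $L(T_1)$ is obtained from $L(T)$ by deleting the entries involving $j$ and subtracting $1$ from the entries involving $i$ — a recipe that depends only on the vector and on the labels $i,j$. Since $L(T)=L(T')$, this gives $L(T_1)=L(T_1')$, and the induction hypothesis yields an isomorphism $T_1\cong T_1'$ respecting leaf labels.

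Finally I would undo the reduction: $T$ is recovered from $T_1$ by subdividing the arc entering the leaf $i$ (which is non-root, since $T_1$ still has at least two leaves) with a new node and hanging a new leaf labelled $j$ from it, and the same canonical construction applied to $T_1'$ produces $T'$. Because the isomorphism $T_1\cong T_1'$ fixes every leaf label, in particular sending $i$ to $i$, it extends uniquely to an isomorphism $T\cong T'$. I expect the only delicate point to be the bookkeeping in the path-length computation for the pairs that involve $i$ — specifically, verifying that the relevant least common ancestor is not disturbed by the suppression of $p$ — and this is precisely the situation Lemma~\ref{lem:LCSA} was set up to handle, so it should go through without trouble.
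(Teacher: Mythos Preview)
Your argument is correct. Note, however, that the paper does not actually prove this proposition: it is quoted as a special case of a result in \cite{cardona.ea:08a} and marked with a \qed. So there is no ``paper's own proof'' to compare against here.

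That said, your induction-via-cherry-reduction is exactly the specialization to binary trees of the machinery the paper builds later for networks. Your reduction (delete one leaf of a cherry and suppress the resulting degree-one internal node) is the $R(i;j)$ reduction of \S\ref{sec:pl-tctc}, your observation that $L_T(i,j)=2$ characterizes cherries is the tree case of Lemma~\ref{lem:qb-2}, and your path-length update formula $L_{T_1}(i,k)=L_T(i,k)-1$, $L_{T_1}(k,l)=L_T(k,l)$ is the tree case of Lemma~\ref{lem:qb-R}.(2). The inductive wrap-up is then the tree case of Proposition~\ref{prop:D-bin-net-H}. So while the paper outsources the proof of this particular statement, your write-up is precisely in the spirit of what the paper does in the more general setting, and it goes through cleanly.
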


The thesis in the last result is false for arbitrary
phylogenetic trees.  Consider for instance the phylogenetic trees with
Newick strings
\texttt{(1,2,(3,4));} and
\texttt{((1,2),3,4);} depicted\footnote{Henceforth, in graphical representations of DAGs, hybrid nodes are represented by
squares, tree nodes by circles, and indeterminate nodes, that is,
nodes that can be of tree or hybrid type, by squares with rounded corners.}
 in Fig.~\ref{fig:contr1}.  It is straightforward to check that they
have the same path lengths vectors, but they are not isomorphic.

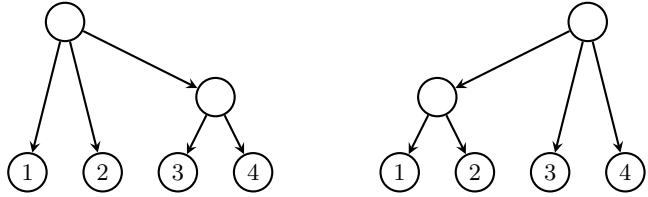
\begin{figure}[htb]
\begin{center}
            \begin{tikzpicture}[thick,>=stealth,scale=0.5]
              \draw(0,0) node[tre] (1) {}; \etq 1
              \draw(2,0) node[tre] (2) {}; \etq 2
              \draw(4,0) node[tre] (3) {}; \etq 3
              \draw(6,0) node[tre] (4) {}; \etq 4
              \draw(5,2) node[tre] (a) {};   
\draw(1,4) node[tre] (r) {};        
            \draw[->] (r)--(1);
            \draw[->] (r)--(2);
                        \draw[->] (r)--(a);
            \draw[->] (a)--(3);
            \draw[->] (a)--(4);
            \end{tikzpicture}
  \qquad\qquad
            \begin{tikzpicture}[thick,>=stealth,scale=0.5]
              \draw(0,0) node[tre] (1) {}; \etq 1
              \draw(2,0) node[tre] (2) {}; \etq 2
              \draw(4,0) node[tre] (3) {}; \etq 3
              \draw(6,0) node[tre] (4) {}; \etq 4
              \draw(5,4) node[tre] (a) {};    
\draw(1,2) node[tre] (r) {};       
            \draw[->] (r)--(1);
            \draw[->] (r)--(2);
                        \draw[->] (a)--(r);
            \draw[->] (a)--(3);
            \draw[->] (a)--(4);
            \end{tikzpicture}
\end{center}
\caption{\label{fig:contr1} 
Two non-isomorphic phylogenetic trees with the same path
lengths vectors.}
\end{figure}

This problem was overcome in \cite{cardona.ea:08a} by replacing the path lengths vectors by the following matrices of distances.

\begin{definition}
\label{def:split-tree}
The \emph{splitted path lengths matrix} of $T$ is the $n\times n$ square
matrix  
$$
\ell(T)=\big(\ell_T(i,j)\big)_{i=1,\ldots,n\atop j=1,\ldots,n}\in \mathcal{M}_n(\NN).
$$
\end{definition}

Now, the following result is (again, a special case of) Theorem 11 in
\cite{cardona.ea:08a}.

\begin{proposition}
\label{prop:nod-arbtree}
Two phylogenetic trees on the same set $S$ of taxa are isomorphic if,
and only if, they have the same splitted path lengths matrices.  \qed
\end{proposition}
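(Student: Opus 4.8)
The plan is to prove the nontrivial implication, that two phylogenetic trees $T,T'$ on $S=\{1,\dots,n\}$ with $\ell(T)=\ell(T')$ are isomorphic; the reverse implication is immediate, since an isomorphism of $S$-DAGs preserves the ancestor relation and the heights of nodes, hence it sends each $[i,j]$ in $T$ to $[i,j]$ in $T'$ and sends the (unique, as $T$ is a tree) path $[i,j]\pathgr i$ in $T$ to $[i,j]\pathgr i$ in $T'$, thereby preserving $\ell_T(i,j)$. For the converse I would argue by induction on $n$, the base cases $n\le 2$ being straightforward: when $n=1$ the matrix $[0]$ forces the one-node tree, and when $n=2$ the pair $(\ell_T(1,2),\ell_T(2,1))$ forces the tree consisting of a path from the root to $[1,2]$ that then splits into a chain of length $\ell_T(1,2)$ ending in leaf $1$ and a chain of length $\ell_T(2,1)$ ending in leaf $2$.

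The inductive step rests on two observations about the information carried by $\ell(T)$. First, two leaves $i,j$ are siblings (have a common parent) if and only if $\ell_T(i,j)=\ell_T(j,i)=1$: indeed $\ell_T(i,j)=1$ says that $i$ is a child of $[i,j]$ and $\ell_T(j,i)=1$ says that $j$ is a child of $[i,j]$, while conversely the common parent of two leaves is their LCSA. Second, the depth (distance from the root) of a leaf $i$ is $\mathrm{depth}_T(i)=\max_{j\neq i}\ell_T(i,j)$: since the root $r$ has out-degree at least $2$, there is a leaf $j$ below a child of $r$ other than the one above $i$, and for every such $j$ one has $[i,j]=r$, so the maximum is attained and equals $d(r,i)$. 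In particular $T$ and $T'$ have the same sibling relation and the same leaf depths, and a leaf $i$ of maximum depth has the property that \emph{all} children of its parent $p$ are leaves, for otherwise $p$ would have an internal child with a leaf descendant strictly deeper than $i$.

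I would then perform a leaf reduction. Choose, canonically from the matrix (say, the smallest label), a leaf $i$ of maximum depth; let $p$ be its parent and $L$ the set of children of $p$, all leaves. Form $T_1$ by deleting $i$ and then repeatedly suppressing elementary nodes, cascading upwards, and, should the root itself become elementary, suppressing down the resulting trunk until a node of out-degree at least $2$ is reached. One checks that $T_1$ is again a phylogenetic tree on $S\setminus\{i\}$ and --- this is where the work lies --- that $\ell(T_1)$ is explicitly computable from $\ell(T)$: it is the submatrix of $\ell(T)$ indexed by $S\setminus\{i\}$, except that, when a node had to be suppressed, a block of entries governed by the cluster of that node is uniformly decreased by $1$; and that cluster is itself recoverable from $\ell(T)$ (for instance, when $L=\{i\}$ and the pendant chain above $i$ attaches at a node $s$, the leaves below the non-pendant part of $s$ are exactly those $j$ minimising $\ell_T(i,j)$, and whether $s$ has out-degree $2$ or more can likewise be read off from the entries $\ell_T(j,k)$ for such $j,k$). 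Granting this, the same reduction applied to $T'$ yields $T_1'$ with $\ell(T_1')=\ell(T_1)$; the induction hypothesis gives $T_1\cong T_1'$; and since $T$ and $T'$ are obtained from $T_1$ and $T_1'$ by the inverse operation --- reinserting the chain down to $i$ at the node whose cluster is the prescribed set --- the isomorphism extends to $T\cong T'$.

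The main obstacle is not conceptual but organisational: the reduction splits into the cases $|L|\ge 3$ (where $T_1=T-i$ and $\ell(T_1)$ is literally a submatrix), $|L|=2$ (where $p$ becomes elementary and must be suppressed, shifting the entries of the row indexed by the surviving sibling of $i$), and $|L|=1$ (a whole pendant chain is removed, possibly forcing the suppression of nodes up to and including the old root and then down a new trunk), and in each case one must check that the result is a legitimate phylogenetic tree, that no suppressed node is ever the LCSA of a pair of surviving leaves --- so that the surviving LCSAs, hence the reconstruction, transfer faithfully --- and that the passage $\ell(T)\mapsto\ell(T_1)$ and its inverse are given by explicit, matrix-readable formulas. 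Carrying out this bookkeeping is the bulk of the proof.
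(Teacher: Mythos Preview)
The paper does not prove this proposition at all: it is quoted as (a special case of) Theorem~11 in \cite{cardona.ea:08a} and closed with a \qed. What the paper \emph{does} prove is the generalisation to TCTC-networks, Theorem~\ref{prop:D-arb-net}, and that proof, specialised to trees (where $h_N\equiv 0$ and the $H$ reduction never applies), gives a proof of the present proposition. That argument is a \emph{double} induction on the number $n$ of leaves and the number $m$ of internal nodes, driven by two elementary reductions: $T(i;j)$ removes a tree leaf $i$ that has a sibling tree leaf $j$ (detectable by $\ell_T(i,j)=\ell_T(j,i)=1$), and $U(i)$ removes a leaf $i$ that is its parent's only child and relabels the parent with $i$ (detectable by $\ell_T(i,j)\ge 2$ for every $j$). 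Each reduction has a one-line effect on $\ell(T)$ (submatrix, resp.\ subtract $1$ from one row), one of them always applies when $n\ge 3$, and the inverse expansion reconstructs $T$ from $T_1$.

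Your plan is sound in spirit --- it is the same induction-by-reduction idea --- but by inducting only on $n$ you are forced to collapse an entire pendant chain and then suppress the branching node above it in a single step, and this is where the proposal becomes muddled. First, a terminological slip: in your case $|L|=1$, deleting $i$ does not create elementary nodes ``cascading upwards'', it creates an \emph{unlabeled leaf} $p$, whose removal creates another unlabeled leaf, and so on; only once you reach the first branching ancestor $q$ do you get a (single) newly elementary node. Second, and more seriously, the paper's definition of phylogenetic tree allows pre-existing elementary internal nodes (only the root is required to be non-elementary), so ``repeatedly suppressing elementary nodes'' is over-aggressive: taken literally it would also suppress elementary nodes unrelated to $i$, and then $T_1$ is not determined by $\ell(T)$ alone. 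The same issue bites your base case $n=2$: since the root is non-elementary it must have two children and hence coincide with $[1,2]$, so there is no ``path from the root to $[1,2]$'' as you write, only the two (possibly long) chains down to $1$ and $2$. The fix is exactly what the paper does: rather than trying to swallow the whole pendant chain at once, peel it one node at a time via $U(i)$, reducing $m$ instead of $n$, and reserve the $n$-reducing step $T(i;j)$ for when $i$ has a sibling leaf. This eliminates all three of your cases and all talk of suppression.
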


\section{TCTC networks}
\label{sec:tctc}

While the basic notion of phylogenetic tree is well established,
the notion of phylogenetic network is much less well defined \cite{huson.ea:06}.
The networks we consider in this paper are the (almost) most general possible ones:
 rooted $S$-DAGs with non-elementary root. Following  \cite{semple:07}, we shall call them
 \emph{hybridization networks}.  In these
hybridization networks, every node represents a different species, and the
arcs represent direct descendance, be it through mutation (tree arcs)
or through some reticulation event (hybridization arcs).

It is usual to forbid elementary nodes in hybridization networks  \cite{semple:07},
mainly because they cannot be reconstructed.  We allow them here for
two reasons.  On the one hand, because allowing them simplifies
considerably some proofs, as it will be hopefully clear in Section
\ref{sec:split}.  On the other hand, because, as Moret \textsl{et al}
point out \cite[\S 4.3]{moret.ea:2004}, they can be useful both from
the biological point of view, to include auto-polyploidy in the model,
as well as from the formal point of view, to make a phylogeny
satisfy other constraints, like for instance time consistency (see
below) or the impossibility of successive hybridizations. Of course, our main results apply without any modification to hybridization networks without elementary nodes as well.

Following \cite{comparison1},  by a \emph{phylogenetic network} on a set $S$ of taxa we understand
a  rooted $S$-DAG $N$ with non-elementary root
 where every hybrid node
has exactly one child, and it is a tree node. 
Although, from the mathematical point of view, phylogenetic networks are a special case of hybridization networks, from the point of view of
modelling they represent in a different way evolutive histories with
reticulation events:  in a phylogenetic
network, every tree node represents a different species and every hybrid
node, a reticulation event that gives rise to the species represented
by its only child.

A hybridization network $N=(V,E)$ is \emph{time consistent} when it
allows a \emph{temporal representation} \cite{baroni.ea:sb06}: a
mapping
$$
\tau:V\to \NN
$$
such that $\tau(u)<\tau(v)$ for every tree arc $(u,v)$ and
$\tau(u)=\tau(v)$ for every hybridization arc $(u,v)$.  Such a temporal
representation can be understood as an assignment of times to nodes
that strictly increases from parents to tree children and so that the
parents of each hybrid node coexist in time.

\begin{remark}
\label{rem:tico}
Let $N=(V,E)$ be a time consistent hybridization network, and let
$N_1=(V_1,E_1)$ be a hybridization network obtained by removing from
$N$ some nodes and all their descendants (as well as all arcs pointing
to any removed node).  Then $N_1$ is still time consistent, because
the restriction of any temporal representation $\tau:V\to \NN$ of $N$
to $V_1$ yields a temporal representation of $N_1$.
\end{remark}

A hybridization network satisfies the \emph{tree-child condition}, or
it is \emph{tree-child}, when every internal node has at least one
child that is a tree node (a \emph{tree child}).  So, tree-child
hybridization networks can be understood as general models of
reticulate evolution where every species other that the extant ones,
represented by the leaves, has some descendant through mutation.
Tree-child hybridization networks include galled
trees~\cite{gusfield.ea:fine.structure:2004,gusfield.ea:galled.trees:2004}
as a particular case \cite{cardona.ea:07b}.

A \emph{tree path} in a tree-child hybridization network is a non-trivial path such that
its end and all its intermediate nodes are tree nodes.  A node $v$ is
a \emph{tree descendant} of a node $u$ when there exists a tree path
from $u$ to $v$.  By~\cite[Lem.~2]{cardona.ea:07a}, every internal
node $u$ of a tree-child hybridization network has some tree descendant
leaf, and by~\cite[Cor.  4]{cardona.ea:07a} every tree descendant $v$
of $u$ is a strict descendant of $u$ and the path $u\pathgr v$ is
unique.

To simplify the notations, we shall call \emph{TCTC-networks} the
tree-child time consistent hybridization networks: these include the
tree-child time consistent phylogenetic networks, which were the
objects dubbed TCTC-networks in \cite{comparison1,comparison2}.  
Every phylogenetic tree is also a TCTC-network.
Let $\TCTC_{n}$ denote the class of all TCTC-networks on
$S=\{{1},\ldots,{n}\}$.

We prove now some basic properties of TCTC-networks that will be used later.

\begin{lemma}
\label{lem:hyb-nostr}
Let $u$ be a node of a TCTC-network $N$, and let $v$ be a child of $u$.
The node $v$ is a tree node if, and only if, it is a strict descendant of $u$.
\end{lemma}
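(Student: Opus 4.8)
The plan is to prove the two implications separately. The ``only if'' direction is straightforward: if $v$ is a tree node and a child of $u$, then $\deg_{in}(v)\geq 1$ (it is a child) and $\deg_{in}(v)\leq 1$ (it is a tree node), so $\deg_{in}(v)=1$ and $u$ is the unique parent of $v$. Hence the last arc of every path from the root of $N$ to $v$ is $(u,v)$, so every such path contains $u$, and therefore $v$ is a strict descendant of $u$.

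For the ``if'' direction I would argue by contradiction: suppose $v$ is a \emph{hybrid} child of $u$ that is nonetheless a strict descendant of $u$. Fix a temporal representation $\tau\colon V\to\NN$ of $N$. Since $v$ is hybrid, the arc $(u,v)$ is a hybridization arc, so $\tau(u)=\tau(v)$; call this common value $\tau_0$. Consider
$$
B=\{z\in V\mid z\text{ is a strict descendant of }u\text{ and }\tau(z)=\tau_0\}.
$$
Then $u\in B$ (every node is a strict descendant of itself) and $v\in B$, with $v\neq u$ by acyclicity, so $|B|\geq 2$. The whole argument then rests on analysing the sub-DAG $N[B]$ of $N$ induced by $B$ and showing that it cannot exist.

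The key step is to establish three facts about $B$. \emph{(i) $B$ contains no tree node other than $u$:} if $z\in B$ were a tree node with $z\neq u$, it would have a unique parent $p$, the arc $(p,z)$ would be a tree arc, so $\tau(p)<\tau(z)=\tau_0$; but every path from the root to $z$ contains $u$ and ends at $z\neq u$, so $u$ lies on every path from the root to $p$, making $p$ a strict descendant of $u$, and since $p\neq u$ a non-trivial path $u\pathgr p$ forces $\tau(p)\geq\tau(u)=\tau_0$, a contradiction. \emph{(ii) every hybrid node $z\in B$ with $z\neq u$ has all of its parents in $B$:} for each parent $p$ of such a $z$ the arc $(p,z)$ is a hybridization arc, so $\tau(p)=\tau_0$, and exactly as in (i) the node $u$ lies on every path from the root to $p$, so $p$ is a strict descendant of $u$, whence $p\in B$. \emph{(iii) $u$ has in-degree $0$ in $N[B]$:} if $u$ is a tree node, its parent (if any) $p$ satisfies $\tau(p)<\tau_0$, so $p\notin B$; if $u$ is hybrid, any parent $p$ satisfies $\tau(p)=\tau_0$, but $p$ cannot be a strict descendant of $u$, because $p\neq u$ and a non-trivial path $u\pathgr p$ together with the arc $(p,u)$ would close a cycle.

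To finish, combine the facts: by (i) and (ii) every node of $B$ other than $u$ is a hybrid node all of whose parents lie in $B$, so its in-degree inside $N[B]$ equals its in-degree in $N$ and is at least $2$; by (iii), $u$ has in-degree $0$ in $N[B]$. Thus $u$ is the only node of $B$ that can have in-degree $0$ in $N[B]$. But $N[B]\setminus\{u\}$ is a non-empty finite DAG (it contains $v$), so it has a node $z'$ of in-degree $0$ within it; the only node of $N[B]$ that could be a parent of $z'$ is then $u$, so the in-degree of $z'$ in $N[B]$ is at most $1$, contradicting that it is at least $2$. This contradiction shows that $v$ must be a tree node. I expect fact (i) to be the delicate point: one must track the strict-descendant relation carefully and, in particular, rule out the degenerate case $p=u$, and this is exactly where time consistency (tree arcs strictly raise $\tau$, hybridization arcs preserve it) is used in an essential way; the tree-child hypothesis, by contrast, is not needed for this lemma.
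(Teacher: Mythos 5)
Your proof is correct, but it takes a genuinely different route from the paper's. The paper argues locally around $u$: it considers the set $H(u)$ of hybrid children of $u$ that are strict descendants of $u$, picks an element $v_0$ of \emph{maximal height}, takes a parent $v_1\neq u$ of $v_0$, and uses time consistency (hybridization arcs preserve $\tau$) together with Lemma~\ref{lem:str-int} to show that the first node on a path $u\pathgr v_1$ is again a member of $H(u)$ but of strictly larger height, a contradiction. You instead argue globally on the ``level set'' $B$ of strict descendants of $u$ with $\tau$-value $\tau_0$: your facts (i)--(iii) show that $B\setminus\{u\}$ consists of hybrid nodes whose parents all lie in $B$, while $u$ receives no arc from $B$, and the contradiction comes from the purely combinatorial fact that the finite induced DAG on $B\setminus\{u\}$ must have a source, which then could have at most the single parent $u$ inside $N[B]$, against in-degree $\geq 2$. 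Both arguments rest on the same two ingredients (monotonicity of $\tau$ along paths, and propagation of strictness to parents/intermediate nodes), and you are right that the tree-child hypothesis is not used in either; your version avoids heights and Lemma~\ref{lem:str-int} at the price of the auxiliary set $B$ and the induced-subgraph bookkeeping, whereas the paper's maximal-height choice gives a shorter chain of deductions. One small presentational point: in your fact (i) the claim $p\neq u$ should be justified where it is used --- it follows immediately because $\tau(p)<\tau_0=\tau(u)$ --- rather than only being flagged at the end; with that sentence added the argument is complete.
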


\begin{proof}
Assume first that $v$ is a tree child of $u$. Since $u$ is the only parent of $v$, every non-trivial path ending in $v$ must contain $u$. This shows that $u$ is a strict ancestor of $v$.

Assume now that $v$ is a hybrid child of $u$ that is also a strict descendant of it, and let us see that this leads to a contradiction. Indeed, in this case the set $H(u)$ of hybrid children of $u$ that are strict descendants of it is non-empty, and we can choose a node $v_0$ in it of largest height.
Let $v_1$ be any parent of $v_0$ other than $u$. Since $u$ is a strict ancestor of $v_0$, it must be an ancestor of $v_1$, and since $u$ and $v_1$ have the hybrid child $v_0$ is common, they must have the same temporal representation, and therefore $v_1$ as well as all intermediate nodes in any path $u\pathgr v_1$ must be hybrid. Moreover, since $u$ is a strict ancestor of $v_0$, it is also a strict ancestor of $v_1$ as well as of any intermediate node in any path $u\pathgr v_1$ (by Lemma \ref{lem:str-int}). In particular, the child of $u$ in a path $u\pathgr v_1$ will belong to $H(u)$ and its height will be larger than the height of $v_0$, which is impossible.
\end{proof}

\begin{corollary}
\label{lem:root}
All children of the root of a TCTC-network are tree nodes.
\end{corollary}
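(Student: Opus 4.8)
The plan is to read the corollary off Lemma \ref{lem:hyb-nostr} directly. By that lemma, a child $v$ of any node $u$ in a TCTC-network is a tree node if, and only if, it is a strict descendant of $u$. Applying this with $u$ the root $r$ of $N$, it suffices to check that \emph{every} child of $r$ is a strict descendant of $r$.

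That last fact is immediate from the definition of strict descendant together with the standing assumption that a hybridization network is \emph{rooted}, i.e.\ has $r$ as its only root. Indeed, if $v$ is a child of $r$, then $v$ is a descendant of $r$ via the arc $(r,v)$; moreover, since $r$ is the unique root of $N$, every path from a root of $N$ to $v$ is a path whose origin is $r$, and such a path trivially contains $r$. Hence $v$ is a strict descendant of $r$, and Lemma \ref{lem:hyb-nostr} then gives that $v$ is a tree node. (The same argument shows, more generally, that every node of $N$ is a strict descendant of $r$, so no hybrid node can have $r$ among its parents.)

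There is essentially no obstacle: the only subtle point worth flagging is that the argument relies crucially on $N$ having a single root, so that the quantifier ``every path from a root of $N$ to $v$'' in the definition of strict descendant collapses to paths issuing from $r$ itself; this is precisely what fails for a general DAG and hence what forces us to invoke the ``rooted'' hypothesis rather than just acyclicity.
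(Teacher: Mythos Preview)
Your proof is correct and follows essentially the same approach as the paper: observe that every node (in particular every child of the root) is a strict descendant of the unique root, and then apply Lemma~\ref{lem:hyb-nostr}. The paper's version is just terser, stating the strict-descendant fact without spelling out the one-root argument you give.
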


\begin{proof}
Every node in a hybridization network is a strict descendant of the root. Then, Lemma \ref{lem:hyb-nostr} applies.
\end{proof}

The following result is the key ingredient in the proofs of our main
results; it generalizes to hybridization networks Lemma 3 in
\cite{comparison2}, which referred to phylogenetic networks.  A
similar result was proved in \cite{cardona.ea:sbTSTC:2008} for
 \emph{tree-sibling} (that is, where every hybrid node has a sibling that is a tree node) time consistent phylogenetic networks with all its hybrid
nodes of in-degree 2.

\begin{lemma}
\label{lem:key}
Every TCTC-network with more than one leaf contains at least one node $v$ satisfying one of
the following properties:
\begin{enumerate}[(a)]
\item $v$ is an internal tree node and all its children are tree leaves.

\item $v$ is a hybrid internal node, all its children are tree leaves,
and all its siblings are leaves or hybrid nodes.

\item $v$ is a hybrid leaf, and all its siblings are leaves or hybrid
nodes.
\end{enumerate}
\end{lemma}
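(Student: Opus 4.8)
The plan is to proceed by induction-free structural analysis, using the height stratification and the temporal consistency constraint to locate the node $v$. First I would pick an internal node $v$ of maximal height among all internal nodes whose children are \emph{all} leaves; such a node exists because any internal node of minimal height among internal nodes has only leaf children (if a child were internal it would have smaller height, contradicting minimality — wait, I actually want minimal height, so let me restate: take $v$ internal of minimal height; then every child of $v$ is a leaf, since a non-leaf child would be internal of strictly smaller height). So set $v$ to be an internal node of minimal height. Then all children of $v$ are leaves. Now I split into cases according to whether $v$ is a tree node or a hybrid node, and — in the hybrid case — whether its children are all tree leaves and whether its siblings are all leaves-or-hybrids.

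If $v$ is a tree node, then by the tree-child condition applied repeatedly, or more directly, I claim all children of $v$ are tree leaves: a child $c$ of $v$ is a leaf; if $c$ were hybrid, I would want a contradiction. Actually here I should use Lemma~\ref{lem:hyb-nostr}: if a child of $v$ is a hybrid leaf, that is allowed in principle, so this does not immediately give case (a). Instead, the cleaner approach: if $v$ is a tree node and \emph{some} child is a hybrid leaf, then I move attention to that hybrid leaf and try to land in case (c); if all children of $v$ are tree leaves, I am in case (a). So the real work is in handling hybrid nodes whose children are all leaves but which might fail the sibling condition, and hybrid leaves that might fail the sibling condition. The key obstruction — and the step I expect to be hardest — is ruling out an infinite descent or a cyclic obstruction among hybrid nodes violating the sibling condition: if a hybrid node (or leaf) $v$ has a sibling $s$ that is an internal tree node, I need to show I can find a \emph{better} candidate, and I must be sure this search terminates.

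Here is how I would organize that search. Suppose every node of $N$ fails all three of (a), (b), (c); I derive a contradiction. Take $v$ of minimal height among internal nodes, so all children of $v$ are leaves. Since (a) fails, $v$ is hybrid or has a hybrid-leaf child; since (b) and (c) fail in the relevant configurations, there is a witness sibling that is an internal tree node — call it $s$, a sibling of some hybrid node or hybrid leaf $w$, with common parent $p$. Now $s$ is an internal tree node; by Corollary~\ref{lem:root}-type reasoning and the tree-child condition, $s$ has a tree-descendant leaf, and the path from $s$ down is strict (as quoted from \cite{cardona.ea:07a}). The temporal representation $\tau$ gives $\tau(p) < \tau(s)$ (tree arc $(p,s)$, since $s$ is a tree node) but $\tau(p) = \tau(w)$ if $(p,w)$ is a hybridization arc. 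I would then chase heights or $\tau$-values: among all internal tree nodes that arise as such a "witness sibling", pick one, $s_0$, of minimal height; all of $s_0$'s children are... not necessarily leaves, so I instead descend into $s_0$ to an internal node of minimal height below it, which again has only leaf children, and re-examine. The termination comes from strict decrease of height along tree paths combined with the fact that $N$ is finite; the temporal representation is what forbids the "bad" hybrid configuration from persisting, exactly as in the proof of Lemma~\ref{lem:hyb-nostr}.

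Concretely, I would mirror the maximal-height trick of Lemma~\ref{lem:hyb-nostr}: among all hybrid nodes that are strict descendants of a fixed problematic node, the one of largest height yields a contradiction. Translating to the present statement: if the minimal-height internal node $v$ is hybrid with a sibling $s$ that is an internal tree node, then $\tau(v)=\tau(p)<\tau(s)$ where $p$ is the common parent (the arc $(p,v)$ is a hybridization arc so $\tau(p)=\tau(v)$, and the arc $(p,s)$ is a tree arc so $\tau(p)<\tau(s)$), giving $\tau(v)<\tau(s)$; but $s$ has a tree-descendant leaf $\ell$ with $\tau(\ell)>\tau(s)>\tau(v)$, while the leaf children of $v$ have $\tau$-values $>\tau(v)$ too — this alone is not yet a contradiction, so I must instead compare with the \emph{height} of $s$ versus that of $v$ and iterate downward into $s$ to replace the configuration by one of strictly smaller relevant parameter, contradicting minimality of the choice of $v$. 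I expect the bookkeeping of which monovariant strictly decreases (height of $v$, or height of the worst sibling, or the number of internal-tree-node siblings) to be the main obstacle; once the right monovariant is identified, each case collapses to either reaching one of (a),(b),(c) or shrinking the monovariant, and finiteness of $N$ closes the argument.
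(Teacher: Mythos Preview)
Your argument has a genuine gap: you start from an internal node $v$ of minimal \emph{height}, and you correctly observe that all children of $v$ are leaves, but you never identify the monovariant that would make your descent terminate, and in fact none of the candidates you mention (height of $v$, height of the worst sibling, number of internal-tree-node siblings) decreases in the step you describe. When $v$ is hybrid (or has a hybrid leaf child $c$) and you locate an offending internal tree-node sibling $s$ via another parent $p$, all you can conclude is $\tau(s)>\tau(p)=\tau(v)$; descending inside $s$ to a new internal node of minimal height there gives a node whose height is $\geq h(v)$ by minimality of $v$, so nothing strictly decreases. The search can cycle through nodes of the same minimal height indefinitely as far as your argument shows.

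The paper's proof avoids the descent entirely by reversing your selection criterion: pick $v_0$ to be an internal node of \emph{maximal} $\tau$-value, and among those of minimal height. Maximality of $\tau$ forces every tree child of $v_0$ to be a leaf, and the height tiebreaker forces every hybrid child (same $\tau$, smaller height) to be a leaf as well. The point you are missing is that this same maximality of $\tau$ also handles the sibling condition for free: if $v_0$ is hybrid, each parent $p$ of $v_0$ has $\tau(p)=\tau(v_0)$, which is maximal, so no child of $p$ can be an internal tree node; and if $v_0$ has a hybrid leaf child $c$, each parent of $c$ again has this maximal $\tau$-value and the same conclusion holds. So one well-chosen extremal node lands directly in case (a), (b), or (c) with no iteration needed.
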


\begin{proof}
Let $N$ be a TCTC-network and $\tau$ a temporal representation of it.
Let $v_0$ be an internal node of highest $\tau$-value and, among such
nodes, of smallest height.  The tree children of $v_0$ have strictly
higher $\tau$-value than $v$, and therefore they are leaves.  And the
hybrid children of $v_0$ have the same $\tau$-value than $v_0$ but
smaller height, and therefore they are also leaves.

Now:
\begin{itemize}
\item If $v_0$ is a tree node all whose children are tree nodes,
taking $v=v_0$ we are in case (a).

\item If $v_0$ is a hybrid node all whose children are tree nodes,
then its parents have its same $\tau$-value, which, we recall, is the
highest one.  This implies that their children ($v_0$'s siblings)
cannot be internal tree nodes, and hence they are leaves or hybrid
nodes.  So, taking $v=v_0$, we are in case (b).

\item If $v_0$ has some hybrid child, take as the node $v$ in the
statement this hybrid child: it is a leaf, and all its parents have
the same $\tau$-value as $v_0$, which implies, arguing as in the
previous case, that all siblings of $v$ are leaves or hybrid nodes.
Thus, $v$ satisfies (c).
\end{itemize}
\end{proof}

We introduce now some reductions for TCTC-networks.  Each of these
reductions applied to a TCTC-network with $n$ leaves and $m$ internal
nodes produces a TCTC-network with either $n-1$ leaves  and $m$ internal nodes or  with $n$ leaves and $m-1$
internal nodes, and given any TCTC-network with more than two leaves,
it will always be possible to apply to it some of these reductions.
This lies at the basis of the proofs by algebraic induction of the main results in
this paper.

Let $N$ be a TCTC-network with $n\geq 3$ leaves.
\begin{enumerate}
\item[(\textbf{U})]
Let $i$ be one tree leaf of $N$ and assume that its parent has only
this child.  The \emph{$U(i)$ reduction} of $N$ is the network
$N_{U(i)}$ obtained by removing the leaf $i$, together with its
incoming arc, and labeling with $i$ its former parent;
cf.~Fig.~\ref{fig:U-red}.  This reduction removes the only child of a
node, and thus it is clear that $N_{U(i)}$ is still a TCTC-network,
with the same number of leaves but one internal node less than $N$.

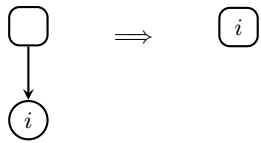
\begin{figure}[htb]
\centering
  \begin{tikzpicture}[thick,>=stealth,scale=0.5]
    \draw (0,0) node[indef] (u) {}; 
    \draw (0,-2.5) node[tre] (i) {}; \etqsm {i}
    \draw [->](u)--(i);
  \end{tikzpicture}
  \qquad
               \begin{tikzpicture}[thick,>=stealth,scale=0.5]
              \draw(0,-2.5) node{\ };
              \draw(0,0) node  {$\Longrightarrow$};
 \end{tikzpicture}
  \qquad
  \begin{tikzpicture}[thick,>=stealth,scale=0.5]
    \draw (0,0.3) node[indef] (i) {};  \etqsm {i}
              \draw(0,-2.5) node{\ };
    \end{tikzpicture}

\caption{\label{fig:U-red} 
The $U(i)$-reduction.}
\end{figure}

\item[(\textbf{T})]
Let $i,j$ be two sibling tree leaves of $N$ (that may, or may not,
have other siblings).  The \emph{$T(i;j)$ reduction} of $N$ is the
network $N_{T(i;j)}$ obtained by removing the leaf $i$, together with
its incoming arc; cf.~Fig.~\ref{fig:T-red}.  This reduction procedure
removes one tree leaf, but its parent $u$ keeps at least another tree
child, and if $u$ was the root of $N$ then it would not become
elementary after the reduction, because $n\geq 3$ and therefore, since
$j$ is a leaf, $u$ should have at least another child.  Therefore,
$N_{T(i;j)}$ is a TCTC-network with the same number of internal nodes as $N$ and $n-1$ leaves.

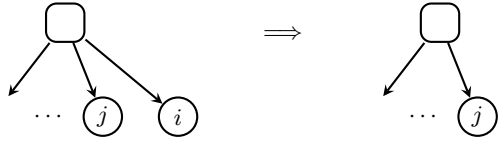
\begin{figure}[htb]
\centering
  \begin{tikzpicture}[thick,>=stealth,scale=0.5]
    \draw (0,0.5) node[indef] (u) {}; 
    \draw (1,-2) node[tre] (j) {}; \etqsm {j}
    \draw (3,-2) node[tre] (i) {}; \etqsm {i}
    \draw (-0.4,-2) node {$\cdots$};
    \draw (-1.7,-1.7) node  (x) {}; 
    \draw [->](u)--(i);
    \draw [->](u)--(j);
    \draw [->](u)--(x);
  \end{tikzpicture}
  \qquad
               \begin{tikzpicture}[thick,>=stealth,scale=0.5]
              \draw(0,-2) node{\ };
              \draw(0,0.5) node  {$\Longrightarrow $};
 \end{tikzpicture}
  \qquad
  \begin{tikzpicture}[thick,>=stealth,scale=0.5]
    \draw (0,0.5) node[indef] (u) {}; 
    \draw (1,-2) node[tre] (j) {}; \etqsm {j}
    \draw (-0.4,-2) node {$\cdots$};
    \draw (-1.7,-1.7)  node  (x) {}; 
    \draw [->](u)--(j);
    \draw [->](u)--(x);
  \end{tikzpicture}

\caption{\label{fig:T-red} 
The $T(i;j)$-reduction.}
\end{figure}

\item[(\textbf{H})] Let $i$ be a hybrid leaf of $N$, let
$v_1,\ldots,v_k$, with $k\geq 2$, be its parents, and assume that each one of these
parents has (at least) one tree leaf child: for every $l=1,\ldots,k$,
let $j_l$ be a tree leaf child of $v_l$.  The
\emph{$H(i;j_1,\ldots,j_k)$ reduction} of $N$ is the network
$N_{H(i;j_1,\dots,j_k)}$ obtained by removing the hybrid leaf $i$ and
its incoming arcs; cf.~Fig.~\ref{fig:H-red}.  This reduction procedure
preserves the time consistency and the tree-child condition (it
removes a hybrid leaf), and the root does not become elementary:
indeed, the only possibility for the root to become elementary is to
be one of the parents of $i$, which is impossible by Corollary \ref{lem:root}.  Therefore, $N_{H(i;j_1,\dots,j_k)}$ is a
TCTC-network with the same number of internal nodes as $N$ and $n-1$ leaves.

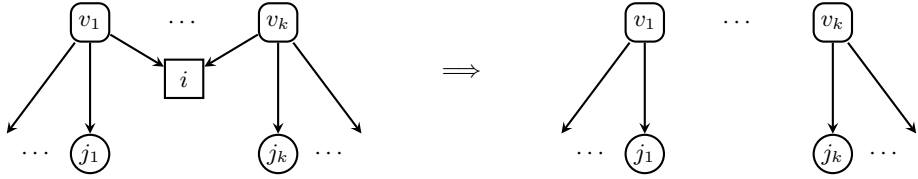
\begin{figure}[htb]
\centering
  \begin{tikzpicture}[thick,>=stealth,scale=0.5]
    \draw (0,0) node[hyb] (i) {}; \etqsm{i}
    \draw (-2.5,1.5) node[indef] (v_1) {}; \etqfn{v_1}
    \draw (-2.5,-2) node[tre] (j_1) {}; \etqfn{j_1}
      \draw (-4.9,-1.7) node  (x) {}; 
    \draw (-3.9,-2) node {$\cdots$};
  \draw (2.5,1.5) node[indef] (v_k) {}; \etqfn{v_k}
    \draw (2.5,-2) node[tre] (j_k) {}; \etqfn{j_k}
      \draw (4.9,-1.7) node  (y) {}; 
    \draw (3.9,-2) node {$\cdots$};
    \draw (0,1.5) node {$\cdots$};
    \draw[->](v_1) to (j_1);
    \draw[->](v_1) to (i);
    \draw[->](v_1) to (x);
    \draw[->](v_k) to (j_k);
    \draw[->](v_k) to  (i);    
    \draw[->](v_k) to  (y);    
    \end{tikzpicture}
  \qquad             \begin{tikzpicture}[thick,>=stealth,scale=0.5]
              \draw(0,1) node{\ };
              \draw(0,0.5) node  {$\Longrightarrow $};
              \draw(0,-2) node{\ };
 \end{tikzpicture}
 \qquad
  \begin{tikzpicture}[thick,>=stealth,scale=0.5]
   \draw (-2.5,1.5) node[indef] (v_1) {}; \etqfn{v_1}
    \draw (-2.5,-2) node[tre] (j_1) {}; \etqfn{j_1}
      \draw (-4.9,-1.7) node  (x) {}; 
    \draw (-3.9,-2) node {$\cdots$};
  \draw (2.5,1.5) node[indef] (v_k) {}; \etqfn{v_k}
    \draw (2.5,-2) node[tre] (j_k) {}; \etqfn{j_k}
      \draw (4.9,-1.7) node  (y) {}; 
    \draw (3.9,-2) node {$\cdots$};
    \draw (0,1.5) node {$\cdots$};
    \draw[->](v_1) to (j_1);
    \draw[->](v_1) to (x);
    \draw[->](v_k) to (j_k);
    \draw[->](v_k) to  (y);    
  \end{tikzpicture}

\caption{\label{fig:H-red} 
The $H(i;j_1,\ldots,j_k)$-reduction.}
\end{figure}

\end{enumerate}

We shall call the inverses of the U, T, and H reduction procedures,
respectively, the $\textrm{U}^{-1}$, $\textrm{T}^{-1}$, and
$\textrm{H}^{-1}$ \emph{expansions}, and we shall denote them by
$U^{-1}(i)$, $R^{-1}(i;j)$, and $H^{-1}(i;j_1,\ldots,j_k)$.  More
specifically, for every TCTC-network $N$:
\begin{itemize}
\item If $N$ has some leaf labeled $i$, the expansion $U^{-1}(i)$ can
be applied to $N$ and the resulting network $N_{U^{-1}(i)}$ is
obtained by unlabeling the leaf $i$ and adding to it a tree leaf child
labeled with $i$.  $N_{U^{-1}(i)}$ is always a TCTC-network.

\item If $N$ has no leaf labeled with $i$ and some tree leaf labeled
with $j$, the expansion $T^{-1}(i;j)$ can be applied to $N$, and the
resulting network $N_{T^{-1}(i;j)}$ is obtained by adding to the
parent of the leaf $j$ an new tree leaf child labeled with $i$.
$N_{T^{-1}(i;j)}$ is always a TCTC-network.

\item If $N$ has no leaf labeled with $i$ and some tree leaves labeled
with $j_1,\ldots,j_k$, $k\geq 2$, that are not sibling of each other,
the expansion $H^{-1}(i;j_1,\ldots,j_k)$ can be applied to $N$ and the
resulting network $N_{H^{-1}(i;j_1,\ldots,j_k)}$ is obtained by adding
a new hybrid node labeled with $i$ and arcs from the parents of
$j_1,\ldots,j_k$ to $i$.  $N_{H^{-1}(i;j_1,\ldots,j_k)}$ is always a
tree child hybridization network, but it need not be time consistent,
as the parents of $j_1,\ldots,j_k$ may  have different temporal
representations in $N$ (for instance, one of them could be a tree
descendant of another one).
\end{itemize}

The following result is easily deduced from the explicit descriptions
of the reduction and expansion procedures, and the fact that isomorphisms
preserve labels and parents.

\begin{lemma}
\label{lem:iso-red}
Let $N$ and $N'$ be two TCTC-networks.  If $N\cong N'$, then the
result of applying to both $N$ and $N'$ the same U reduction
(respectively, T reduction, H reduction, $\textrm{U}^{-1}$ expansion,
$\textrm{T}^{-1}$ expansion, or $\textrm{H}^{-1}$ expansion) are again
two isomorphic hybridization networks.

Moreover, if we apply an U reduction (respectively, T reduction, or H
reduction) to a TCTC-network $N$, and then we apply to the resulting
TCTC-network the inverse $\textrm{U}^{-1}$ expansion (respectively,
$\textrm{T}^{-1}$ expansion, or $\textrm{H}^{-1}$ expansion), we
obtain a TCTC-network isomorphic to $N$.  \qed
\end{lemma}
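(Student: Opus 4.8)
The plan is to unwind the explicit descriptions of the six operations and observe that each of them is specified purely in terms of the underlying directed graph together with the leaf-labeling, both of which an isomorphism of $S$-DAGs respects. So I would begin by fixing an isomorphism $\varphi\colon N\to N'$ and recording the facts I will use repeatedly: $\varphi$ is a bijection on node sets that preserves arcs in both directions, hence it carries parents to parents, children to children and siblings to siblings; it preserves types, so it sends tree nodes to tree nodes and hybrid nodes to hybrid nodes; and, being an isomorphism of $S$-DAGs, it maps the leaf labeled $i$ of $N$ to the leaf labeled $i$ of $N'$ for every $i\in S$.

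For the first assertion I would treat the three reductions one at a time, each time with the same two-step argument. First, \emph{applicability}: the condition enabling a $U(i)$ (resp.\ $T(i;j)$, $H(i;j_1,\dots,j_k)$) reduction is a statement about labels, types, and the parent/child relation at the leaves involved, so by the facts above it holds in $N$ if and only if it holds in $N'$. Second, \emph{the new isomorphism}: each reduction deletes a set of nodes and arcs determined by the labels involved (for $U(i)$, the leaf $i$ and its incoming arc, plus a relabeling of its former parent; for $T(i;j)$, the leaf $i$ and its incoming arc; for $H(i;j_1,\dots,j_k)$, the leaf $i$ and its incoming arcs), and $\varphi$ carries this deleted set in $N$ onto the corresponding deleted set in $N'$; hence the restriction of $\varphi$ to the surviving nodes is a graph isomorphism between the two reduced networks, and it respects the labeling — the only point worth a line being the $U$ case, where one checks that $\varphi$ sends the former parent of $i$ in $N$, which is the node newly labeled $i$, to the former parent of $i$ in $N'$. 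The three expansions are handled identically: their applicability is again a condition on labels (and, for $T^{-1}$ and $H^{-1}$, on the relevant leaves being tree leaves and on $j_1,\dots,j_k$ not being pairwise siblings), preserved by $\varphi$; and one extends $\varphi$ to the newly created node(s) in the obvious way (new leaf $\mapsto$ new leaf, new hybrid $i\mapsto$ new hybrid $i$), the added arcs being matched because $\varphi$ matches the parents of the $j_l$.

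For the second assertion I would again go case by case, the point in each case being that the reduction deletes exactly the data that the inverse expansion restores. After $U(i)$, the former parent $u$ of $i$ is an internal node of $N$, hence unlabeled there, which now carries the label $i$ and has no other child, so $U^{-1}(i)$ merely strips that label from $u$ and re-attaches a single tree-leaf child labeled $i$, recreating $N$ up to the isomorphism that is the identity away from the new leaf and sends the new leaf to $i$. After $T(i;j)$ the parent of $j$ is unchanged, so $T^{-1}(i;j)$ re-attaches a tree leaf $i$ as a child of it. After $H(i;j_1,\dots,j_k)$ the nodes $v_1,\dots,v_k$ are still exactly the parents of $j_1,\dots,j_k$, so $H^{-1}(i;j_1,\dots,j_k)$ re-creates the hybrid leaf $i$ with incoming arcs from $v_1,\dots,v_k$. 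In every case the resulting network coincides with $N$ except that the re-created node is formally new, which yields the asserted isomorphism.

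The main point to watch — rather than a genuine obstacle — is that in the $H$ case one must check that the inverse expansion is in fact applicable to the reduced network, i.e.\ that $j_1,\dots,j_k$ are pairwise distinct and pairwise non-sibling in $N_{H(i;j_1,\dots,j_k)}$. This holds because each $j_l$ is a tree leaf and so has $v_l$ as its unique parent, while the parents $v_1,\dots,v_k$ of the hybrid node $i$ are pairwise distinct; hence a coincidence $j_l=j_{l'}$ or a common parent of $j_l$ and $j_{l'}$ would force $v_l=v_{l'}$. None of these relations is disturbed by deleting $i$ and its incoming arcs, so the condition survives the reduction. The remaining verifications are routine bookkeeping once the preservation properties of $\varphi$ recorded at the outset are in hand.
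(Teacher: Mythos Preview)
Your proposal is correct and follows exactly the approach the paper indicates: the paper itself gives no detailed proof, stating only that the lemma ``is easily deduced from the explicit descriptions of the reduction and expansion procedures, and the fact that isomorphisms preserve labels and parents,'' which is precisely what you unpack. Your additional care in verifying the applicability of $H^{-1}(i;j_1,\dots,j_k)$ to the reduced network is a useful detail the paper leaves implicit.
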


As we said above, every TCTC-network with at least 3 leaves allows the
application of some reduction.

\begin{proposition}\label{thm-reduction-possible}
Let $N$ be a TCTC-network with more than two leaves.  Then, at least
one U, R, or H reduction can be applied to $N$.
\end{proposition}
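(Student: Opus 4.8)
The plan is to derive the statement from Lemma~\ref{lem:key}. Since $N$ has more than two leaves, in particular more than one, that lemma provides a node $v$ satisfying one of its properties (a), (b), or (c); I would then show case by case that the presence of such a $v$, together with the standing hypothesis $n\geq 3$, forces one of the reductions $U$, $T$, $H$ to be applicable. (The ``$R$'' in the statement is the $T$ reduction.)

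Cases (a) and (b) can be treated simultaneously, since in both of them $v$ is an internal node all of whose children are tree leaves. I would split on $\deg_{out}(v)$. If $v$ has a unique child $i$, then $i$ is a tree leaf whose only parent $v$ has no other child, so the $U(i)$ reduction applies verbatim. If $v$ has at least two children, then picking any two of them yields a pair $i,j$ of sibling tree leaves and the $T(i;j)$ reduction applies; here one should note that if $v$ happens to be the root — possible only in case (a) — then it is non-elementary, hence of out-degree at least $2$, so this sub-case is the relevant one, and $T(i;j)$ does not make the root elementary precisely because $n\geq 3$, exactly as observed in the description of the $T$ reduction.

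In case (c), $v$ is a hybrid leaf with parents $v_1,\dots,v_k$ ($k\geq 2$), and every sibling of $v$ is a leaf or a hybrid node. The crux is to exhibit, for each $l$, a tree leaf child $j_l$ of $v_l$. Each $v_l$ is internal, having $v$ as a child, so the tree-child condition provides a tree child $w_l$ of $v_l$; but $w_l$ is a sibling of $v$, hence by the hypothesis of case (c) it is a leaf or a hybrid node, and being a tree node it must be a leaf. Taking $j_l:=w_l$ we obtain tree leaf children of $v_1,\dots,v_k$; these are pairwise distinct (a common child of two of the $v_l$'s would be a hybrid node), and none of the $v_l$ is the root by Corollary~\ref{lem:root}, so all the applicability conditions of the $H(v;j_1,\dots,j_k)$ reduction are met and it applies.

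The only step that is not pure bookkeeping is the identification of the $j_l$ in case (c): it works by marrying the tree-child condition — which hands us \emph{some} tree child of each parent of $v$ — with the structural restriction on the siblings of $v$ furnished by Lemma~\ref{lem:key}(c), which forces that tree child to be a leaf. I expect this to be the only point requiring any thought; the remainder is a matter of matching the three cases against the explicit hypotheses of the $U$, $T$, and $H$ reductions and invoking $n\geq 3$.
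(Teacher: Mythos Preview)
Your proposal is correct and follows essentially the same route as the paper: invoke Lemma~\ref{lem:key}, merge cases (a) and (b) into the single case of an internal node with only tree-leaf children (applying $U$ or $T$ according to its out-degree), and in case (c) use the tree-child condition together with the sibling restriction to produce the tree leaf $j_l$ below each parent of the hybrid leaf. Your write-up is in fact slightly more careful than the paper's own proof (you note the distinctness of the $j_l$ and the root considerations explicitly), and your identification of the ``$R$'' in the statement as a typo for the $T$ reduction is correct.
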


\begin{proof}
By Lemma~\ref{lem:key}, $N$ contains either an internal (tree or
hybrid) node $v$ all whose children are tree leaves, or a hybrid leaf
$i$ all whose siblings are leaves or hybrid nodes.
In the first case, we can apply to $N$ either the reduction $U(i)$ (if
$v$ has only one child, and it is the tree leaf $i$) or $T(i;j)$ (if
$v$ has at least two tree leaf children, $i$ and $j$).
In the second case, let $v_1,\ldots,v_k$, with $k\geq 2$, be the parents of $i$.  By
the tree child condition, each $v_l$, with $l=1,\ldots,k$, has some
tree child, and by the assumption on $i$, it will be a leaf, say $j_l$.  Then, we
can apply to $N$ the reduction $H(i;j_1,\ldots,j_k)$.
\end{proof}

Therefore, every TCTC-network with $n\geq 3$ leaves and $m$ internal nodes is obtained by the application of an $\textrm{U}^{-1}$,
$\textrm{T}^{-1}$, or $\textrm{H}^{-1}$ expansion
to a TCTC-network with either $n-1$ leaves or $n$ leaves and $m-1$ internal nodes.
This allows the recursive construction of all TCTC-networks from  TCTC-networks (actually,  phylogenetic trees) with 2 leaves and 1 internal node.

\begin{example}
Fig.~\ref{fig:seqred} shows how a sequence of reductions transforms a certain
TCTC-network $N$ with 4 leaves into a phylogenetic tree with  2 leaves.
The sequence of inverse expansions would then generate $N$ from this phylogenetic tree.  This sequence of expansions generating $N$ is, of course, not unique.
\end{example}
\begin{figure}[htb]
\begin{center}
\begin{tikzpicture}[thick,>=stealth,scale=0.45]
                \draw(0,0) node[tre] (1) {}; \etqsm 1
                   \draw(2,0) node[tre] (2) {}; \etqsm 2
                \draw(4,0) node[tre] (3) {}; \etqsm 3
                \draw(6,0) node[tre] (4) {}; \etqsm 4
  \draw(4,3) node[hyb] (A) {};  
\draw(1,3) node[tre] (a) {};
\draw(6,3) node[tre] (b) {};
\draw(3,6) node[tre] (r) {};
 \draw[->] (a)--(1);
  \draw[->] (a)--(A);
  \draw[->] (a)--(2);
   \draw[->] (b)--(A);
 \draw[->] (b)--(4);
 \draw[->] (A)--(3);
 \draw[->] (r)--(a);
 \draw[->] (r)--(b);
            \end{tikzpicture}
                  \begin{tikzpicture}[thick,>=stealth,scale=0.45]
              \draw(0,0) node{\ };
              \draw(0,3) node  {$\Longrightarrow$};
              \draw(0,4) node  {\footnotesize $U(3)$};
              \draw(0,6) node{\ };
 \end{tikzpicture}
 \begin{tikzpicture}[thick,>=stealth,scale=0.45]
                \draw(0,0) node[tre] (1) {}; \etqsm 1
                   \draw(2,0) node[tre] (2) {}; \etqsm 2
                \draw(6,0) node[tre] (4) {}; \etqsm 4
  \draw(4,3) node[hyb] (3) {};  \etqsm 3
\draw(1,3) node[tre] (a) {};
\draw(6,3) node[tre] (b) {};
\draw(3,6) node[tre] (r) {};
 \draw[->] (a)--(1);
  \draw[->] (a)--(3);
  \draw[->] (a)--(2);
   \draw[->] (b)--(3);
 \draw[->] (b)--(4);
 \draw[->] (r)--(a);
 \draw[->] (r)--(b);
            \end{tikzpicture}
                  \begin{tikzpicture}[thick,>=stealth,scale=0.45]
              \draw(0,0) node{\ };
              \draw(0,3) node  {$\Longrightarrow $};
              \draw(0,4) node  {\footnotesize $H(3;1,4)$};
              \draw(0,6) node{\ };
 \end{tikzpicture}
 \begin{tikzpicture}[thick,>=stealth,scale=0.45]
                \draw(0,0) node[tre] (1) {}; \etqsm 1
                   \draw(2,0) node[tre] (2) {}; \etqsm 2
                \draw(6,0) node[tre] (4) {}; \etqsm 4
\draw(1,3) node[tre] (a) {};
\draw(6,3) node[tre] (b) {};
\draw(3,6) node[tre] (r) {};
 \draw[->] (a)--(1);
  \draw[->] (a)--(2);
 \draw[->] (b)--(4);
 \draw[->] (r)--(a);
 \draw[->] (r)--(b);
            \end{tikzpicture}\\[2ex]
                  \begin{tikzpicture}[thick,>=stealth,scale=0.45]
              \draw(0,0) node{\ };
              \draw(0,3) node  {$\Longrightarrow $};
              \draw(0,4) node  {\footnotesize $T(2;1)$};
              \draw(0,6) node{\ };
 \end{tikzpicture}
 \begin{tikzpicture}[thick,>=stealth,scale=0.45]
                \draw(0,0) node[tre] (1) {}; \etqsm 1
                \draw(4,0) node[tre] (4) {}; \etqsm 4
\draw(0,3) node[tre] (a) {};
\draw(4,3) node[tre] (b) {};
\draw(2,6) node[tre] (r) {};
 \draw[->] (a)--(1);
 \draw[->] (b)--(4);
 \draw[->] (r)--(a);
 \draw[->] (r)--(b);
            \end{tikzpicture}
                 \begin{tikzpicture}[thick,>=stealth,scale=0.45]
              \draw(0,0) node{\ };
              \draw(0,3) node  {$\Longrightarrow $};
              \draw(0,4) node  {\footnotesize $U(1)$};
              \draw(0,6) node{\ };
 \end{tikzpicture}
 \begin{tikzpicture}[thick,>=stealth,scale=0.45]
                \draw(4,0) node[tre] (4) {}; \etqsm 4
\draw(0,3) node[tre] (1) {};\etqsm 1
\draw(4,3) node[tre] (b) {};
\draw(2,6) node[tre] (r) {};
 \draw[->] (b)--(4);
 \draw[->] (r)--(1);
 \draw[->] (r)--(b);
            \end{tikzpicture}
                 \begin{tikzpicture}[thick,>=stealth,scale=0.45]
              \draw(0,0) node{\ };
              \draw(0,3) node  {$\Longrightarrow $};
              \draw(0,4) node  {\footnotesize $U(4)$};
              \draw(0,6) node{\ };
 \end{tikzpicture}
 \begin{tikzpicture}[thick,>=stealth,scale=0.45]
\draw(0,3) node[tre] (1) {};\etqsm 1
\draw(4,3) node[tre] (4) {}; \etqsm 4
\draw(2,6) node[tre] (r) {};
 \draw[->] (r)--(1);
 \draw[->] (r)--(4);
\draw(0,0) node{\ };
\end{tikzpicture}

\end{center}
\caption{\label{fig:seqred} 
A sequence of reductions.}
\end{figure}
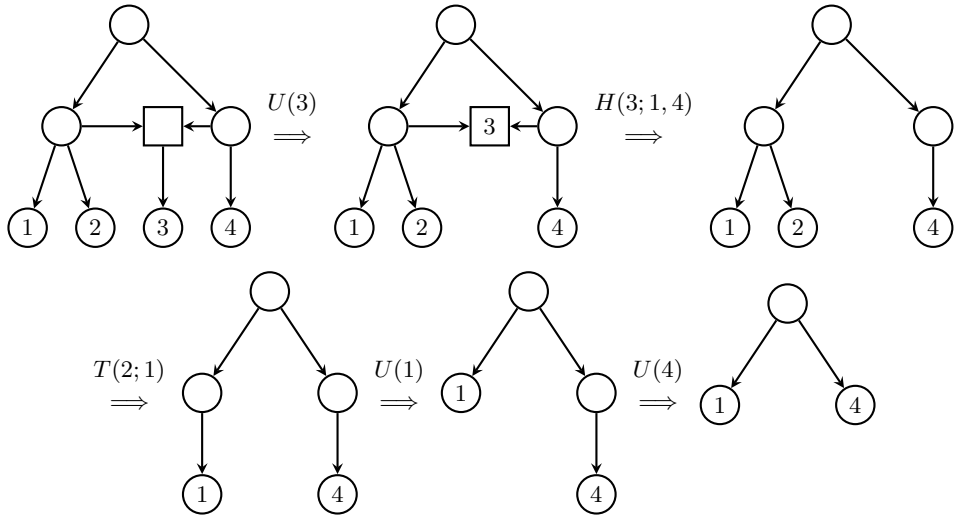

\section{Path lengths vectors for fully resolved networks}
\label{sec:pl-tctc}

Let $N$ be a hybridization network on $S=\{1,\dots,n\}$.  For
every pair of leaves ${i},{j}$ of $N$, let $\ell_N(i,j)$ and
$\ell_N(j,i)$ be the distance from $[i,j]$ to $i$ and to $j$,
respectively.

\begin{definition}
The
\emph{LCSA-path length} between two leaves $i$ and $j$ in $N$ is
$$
L_N(i,j)=\ell_N(i,j)+\ell_N(j,i).
$$
The \emph{LCSA-path lengths vector} of $N$ is
$$
L(N)=\big(L_N(i,j)\big)_{1\leq i<j\leq n}\in \NN^{n(n-1)/2},
$$
with its entries ordered lexicographically in $(i,j)$.
\end{definition}

Notice that $L_N(i,j)=L_N(j,i)$, for every pair of leaves $i,j\in S$.

If $N$ is a phylogenetic tree, the LCSA-path length between two leaves is the
path length between them as defined in \S \ref{sec:trees}, and therefore the vectors $L(N)$ defined therein and here are the same.  But,
contrary to what happens in phylogenetic trees, the LCSA-path length
between two leaves $i$ and $j$ in a hybridization network need not be the smallest sum of the
distances from a common ancestor of $i$ and $j$ to these leaves (that is, the distance between these leaves in the undirected graph associated to the network).

\begin{example}
\label{ex:D}
Consider the TCTC-network $N$ depicted in Fig.~\ref{fig:exnet1}.
Table \ref{table:exnet1} gives, in its upper triangle, the LCSA of
every pair of different leaves, and in its lower triangle, the
LCSA-path length between every pair of different leaves.

Notice that, in this network, $[3,5]=r$, because the root is the only common
ancestor of 3 and 5 that is strict ancestor of some of them, and hence
$L_N(3,5)=8$, but $e$ is a common ancestor of both leaves and the
length of both paths $e\pathgr 3$ and $e\pathgr 5$ is 3. Similarly,
$f$ is also a common ancestor of both leaves and the length of both paths $f \pathgr 3$ and $f \pathgr 5$ is 3.
 This is
an example of LCSA-path length between two leaves that is
largest than the smallest sum of the distances from a common ancestor
of these leaves to each one of them.

\end{example}

\begin{figure}[htb]
\begin{center}
            \begin{tikzpicture}[thick,>=stealth,scale=0.6]
                \draw(0,0) node[tre] (1) {}; \etq 1
                   \draw(2,0) node[tre] (2) {}; \etq 2
                \draw(4,0) node[tre] (3) {}; \etq 3
                \draw(6,0) node[tre] (4) {}; \etq 4
                \draw(8,0) node[tre] (5) {}; \etq 5
                     \draw(10,0) node[tre] (6) {}; \etq 6
                \draw(8,0) node[tre] (5) {}; \etq 5                
\draw(4,2) node[hyb] (A) {};  \etq A
\draw(8,2) node[hyb] (B) {};  \etq B
\draw(0,5) node[tre] (a) {};  \etq a
\draw(2,5) node[tre] (b) {};  \etq b
\draw(6,5) node[tre] (c) {};  \etq c
\draw(10,5) node[tre] (d) {};  \etq d
\draw(1,7) node[tre] (e) {};  \etq e
\draw(8,7) node[tre] (f) {};  \etq f
\draw(5,8) node[tre] (r) {};  \etq r
 \draw[->] (a)--(1);
  \draw[->] (a)--(B);
  \draw[->] (b)--(2);
   \draw[->] (b)--(A);
 \draw[->] (c)--(A);
 \draw[->] (A)--(3);
 \draw[->] (c)--(4);
 \draw[->] (d)--(B);
 \draw[->] (B)--(5);
 \draw[->] (d)--(6);
 \draw[->] (e)--(a);
 \draw[->] (e)--(b);
 \draw[->] (f)--(c);
 \draw[->] (f)--(d);
 \draw[->] (r)--(e);
 \draw[->] (r)--(f);
            \end{tikzpicture}
\end{center}
\caption{\label{fig:exnet1} 
The network $N$ in Example~\ref{ex:D}.}
\end{figure}
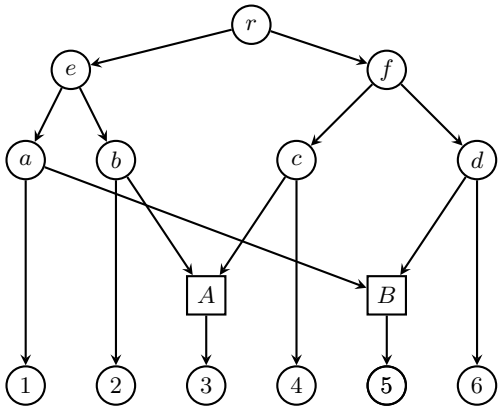

\begin{table}[htd]
\label{table:exnet1}
\caption{For every $1\leq i<j\leq 6$, the entry $(i,j)$ of this table
is $[i,j]$, and the entry $(j,i)$ is $L_N(i,j)$, with $N$ the
network in Fig.~\ref{fig:exnet1}.}
\begin{center}
\begin{tabular}{c||c|c|c|c|c|c|}
&\ 1 \quad & \ 2 \quad & \ 3 \quad & \ 4 \quad & \ 5 \quad & \ 6\quad \\
\hline
\hline
1 & & $e$ & $e$ & $r$ & $a$ & $r$ \\
\hline
2 & 4 & & $b$ & $r$ & $e$ & $r$\\
\hline 
3 & 5 & 3 & & $c$ & $r$ & $f$ \\
\hline 
4 & 6 & 6  & 3 & & $f$ & $f$ \\
\hline
5 & 3 & 5 & 8 & 5 & & $d$\\
\hline
6 & 6 & 6 & 5 & 4 & 3 &  \\
\hline
\end{tabular}
\end{center}
\end{table}%

In a \emph{fully resolved} phylogeny with reticulation events, every non extant species should have two direct descendants, and every
reticulation event should involve two parent species, as such an event corresponds always to the  exchange of 
genetic information between two parents: as 
Semple points out \cite{semple:07},
hybrid nodes with in-degree greater than 2 actually represent ``an uncertainty of the exact order of `hybridization'.'' Depending on whether we
use hybridization or phylogenetic networks to model phylogenies, we
distinguish between:
\begin{itemize}
\item \emph{Fully resolved hybridization networks}: hybridization
networks with all their nodes of types $(0,2)$, $(1,0)$, $(1,2)$,
$(2,0)$, or $(2,2)$.

\item \emph{Fully resolved phylogenetic networks}: phylogenetic
networks with all their nodes of types $(0,2)$, $(1,0)$, $(1,2)$, or
$(2,1)$.
\end{itemize}
To simplify the language, we shall say that a hybridization network is
\emph{quasi-binary} when all its nodes are of types $(0,2)$, $(1,0)$,
$(1,2)$, $(2,0)$, $(2,1)$, or $(2,2)$.  These quasi-binary networks
include as special cases the fully resolved hybridization and
phylogenetic networks.

Our main result in this section establishes that the LCSA-path lengths
vectors separate fully resolved (hybridization or phylogenetic)
TCTC-networks, thus generalizing Proposition \ref{prop:nod-bintree}
from trees to networks.  To prove this result, we shall use the same
strategy as the one developed in \cite{cardona.ea:sbTSTC:2008} or
\cite{comparison2} to prove that the metrics introduced therein were
indeed metrics: algebraic induction based on reductions.  Now, we cannot use the reductions defined in the last section as they stand, because they may generate elementary nodes that are forbidden in fully resolved networks. Instead, we shall use certain suitable combinations of them that always reduce in one the number of leaves.

So, consider the following reduction procedures for quasi-binary TCTC
networks $N$ with $n$ leaves:

\begin{enumerate}
\item[($\mathbf{R}$)] Let $i,j$ be two sibling tree leaves of $N$.
The \emph{$R(i;j)$ reduction} of $N$ is the quasi-binary TCTC-network
$N_{R(i;j)}$ obtained by applying first the $T(i;j)$ reduction to $N$
and then the $U(j)$ reduction to the resulting network.  The final
result is that the leaves $i$ and $j$ are removed, together with their
incoming arcs, and then their former common parent, which now has
become a leaf, is labeled with $j$; cf.~Fig.~\ref{fig:R-red}.

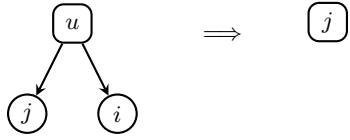
\begin{figure}[htb]
\centering
  \begin{tikzpicture}[thick,>=stealth,scale=0.6]
    \draw (0,0) node[indef] (u) {}; \etqsm{u}
    \draw (-1,-2) node[tre] (j) {}; \etqsm {j}
    \draw (1,-2) node[tre] (i) {}; \etqsm {i}
    \draw [->](u)--(i);
    \draw [->](u)--(j);
  \end{tikzpicture}
  \qquad
               \begin{tikzpicture}[thick,>=stealth,scale=0.6]
              \draw(0,-2) node{\ };
              \draw(0,0) node  {$\Longrightarrow$};
 \end{tikzpicture}
  \qquad
  \begin{tikzpicture}[thick,>=stealth,scale=0.6]
    \draw (0,0.3) node[indef] (u) {}; \draw (u) node {\small $j$};
    \draw (0,-2) node  {\ };  
  \end{tikzpicture}

\caption{\label{fig:R-red} 
The $R(i;j)$-reduction.}
\end{figure}

\item[($\mathbf{H}_0$)]
Let $i$ be a hybrid leaf, let $v_1$ and $v_2$ be its parents and
assume that the other children of these parents are tree leaves $j_1$
and $j_2$, respectively.  The \emph{$H_0(i;j_1,j_2)$ reduction} of $N$
is the quasi-binary TCTC-network $N_{H_0(i;j_1,j_2)}$ obtained by
applying first the reduction $H(i;j_1,j_2)$ to $N$ and then the
reductions $U(j_1)$ and $U(j_2)$ to the resulting network.  The
overall effect is that the hybrid leaf $i$ and the tree leaves $j_1,
j_2$ are removed, together with their incoming arcs, and then the
former parents $v_1,v_2$ of $j_1$ and $j_2$ are labeled with $j_1$ and
$j_2$, respectively; cf.~Fig.~\ref{fig:H0-red}.

\begin{figure}[htb]
\centering
  \begin{tikzpicture}[thick,>=stealth,scale=0.6]
    \draw (0,0) node[hyb] (i) {}; \etqfn{i}
    \draw (-2,0.5) node[indef] (v_1) {}; \etqfn{v_1}
    \draw (-2,-2) node[tre] (j_1) {}; \etqfn{j_1}
    \draw (2,0.5) node[indef] (v_2) {}; \etqfn{v_2}
    \draw (2,-2) node[tre] (j_2) {}; \etqfn{j_2}
    \draw[->](v_1) to (j_1);
    \draw[->](v_1) to  (i);
    \draw[->](v_2) to (j_2);
    \draw[->](v_2) to  (i);    
  \end{tikzpicture} 
  \qquad
               \begin{tikzpicture}[thick,>=stealth,scale=0.6]
              \draw(0,-2) node{\ };
              \draw(0,0.7) node  {$\Longrightarrow$};
 \end{tikzpicture}
\qquad
  \begin{tikzpicture}[thick,>=stealth,scale=0.6]
              \draw(0,-2) node{\ };
    \draw (-2,0.8) node[indef] (j_1) {}; \etqfn{j_1}
    \draw (2,0.8) node[indef] (j_2) {}; \etqfn{j_2}
  \end{tikzpicture}

\caption{\label{fig:H0-red} 
The $H_0(i;j_1,j_2)$-reduction.}
\end{figure}
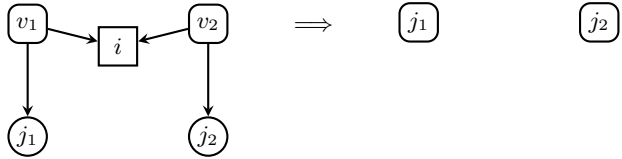

\item[($\mathbf{H}_1$)]
Let $A$ be a hybrid node with only one child $i$, that is a tree node.
Let $v_1$ and $v_2$ be the parents of $A$ and assume that the other
children of these parents are tree leaves $j_1$ and $j_2$,
respectively.  The \emph{$H_1(i;j_1,j_2)$ reduction} of $N$ is the
TCTC-network $N_{H_1(i;j_1,j_2)}$ obtained by applying first the
reduction $U(i)$ to $N$, followed by the reduction $H_0(i;j_1,j_2)$ to
the resulting network.  The overall effect is that the leaf $i$, its
parent $A$ and the leaves $j_1, j_2$ are removed, together with their
incoming arcs, and then the former parents $v_1,v_2$ of $j_1$ and
$j_2$ are labeled with $j_1$ and $j_2$, respectively;
cf.~Fig.~\ref{fig:H1-red}.

\begin{figure}[htb]
\centering
  \begin{tikzpicture}[thick,>=stealth,scale=0.6]
    \draw (0,0) node[hyb] (A) {}; \etqfn{A}
    \draw (0,-2) node[tre] (i) {}; \etqfn{i}
    \draw (-2,0.5) node[indef] (v_1) {}; \etqfn{v_1}
    \draw (-2,-2) node[tre] (j_1) {}; \etqfn{j_1}
    \draw (2,0.5) node[indef] (v_2) {}; \etqfn{v_2}
    \draw (2,-2) node[tre] (j_2) {}; \etqfn{j_2}
    \draw[->](A)--(i);
    \draw[->](v_1) to (j_1);
    \draw[->](v_1) to  (A);
    \draw[->](v_2) to (j_2);
    \draw[->](v_2) to  (A);    
  \end{tikzpicture} 
  \qquad
               \begin{tikzpicture}[thick,>=stealth,scale=0.6]
              \draw(0,-2) node{\ };
              \draw(0,0.7) node  {$\Longrightarrow$};
 \end{tikzpicture}
\qquad
  \begin{tikzpicture}[thick,>=stealth,scale=0.6]
              \draw(0,-2) node{\ };
    \draw (-2,0.8) node[indef] (j_1) {}; \etqfn{j_1}
    \draw (2,0.8) node[indef] (j_2) {}; \etqfn{j_2}
  \end{tikzpicture} 

\caption{\label{fig:H1-red} 
The $H_1(i;j_1,j_2)$-reduction.}
\end{figure}
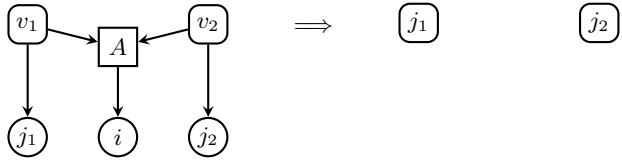
\end{enumerate}
We use $\textrm{H}_0$ and $\textrm{H}_1$ instead of  $\textrm{H}$ and U because, for our purposes in this section, it has to be possible to decide whether or not we can apply a given reduction to a given fully resolved network $N$ from the knowledge of $L(N)$, and this cannot be done for the U reduction, while, as we shall see below, it is possible for $\textrm{H}_0$ and  $\textrm{H}_1$.

$\textrm{H}_0$ reductions cannot be applied to fully resolved phylogenetic networks (they don't have hybrid leaves) and $\textrm{H}_1$ reductions cannot be applied to fully resolved hybridization networks
(they don't have out-degree 1 hybrid nodes). The application of an R or an $\textrm{H}_0$ reduction to a fully resolved TCTC hybridization network is again a fully resolved TCTC hybridization network, and the application of an R or an $\textrm{H}_1$ reduction to a fully resolved TCTC phylogenetic network is again a fully resolved TCTC phylogenetic network. 

We shall call the inverses of the R, H${}_0$ and H${}_1$ reduction
procedures, respectively, the $\textrm{R}^{-1}$, $\textrm{H}_0^{-1}$
and $\textrm{H}_1^{-1}$ \emph{expansions}, and we shall denote them by
$R^{-1}(i;j)$, $H_1^{-1}(i;j_1,j_2)$ and $H_0^{-1}(i;j_1,j_2)$.  More
specifically, for every quasi-binary TCTC-network $N$ with no leaf
labeled $i$:
\begin{itemize}

\item the expansion $R^{-1}(i;j)$ can be applied to $N$ if it has a leaf labeled $j$, and
the resulting network $N_{R^{-1}(i;j)}$ is obtained by unlabeling the
leaf $j$ and adding to it two leaf tree children labeled with $i$ and
$j$;

\item the expansion $H_0^{-1}(i;j_1,j_2)$ can be applied to
$N$ if it has a pair of leaves labeled $j_1,j_2$, and the resulting network $N_{H_0^{-1}(i;j_1,j_2)}$ is obtained
by adding a new hybrid leaf labeled with $i$, and then, for each $l=1,2$,
unlabeling the leaf $j_l$ and adding to it a new tree leaf child labeled
with $j_l$ and an arc to $i$.

\item the expansion $H_1^{-1}(i;j_1,j_2)$  can be applied to
$N$ if it has a pair of leaves labeled $j_1,j_2$, and the resulting network $N_{H_1^{-1}(i;j_1,j_2)}$ is obtained
by adding a new node $A$, a tree leaf child $i$ to it, and then, for
each $l=1,2$, unlabeling the leaf $j_l$ and adding to it a new tree leaf
child labeled with $j_l$ and an arc to $A$.
\end{itemize}
A  $\textrm{R}^{-1}(i;j)$ expansion of a quasi-binary TCTC-network is always a  quasi-binary TCTC-network, but an
$\textrm{H}_0^{-1}(i;j_1,j_2)$ or an
$\textrm{H}_1^{-1}(i;j_1,j_2)$ expansion of a quasi-binary TCTC-network, while still being always quasi-binary and tree child, needs not be time consistent: for instance, 
the leaves $j_1$ and $j_2$ could be a hybrid leaf and a tree sibling of it.
Moreover, we have the following result, which is a direct consequence of Lemma~\ref{lem:iso-red} and we state it for further reference.

\begin{lemma}
\label{lem:iso-red-bin}
Let $N$ and $N'$ be two quasi-binary TCTC-networks.  If $N\cong N'$,
then the result of applying to both $N$ and $N'$ the same R reduction
(respectively, $\textrm{H}_0$ reduction, $\textrm{H}_1$ reduction,
$\textrm{R}^{-1}$ expansion, $\textrm{H}_0^{-1}$ expansion, or
$\textrm{H}_1^{-1}$ expansion) is again two isomorphic networks.

Moreover, if we apply an R reduction (respectively, $\textrm{H}_0$
reduction or $\textrm{H}_1$ reduction) to a quasi-binary TCTC-network
$N$, and then we apply to the resulting network the inverse
$\textrm{R}^{-1}$ expansion (respectively, $\textrm{H}_0^{-1}$
expansion or $\textrm{H}_1^{-1}$ expansion), we obtain a quasi-binary
TCTC-network isomorphic to $N$.  \qed
\end{lemma}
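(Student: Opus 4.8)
The key observation is that the R, $\textrm{H}_0$ and $\textrm{H}_1$ reductions were \emph{defined} as compositions of U, T and H reductions, so that, dually, each of the $\textrm{R}^{-1}$, $\textrm{H}_0^{-1}$ and $\textrm{H}_1^{-1}$ expansions is the composition, taken in the reverse order, of the corresponding elementary expansions: $R^{-1}(i;j)$ is $U^{-1}(j)$ followed by $T^{-1}(i;j)$; $H_0^{-1}(i;j_1,j_2)$ is $U^{-1}(j_2)$ followed by $U^{-1}(j_1)$ and then $H^{-1}(i;j_1,j_2)$; and $H_1^{-1}(i;j_1,j_2)$ is $H_0^{-1}(i;j_1,j_2)$ followed by $U^{-1}(i)$. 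First I would check that these reverse compositions agree, node for node and label for label, with the direct descriptions of $\textrm{R}^{-1}$, $\textrm{H}_0^{-1}$ and $\textrm{H}_1^{-1}$ given just before the lemma (for instance, that in $R^{-1}(i;j)$ the internal node created by $U^{-1}(j)$ is precisely the node to which $T^{-1}(i;j)$ subsequently attaches the new leaf $i$), and that, whenever the composite operation is applicable, every elementary operation in the composition is applicable to the network produced by the operations before it; the quasi-binary hypothesis is used here, e.g.\ to ensure that after the $T(i;j)$ step of an R reduction the former common parent of $i$ and $j$ has $j$ as its only remaining child, so that $U(j)$ may be applied.

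Granting these decompositions, the first assertion follows by iterating the first part of Lemma~\ref{lem:iso-red}: applicability of any of these operations is determined by leaf labels and by the tree/hybrid type of nodes, both of which isomorphisms preserve, so if $N\cong N'$ and the operation is applicable to $N$ it is applicable to $N'$; applying the first elementary operation of the composition to $N$ and $N'$ yields isomorphic networks by Lemma~\ref{lem:iso-red}, and this property is preserved after each subsequent elementary operation, hence the final outputs are isomorphic.

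For the ``moreover'' part, write the reduction $\rho\in\{R(i;j),H_0(i;j_1,j_2),H_1(i;j_1,j_2)\}$ as a sequence $e_1,\dots,e_k$ of elementary reductions applied to $N$ in that order, so that $\rho^{-1}$ is $e_k^{-1},\dots,e_1^{-1}$; then $N_{\rho}$ followed by $\rho^{-1}$ applies, in succession, $e_1,\dots,e_k,e_k^{-1},\dots,e_1^{-1}$ to $N$. I would peel this from the middle outward: by the ``moreover'' clause of Lemma~\ref{lem:iso-red}, the pair $e_k$ then $e_k^{-1}$ yields a network isomorphic to the one obtained after $e_1,\dots,e_{k-1}$; applying $e_{k-1}^{-1}$ to it gives, by the first clause of Lemma~\ref{lem:iso-red} (to transport the isomorphism through $e_{k-1}^{-1}$) followed by its ``moreover'' clause (to collapse $e_{k-1}$ then $e_{k-1}^{-1}$), a network isomorphic to the one obtained after $e_1,\dots,e_{k-2}$; continuing, one arrives at a network isomorphic to $N$. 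Being isomorphic to the quasi-binary TCTC-network $N$, it is itself a quasi-binary TCTC-network, which completes the argument.

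The only genuinely delicate point is the bookkeeping in the first step --- matching the informal descriptions of the composite expansions with the reverse compositions of elementary expansions and checking the intermediate applicability conditions. Once that is settled, everything else is a mechanical application of Lemma~\ref{lem:iso-red}.
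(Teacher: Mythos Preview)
Your proposal is correct and follows exactly the approach the paper intends: the lemma is stated with a \qed\ and the paper says only that it ``is a direct consequence of Lemma~\ref{lem:iso-red},'' relying precisely on the fact that R, $\textrm{H}_0$, $\textrm{H}_1$ are defined as compositions of U, T, H reductions. Your write-up simply unpacks this one-line justification, including the bookkeeping about intermediate applicability (where quasi-binarity is used) that the paper leaves implicit.
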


We have moreover the following result.

\begin{proposition}\label{thm-reduction-possible-bin}
Let $N$ be a quasi-binary TCTC-network with more than one leaf.  Then,
at least one R, $\textrm{H}_0$, or $\textrm{H}_1$ reduction can be
applied to $N$.
\end{proposition}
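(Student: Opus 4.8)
The plan is to follow exactly the pattern of the proof of Proposition~\ref{thm-reduction-possible}: first use Lemma~\ref{lem:key} to single out a node $v$ of one of the three special shapes, and then exploit the fact that $N$ is quasi-binary to pin down the in- and out-degrees of $v$ and of the nodes around it, so that the defining hypotheses of one of the reductions $R$, $\textrm{H}_0$, $\textrm{H}_1$ are literally satisfied.

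Concretely, I would apply Lemma~\ref{lem:key} to obtain a node $v$ satisfying (a), (b) or (c), and argue case by case. If $v$ is an internal tree node with all children tree leaves (case (a)), then, being internal and a tree node, quasi-binariness forces it to have out-degree exactly $2$, so it has two distinct sibling tree leaf children $i,j$, and $R(i;j)$ can be applied (when $v$ is the root this forces $N$ to be the two-leaf tree, which one treats as the base of the recursion). If $v$ is a hybrid internal node all of whose children are tree leaves and all of whose siblings are leaves or hybrid nodes (case (b)), quasi-binariness gives $v$ in-degree exactly $2$, say with parents $v_1,v_2$, and out-degree $1$ or $2$; in the out-degree-$2$ case we are again in the situation of (a) and apply $R(i;j)$ to the two tree leaf children; in the out-degree-$1$ case, writing $i$ for the unique (tree) child, each $v_\ell$ is internal, so by the tree-child condition it has a tree child, which is a sibling of $v$ and hence, by the hypothesis of case (b), a leaf --- a tree leaf $j_\ell$ --- and since $N$ is quasi-binary $v_\ell$ has out-degree exactly $2$, so its children are precisely $v$ and $j_\ell$; a quick check (a tree leaf has a unique parent and $v_1\neq v_2$) shows that $i,j_1,j_2$ are pairwise distinct, so $H_1(i;j_1,j_2)$ can be applied. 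Finally, if $v$ is a hybrid leaf all of whose siblings are leaves or hybrid nodes (case (c)), the same reasoning as in case (b) shows that $v$ has exactly two parents $v_1,v_2$, that each $v_\ell$ has exactly two children, namely $v$ and a tree leaf $j_\ell$, and that $j_1\neq j_2$; hence $H_0(v;j_1,j_2)$ can be applied.

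I do not expect any real obstacle: once Lemma~\ref{lem:key} and the quasi-binariness assumption are on the table, the argument is bookkeeping. The one place that deserves a little care is the middle step of cases (b) and (c), where three ingredients must be combined in the right order --- the tree-child condition to produce a tree child of each $v_\ell$, the sibling hypothesis of Lemma~\ref{lem:key} to force that tree child to be a \emph{leaf}, and quasi-binariness to force it to be the \emph{unique} child of $v_\ell$ other than $v$ --- after which one still has to note that the leaves named in the reduction are pairwise distinct, so that the reduction is genuinely defined and yields a quasi-binary TCTC-network.
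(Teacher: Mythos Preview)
Your proposal is correct and follows essentially the same approach as the paper: both proofs combine Lemma~\ref{lem:key} with the quasi-binary degree constraints and the tree-child condition to force the configuration required by one of the three reductions. The only difference is organizational: the paper first disposes of the case where some internal node has two tree-leaf children (so that $R$ applies), and then observes that, in the remaining case, Lemma~\ref{lem:key} must produce a hybrid node $v$ (a leaf or the parent of a single tree leaf) whose two parents each have exactly one other child, a tree leaf, yielding $H_0$ or $H_1$; you instead walk through cases (a), (b), (c) of Lemma~\ref{lem:key} directly, treating the out-degree-$2$ subcase of (b) as an instance of $R$.
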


\begin{proof}
If $N$ contains some internal node with two tree leaf children $i$ and
$j$, then the reduction $R(i;j)$ can be applied.  If $N$ does not
contain any node with two tree leaf children, then, by Lemma
\ref{lem:key}, it contains a hybrid node $v$ that is either a leaf
(say, labeled with $i$) or it has only one child, which is a tree leaf
(say, labeled with $i$), and such that all siblings of $v$ are leaves or
hybrid nodes.  Now, the quasi-binarity of $N$ and  the tree child condition entail that $v$ has two
parents, that each one of them  has exactly
one child other than $v$, and that this second child is a
tree node.  So, $v$ has exactly two siblings, and they are tree
leaves, say $j_1$ and $j_2$.  Then, the reduction $H_0(i;j_1,j_2)$ (if
$v$ is a leaf) or $H_1(i;j_1,j_2)$ (if $v$ is not a leaf) can be
applied.
 \end{proof}
 
\begin{corollary}\label{cor-reduction-possible-bin}
\begin{enumerate}[(a)]
\item If $N$ is a fully resolved TCTC hybridization network with more
than one leaf, then, at least one R or $\textrm{H}_0$ reduction can be
applied to it.

\item If $N$ is a fully resolved TCTC phylogenetic network with more
than one leaf, then, at least one R or $\textrm{H}_1$ reduction can be
applied to it.\qed
\end{enumerate}
\end{corollary}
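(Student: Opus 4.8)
The plan is to derive the corollary directly from Proposition~\ref{thm-reduction-possible-bin} by noting that each of the two classes under consideration is quasi-binary, and that within each class one of the three reduction types listed in that proposition can never be applied for purely structural reasons.

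First I would observe that a fully resolved TCTC hybridization network has all its nodes of types $(0,2)$, $(1,0)$, $(1,2)$, $(2,0)$, or $(2,2)$, so in particular it is quasi-binary; hence Proposition~\ref{thm-reduction-possible-bin}, applied to a network with more than one leaf, guarantees that at least one R, $\textrm{H}_0$, or $\textrm{H}_1$ reduction can be applied to it. But an $\textrm{H}_1$ reduction needs a hybrid node of out-degree~$1$ (a node of type $(2,1)$), and such nodes are excluded from a fully resolved hybridization network. Thus the only remaining possibilities are an R or an $\textrm{H}_0$ reduction, which proves part~(a).

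For part~(b), a fully resolved TCTC phylogenetic network has all its nodes of types $(0,2)$, $(1,0)$, $(1,2)$, or $(2,1)$, so it too is quasi-binary and Proposition~\ref{thm-reduction-possible-bin} again applies. This time an $\textrm{H}_0$ reduction requires a hybrid leaf, i.e.\ a node of type $(2,0)$; but in a phylogenetic network every hybrid node has out-degree exactly~$1$, so no hybrid leaf exists. Hence only an R or an $\textrm{H}_1$ reduction is possible, giving~(b).

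I do not expect any genuine obstacle here: the argument is just bookkeeping with the type restrictions defining the two subclasses, and the two nonapplicability facts ($\textrm{H}_1$ impossible on fully resolved hybridization networks, $\textrm{H}_0$ impossible on fully resolved phylogenetic networks) have already been stated in the discussion preceding Lemma~\ref{lem:iso-red-bin} and can simply be invoked. The only mildly delicate point, if one wants a self-contained corollary, is to note that R and $\textrm{H}_0$ (respectively R and $\textrm{H}_1$) keep one inside the relevant subclass, but this is again precisely the remark made just before Lemma~\ref{lem:iso-red-bin}, so it needs only a citation rather than a new argument.
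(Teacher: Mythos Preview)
Your argument is correct and is precisely the intended one: the corollary is marked \qed\ in the paper because it follows immediately from Proposition~\ref{thm-reduction-possible-bin} together with the observation (made just before Lemma~\ref{lem:iso-red-bin}) that $\textrm{H}_1$ is inapplicable to fully resolved hybridization networks and $\textrm{H}_0$ is inapplicable to fully resolved phylogenetic networks. Your final remark about closure of the subclasses under the remaining reductions is not actually needed for the corollary as stated, though it is of course what makes the corollary useful later.
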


We shall prove now that the application conditions for the reductions
introduced above can be read from the LCSA-path lengths vector of a
fully resolved TCTC-network and that they modify in a specific way
the LCSA-path lengths of the network which they are applied to.  This
will entail that if two fully resolved (hybridization or phylogenetic) TCTC-networks have the same
LCSA-path lengths vectors, then the same reductions can be applied to
both networks and the resulting fully resolved
 TCTC-networks still have
the same LCSA-path lengths vectors.  This will be the basis of the proof by induction on the number of leaves that two TCTC hybridization or phylogenetic networks with  the same
LCSA-path lengths vectors are always isomorphic.

\begin{lemma}
\label{lem:qb-2}
Let $i,j$ be two leaves of a quasi-binary TCTC-network $N$.  Then, $i$
and $j$ are siblings if, and only if, $L_N(i,j)=2$.
\end{lemma}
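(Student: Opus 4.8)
The plan is to prove both implications directly from the definition of the LCSA $[i,j]$ as the element of $CSA(i,j)$ of minimum height, invoking quasi-binarity and the tree-child condition only where they are genuinely needed, which (as I note at the end) is the forward implication.

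For the implication "$i,j$ siblings $\Rightarrow L_N(i,j)=2$", I would take a common parent $u$ of $i$ and $j$. Since $u$ is internal, quasi-binarity forces $\deg_{out}(u)=2$ (out-degree $1$ would force $i=j$), so the children of $u$ are exactly $i$ and $j$; the tree-child condition then guarantees that one of them, say $i$, is a tree node, hence a tree leaf. By Lemma~\ref{lem:hyb-nostr}, $i$ is a strict descendant of $u$, so $u\in CSA(i,j)$. Now, $u$ being the unique parent of the tree node $i$, every common ancestor of $i$ and $j$ is either $u$ itself or a proper ancestor of $u$ (a common ancestor cannot be $i$, since $i$ is a leaf and $j\neq i$), and by acyclicity every proper ancestor of $u$ has height strictly larger than $h(u)$; therefore $u$ is the unique element of $CSA(i,j)$ of minimum height, that is, $[i,j]=u$. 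Consequently $\ell_N(i,j)$ and $\ell_N(j,i)$ are the lengths of the arcs $(u,i)$ and $(u,j)$, both equal to $1$, and $L_N(i,j)=2$.

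For the converse, suppose $L_N(i,j)=2$. Since $i$ is a leaf and $j\neq i$, the leaf $i$ is not an ancestor of $j$, so $[i,j]\neq i$ and $\ell_N(i,j)\geq 1$; by symmetry $\ell_N(j,i)\geq 1$. As $\ell_N(i,j)+\ell_N(j,i)=2$, both distances equal $1$, so there is an arc from $[i,j]$ to $i$ and an arc from $[i,j]$ to $j$; hence $[i,j]$ is a common parent of $i$ and $j$, and they are siblings. The only point requiring care is the forward implication, where one must combine quasi-binarity and the tree-child condition to place the common parent inside $CSA(i,j)$ — this is exactly why the statement is restricted to quasi-binary TCTC-networks rather than arbitrary DAGs — and then use the height stratification to identify that common parent with the LCSA; the remaining steps are routine.
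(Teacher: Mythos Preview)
Your proof is correct and follows essentially the same approach as the paper: in the direction ``siblings $\Rightarrow L_N(i,j)=2$'' you use quasi-binarity to force $i,j$ to be the only children of their common parent $u$, the tree-child condition to make one of them a tree node, and then identify $u$ with $[i,j]$; in the other direction you split $L_N(i,j)=2$ as $1+1$ and read off the common parent. The only differences are expository: you justify $\ell_N(i,j),\ell_N(j,i)\geq 1$ explicitly (the paper leaves this implicit), you cite Lemma~\ref{lem:hyb-nostr} rather than restating its content, and you argue that $u=[i,j]$ via the minimum-height definition rather than via the ``descendant of every element of $CSA(i,j)$'' characterization.
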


\begin{proof}
If $L_N(i,j)=2$, then the paths $[i,j]\pathgr i$ and $[i,j]\pathgr j$
have length 1, and therefore $[i,j]$ is a parent of $i$ and $j$.
Conversely, if $i$ and $j$ are siblings and $u$ is a parent in
common of them, then, by the quasi-binarity of $N$, they are the only
children of $u$, and by the tree-child condition, one of them, say
$i$, is a tree node.  But then, $u$ is a strict ancestor of $i$, an
ancestor of $j$, and no proper descendant of $u$ is an ancestor
of both $i$ and $j$.  This implies that $u=[i,j]$ and hence that
$L_N(i,j)=2$.
\end{proof}

\begin{lemma}
\label{lem:qb-R}
Let $N$ be a quasi-binary TCTC-network on a set $S$ of taxa.

\begin{enumerate}[(1)]
\item The reduction $R(i;j)$ can be applied to $N$ if, and only if,
$L_N(i,j)=2$ and, for every $k\in S\setminus\{i,j\}$,
$L_N(i,k)=L_N(j,k)$.

\item If the reduction $R(i;j)$ can be applied to $N$, then
$$
\begin{array}{l}
L_{N_{R(i;j)}}(j,k)=L_N(j,k)-1\quad\mbox{for every $k\in S\setminus\{i,j\}$}\\
L_{N_{R(i;j)}}(k,l)=L_N(k,l)\quad \mbox{for every $k,l\in S\setminus\{i,j\}$}
\end{array}
$$
\end{enumerate}
\end{lemma}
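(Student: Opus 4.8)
The plan is to read both statements off the structure of $N$, using Lemmas~\ref{lem:str-int}, \ref{lem:hyb-nostr}, \ref{lem:LCSA} and~\ref{lem:qb-2}; throughout I write $d(x,y)$ for the distance from a node $x$ to a descendant $y$. Whenever $i,j$ are siblings I let $u$ be a common parent of them; by quasi-binarity $u$ then has exactly the two children $i,j$, so the only descendants of $u$ are $u,i,j$ and $u$ is an ancestor of no leaf other than $i$ and $j$. For the ``only if'' part of~(1), if $R(i;j)$ is applicable then $i,j$ are sibling tree leaves with common parent $u$, so $L_N(i,j)=2$ by Lemma~\ref{lem:qb-2}; and for $k\in S\setminus\{i,j\}$ neither $i$ nor $j$ is a descendant of $k$, so Lemma~\ref{lem:LCSA} applied to $i$ and to $j$ gives $[i,k]=[u,k]=[j,k]=:z$. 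Since $u$ is the only parent of each of $i,j$, every path from $z$ to $i$, resp.\ to $j$, passes through $u$, whence $\ell_N(i,k)=d(z,u)+1=\ell_N(j,k)$ and $\ell_N(k,i)=d(z,k)=\ell_N(k,j)$, and adding gives $L_N(i,k)=L_N(j,k)$.

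For the ``if'' part of~(1), assume $L_N(i,j)=2$ and $L_N(i,k)=L_N(j,k)$ for all $k\in S\setminus\{i,j\}$. By Lemma~\ref{lem:qb-2} we are in the situation above, and by the tree-child condition one of $i,j$, say $i$, is a tree node; I must exclude that $j$ is hybrid. If $j$ were hybrid, then by quasi-binarity it would have a unique second parent $v\neq u$; both arcs into $j$ are hybridization arcs, so $\tau(u)=\tau(v)=\tau(j)$ for any temporal representation $\tau$, and the tree child of $v$ provided by the tree-child condition is not $j$, so the only hybrid child of $v$ is the leaf $j$ --- hence no nontrivial path leaves $v$ along which $\tau$ is constant except the single arc $v\to j$. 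Pick a tree-descendant leaf $k$ of $v$ (it exists, $v$ being internal); then $k$ is a tree node, so $k\neq j$, and $k\neq i$, since otherwise $v$ would be a strict ancestor of $i$ and hence of $u$, and then the argument from the proof of Lemma~\ref{lem:hyb-nostr} (combining Lemmas~\ref{lem:str-int} and~\ref{lem:hyb-nostr}) would produce a hybrid strict descendant of $v$. Now $v\in CSA(j,k)$, so $w:=[j,k]$ is a descendant of $v$; being also an ancestor of $j$ it satisfies $\tau(v)\leq\tau(w)\leq\tau(j)=\tau(v)$, and since the only constant-$\tau$ path out of $v$ dead-ends at the leaf $j$ we must have $w=v$, so $\ell_N(j,k)=d(v,j)=1$ and $L_N(j,k)=1+d(v,k)$. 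On the other hand $[i,k]=[u,k]=:z$ by Lemma~\ref{lem:LCSA}, and $z\neq u$ because $u$ is not an ancestor of $k$, so $\ell_N(i,k)=d(z,u)+1\geq 2$; moreover $z$ cannot be a proper descendant of $v$ (it is an ancestor of $u$, so $\tau(z)\leq\tau(v)$, and then the same dead-end argument applies). If $z=v$ this already gives $L_N(i,k)=d(v,u)+1+d(v,k)\geq 2+d(v,k)$; if $z$ is not a descendant of $v$ at all, then every path $z\pathgr k$ must meet the strict ancestor $v$ of $k$ (it cannot meet $v$ before $z$), so $d(z,k)=d(z,v)+d(v,k)\geq 1+d(v,k)$ and $L_N(i,k)\geq 3+d(v,k)$. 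In every case $L_N(i,k)>L_N(j,k)$, contradicting the hypothesis; hence $j$ is a tree node and $R(i;j)$ applies.

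For~(2), recall that $N_{R(i;j)}$ is obtained from $N$ by deleting the leaves $i,j$ and their incoming arcs and relabelling $u$ as $j$. Since $i,j$ are leaves, they lie on no path to $u$ nor to any leaf $k\neq i,j$, so the ancestor and strict-ancestor relations, the sets $CSA(\cdot,\cdot)$, and all distances among the surviving nodes are the same in $N$ and in $N_{R(i;j)}$ (heights may drop, but $[x,y]$ is the unique element of $CSA(x,y)$ descending from all the others, so only that set matters). Hence for $k,l\in S\setminus\{i,j\}$ the node $[k,l]$ and the distances realizing $\ell_N(k,l),\ell_N(l,k)$ are unchanged, giving $L_{N_{R(i;j)}}(k,l)=L_N(k,l)$. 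For $k\in S\setminus\{i,j\}$, the leaf labelled $j$ in $N_{R(i;j)}$ is the node $u$; by the same invariance $[u,k]=:z$ and the distances $d(z,u),d(z,k)$ are the same in both networks, and whereas in $N$ Lemma~\ref{lem:LCSA} gives $[j,k]=z$ and $L_N(j,k)=\bigl(d(z,u)+1\bigr)+d(z,k)$, in $N_{R(i;j)}$ one has $L_{N_{R(i;j)}}(j,k)=d(z,u)+d(z,k)=L_N(j,k)-1$.

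The one genuinely delicate step is the ``if'' direction of~(1), namely excluding a hybrid $j$: the combinatorial bound $L_N(i,k)=L_N(j,k)$ is extracted by choosing $k$ to be a tree-descendant leaf of the second parent $v$ of $j$, and making this work requires pinning down $[j,k]=v$ through the temporal representation together with Lemmas~\ref{lem:str-int} and~\ref{lem:hyb-nostr} (the ``only hybrid child of $v$ is the leaf $j$'' observation is what closes the argument). Everything else is straightforward bookkeeping on paths and distances.
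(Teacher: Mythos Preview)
Your proof is correct and follows essentially the same strategy as the paper's: Lemma~\ref{lem:qb-2} gives the sibling structure, Lemma~\ref{lem:LCSA} reduces both LCSAs to $[u,k]$ in the ``only if'' direction, and the ``if'' direction is handled by assuming one leaf hybrid, taking a tree-descendant leaf $k$ of its second parent $v$, and using time consistency to pin down the relevant LCSA and force $L_N(i,k)\neq L_N(j,k)$. Your version simply swaps the roles of $i$ and $j$ relative to the paper (you make $j$ the hypothetical hybrid leaf) and is more explicit in several places the paper leaves terse: you verify $k\notin\{i,j\}$, you isolate $[j,k]=v$ via the ``constant-$\tau$ dead-end'' observation, and you do a clean case split on the position of $z=[u,k]$ relative to $v$, whereas the paper compresses all of this into one sentence. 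One small simplification: your exclusion of $k=i$ can be done more directly --- if $i$ were a tree descendant of $v$, the unique tree path $v\pathgr i$ would pass through $u$ (the only parent of $i$) as an intermediate tree node, forcing $\tau(v)<\tau(u)=\tau(v)$ --- rather than routing through strict ancestry and Lemma~\ref{lem:hyb-nostr}; but what you wrote is also valid.
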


\begin{proof}
As far as (1) goes, $R(i;j)$ can be applied to $N$ if, and only if,
the leaves $i$ and $j$ are siblings and of tree type.  Now, if $i$ and
$j$ are two tree sibling leaves and $u$ is their parent, then on the
one hand, $L_N(i,j)=2$ by Lemma \ref{lem:qb-2}, and on the other hand, since, by Lemma
\ref{lem:LCSA}, $[i,k]=[u,k]=[j,k]$ for every leaf $k\neq i,j$, we have
that 
$$
\begin{array}{l}
\ell_N(i,k)=\ell_N(j,k)=1+\mbox{distance from $[u,k]$ to $u$}\\
\ell_{N}(k,i)=\ell_{N}(k,j)=\mbox{distance from $[u,k]$ to $k$}
\end{array}
$$
and therefore
$L_N(i,k)=L_N(j,k)$ for every $k\in S\setminus\{i,j\}$.

Conversely, assume that $L_N(i,j)=2$ and that $L_N(i,k)=L_N(j,k)$ for
every $k\in S\setminus\{i,j\}$.  The fact that $L_N(i,j)=2$ implies
that $i$ and $j$ share a parent $u$.  If one of these leaves, say $i$,
is hybrid, then the tree child condition implies that the other, $j$,
is of tree type. Let now $v$ be the other parent of $i$ and 
 $k$ a tree descendant leaf of $v$, and let $h$ be
the length of the unique path $v\pathgr k$.  Then $v$ is a strict
ancestor of $k$ and an ancestor of $i$, and no proper  tree descendant
of $v$ can possibly be an ancestor of $i$: otherwise, there would exist a path from a proper tree descendant  of $v$ to $u$, and then the time consistency property would forbid $u$ and $v$ to have a hybrid child in common.
Therefore $v=[i,k]$ and
$L_N(i,k)=h+1$.  Now, the only possibility for the equality $L_N(j,k)=h+1$ to hold is  that some
intermediate node in the path $v\pathgr k$ is an ancestor of the only
parent $u$ of $j$, which, as we have just seen, is impossible.  This leads to a contradiction, which shows that
$i$ and $j$ are both tree sibling leaves.  This finishes the proof of
(1).

As far as (2) goes, in $N_{R(i;j)}$ we remove the leaf $i$ and we
replace the leaf $j$ by its parent.  By Lemma \ref{lem:LCSA}, this
does not modify the LCSA $[j,k]$ of $j$ and any other remaining leaf
$k$, and since we have shortened in 1 any path ending in $j$, we
deduce that $L_{N_{R(i;j)}}(j,k)=L_N(j,k)-1$ for every $k\in
S\setminus\{i,j\}$.  On the other hand, for every $k,l\in
S\setminus\{i,j\}$, the reduction  $R(i;j)$ has
affected neither the LCSA $[k,l]$ of $k$ and $l$, nor the paths $[k,l]\pathgr k$ or $[k,l]\pathgr l$, which implies that $L_{N_{R(i;j)}}(k,l)=L_N(k,l)$ 
\end{proof}


\begin{lemma}
\label{lem:qb-H0}
Let $N$ be a fully resolved  TCTC hybridization network on a set $S$ of taxa.

\begin{enumerate}[(1)]
\item The reduction $H_0(i;j_1,j_2)$ can be applied to $N$ if, and only
if, $L_N(i,j_1)=L_N(i,j_2)=2$.

\item If the reduction $H_0(i;j_1,j_2)$ can be applied to $N$, then
$$
\begin{array}{l}
L_{N_{H_0(i;j_1,j_2)}}(j_1,j_2)=L_N(j_1,j_2)-2\\
L_{N_{H_0(i;j_1,j_2)}}(j_1,k)=L_N(j_1,k)-1\quad \mbox{for every $k\in S\setminus\{i,j_1,j_2\}$}\\
L_{N_{H_0(i;j_1,j_2)}}(j_2,k)=L_N(j_2,k)-1\quad \mbox{for every $k\in S\setminus\{i,j_1,j_2\}$}\\
L_{N_{H_0(i;j_1,j_2)}}(k,l)=L_N(k,l)\quad \mbox{for every $k,l\in S\setminus\{i,j_1,j_2\}$}
\end{array}
$$
\end{enumerate}
\end{lemma}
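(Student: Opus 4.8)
The plan is to treat the two parts separately. Part~(1) follows from Lemma~\ref{lem:qb-2} together with a short case analysis using full resolution and the tree-child condition; part~(2) follows by decomposing the $H_0$ reduction as an $H$ reduction followed by two $U$ reductions and tracking LCSAs and path lengths through each step.

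For the ``only if'' direction of (1): if $H_0(i;j_1,j_2)$ can be applied, then $i$ is a hybrid leaf whose parents $v_1,v_2$ have, as their remaining children, the tree leaves $j_1,j_2$; since $i$ and $j_\ell$ are then siblings, Lemma~\ref{lem:qb-2} gives $L_N(i,j_\ell)=2$. For the ``if'' direction, let $i,j_1,j_2$ be distinct leaves with $L_N(i,j_1)=L_N(i,j_2)=2$. By Lemma~\ref{lem:qb-2}, $i$ shares a parent with $j_1$ and a parent with $j_2$. If $i$ were a tree leaf, its unique parent would have the three distinct children $i,j_1,j_2$, which is impossible in a quasi-binary network; hence $i$ is hybrid and, by full resolution, has exactly two parents $v_1,v_2$. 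The parent shared with $j_1$ and the one shared with $j_2$ cannot coincide (otherwise that node would again have out-degree $\ge 3$), so after renaming $v_1$ is the former and $v_2$ the latter; full resolution then forces the children of $v_\ell$ to be exactly $i$ and $j_\ell$, and since $i$ is hybrid the tree-child condition forces $j_\ell$ to be a tree node. So the preconditions of $H_0(i;j_1,j_2)$ hold.

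For (2), recall that $N_{H_0(i;j_1,j_2)}$ is by definition obtained from $N$ by applying in turn $H(i;j_1,j_2)$, $U(j_1)$ and $U(j_2)$: the first deletes the hybrid leaf $i$; each subsequent $U$ deletes one of $j_1,j_2$ and relabels by that label the parent $v_1$, resp.\ $v_2$, which has by then become a leaf. Two observations drive the computation. First, deleting a leaf $x$ from a hybridization network does not change, for any pair of surviving leaves (nor for the auxiliary pairs $(v_1,k)$, $(v_2,k)$, $(v_1,v_2)$ that occur below), the corresponding set $CSA$, the descendant order on the surviving nodes, or the two relevant shortest paths: indeed $x$ is a leaf and hence is never an interior node of a path, and a direct check shows that none of $i,j_1,j_2,v_1,v_2$ can be a common ancestor of the leaves at hand; consequently the LCSA (characterized as the unique element of the corresponding $CSA$ that is a descendant of all the others) and the path length $L$ are unchanged. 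Second, in a $U$ step deleting a tree leaf $j$ with parent $u$, Lemma~\ref{lem:LCSA} gives $[j,k]=[u,k]$ for every other surviving leaf $k$, the unique path $[j,k]\pathgr j$ passes through $u$, and after the step $u$ is the leaf labelled $j$; hence $\ell(j,k)$ drops by exactly $1$ while $\ell(k,j)$ is unchanged, so $L(j,k)$ drops by $1$.

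Composing these effects along $N\to N_{H(i;j_1,j_2)}\to\cdots\to N_{H_0(i;j_1,j_2)}$ then yields the four displayed identities: the $H$ step changes nothing among the surviving leaves $j_1,j_2,k,l$; the $U(j_1)$ step (in which $j_2$ is still an original leaf) decrements $L(j_1,j_2)$ and each $L(j_1,k)$ by $1$; the $U(j_2)$ step (in which $j_1$ now names the node $v_1$) decrements $L(j_1,j_2)$ once more and each $L(j_2,k)$ by $1$; and no $L(k,l)$ with $k,l\in S\setminus\{i,j_1,j_2\}$ is ever touched. I expect the only delicate point to be the invariance of the LCSA under these reductions: the heights of $v_1,v_2$, and possibly of some of their ancestors, do drop, so one must argue via the order-theoretic characterization of the LCSA rather than via minimum height, checking that no surviving ancestor of the leaves involved is ever deleted. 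The rest is routine path-length bookkeeping.
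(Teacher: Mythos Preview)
Your proof is correct and follows essentially the same approach as the paper's. Part~(1) is identical in spirit: Lemma~\ref{lem:qb-2} gives the sibling relation, and the out-degree bound together with the tree-child condition pin down the types of $i,j_1,j_2$. In part~(2) the paper is terser---it simply notes that replacing $j_1,j_2$ by their parents does not affect any LCSA (by Lemma~\ref{lem:LCSA}) and shortens the relevant paths by one---whereas you spell out the decomposition $H\circ U(j_1)\circ U(j_2)$ explicitly and are more careful about why the LCSA is preserved after deleting a leaf (using its characterization as the minimum of $CSA$ in the descendant order rather than the minimum-height definition). That extra care is warranted, and your handling of it is sound.
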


\begin{proof}
As far as (1) goes, the reduction $H_0(i;j_1,j_2)$ can be applied to
$N$ if, and only if, $i$ is a hybrid sibling of the tree leaves $j_1$
and $j_2$.  If this last condition happens, then $L_N(i,j_1)=2$ and
$L_N(i,j_2)=2$ by Lemma \ref{lem:qb-2}.  Conversely,
$L_N(i,j_1)=L_N(i,j_2)=2$ implies that $i,j_1$ and $i,j_2$ are pairs
of sibling leaves.  Since no node of $N$ can have more than 2 children, and
at least one of its children must be of tree type, this implies that
$i$ is a hybrid node (with two different parents), and $j_1$ and $j_2$
are tree nodes. 

As far as (2) goes, the tree leaves $j_1$ and $j_2$ are replaced by
their parents.  By Lemma \ref{lem:qb-2}, this does not affect any LCSA
and it only shortens in 1 the paths ending in $j_1$ or $j_2$.  Thus,
the $H_0(i;j_1,j_2)$ reduction does not affect the LCSA-path length between any
pair of remaining leaves other than $j_1$ and $j_2$, it shortens in 1
the LCSA-path length between $j_1$ or $j_2$ and any remaining leaf
other than $j_1$ or $j_2$, and it shortens in 2 the LCSA-path length
between $j_1$ and $j_2$.
\end{proof}

\begin{lemma}
\label{lem:qb-H1}
Let $N$ be a fully resolved  TCTC phylogenetic network on a set $S$ of taxa.

\begin{enumerate}[(1)]
\item The reduction $H_1(i;j_1,j_2)$ can be applied to $N$ if, and only
if,
\begin{itemize}
\item $L_N(i,j_1)=L_N(i,j_2)=3$, 
\item $L_N(j_1,j_2)\geq 4$, 
\item  if
$L_N(j_1,j_2)=4$, then $L_N(j_1,k)=L_N(j_2,k)$ for every $k\in
S\setminus\{j_1,j_2,i\}$.
\end{itemize}

\item If the reduction $H_1(i;j_1,j_2)$ can be applied to $N$, then
$$
\begin{array}{l}
L_{N_{H_1(i;j_1,j_2)}}(j_1,j_2)=L_N(j_1,j_2)-2\\
L_{N_{H_1(i;j_1,j_2)}}(j_1,k)=L_N(j_1,k)-1\ \mbox{for every $k\in S\setminus\{i,j_1,j_2\}$}\\
L_{N_{H_1(i;j_1,j_2)}}(j_2,k)=L_N(j_2,k)-1\ \mbox{for every $k\in S\setminus\{i,j_1,j_2\}$}\\
L_{N_{H_1(i;j_1,j_2)}}(k,l)=L_N(k,l)\ \mbox{for every $k\in S\setminus\{i,j_1,j_2\}$}
\end{array}
$$
\end{enumerate}
\end{lemma}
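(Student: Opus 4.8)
The plan is to prove both parts by unwinding the definition of the $H_1(i;j_1,j_2)$ reduction, which is the composite ``apply $U(i)$, then $H_0(i;j_1,j_2)$.'' Concretely, $H_1(i;j_1,j_2)$ can be applied to a fully resolved TCTC phylogenetic network $N$ if and only if the unique parent $A$ of the tree leaf $i$ is a hybrid node of out-degree $1$ (hence, $N$ being fully resolved, of in-degree exactly $2$), and the two parents $v_1,v_2$ of $A$ each have a second child that is a tree leaf, namely $j_1$ and $j_2$. So the argument amounts to translating this local picture into the (in)equalities in the statement, and conversely recovering the picture from the three hypotheses. Throughout I will write $\mathrm{dist}(x,y)$ for the distance from $x$ to a descendant $y$, and I will use freely Lemmas~\ref{lem:str-int}, \ref{lem:LCSA}, \ref{lem:hyb-nostr}, \ref{lem:qb-2}, the characterisation of $[u,v]$ as the unique element of $CSA(u,v)$ that is a descendant of all the others, and the facts from \cite[Lem.~2, Cor.~4]{cardona.ea:07a} that every internal node of a tree-child network has a tree-descendant leaf and that tree descendants are strict descendants reached by a unique path.

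For the ``only if'' part of (1), assume the local picture above. Lemma~\ref{lem:LCSA} gives $[i,j_l]=[A,j_l]$, and since $v_l$ is a common ancestor of $A$ and $j_l$, a strict ancestor of the tree leaf $j_l$, while no proper descendant of $v_l$ is a common ancestor of $A$ and $j_l$, we get $[i,j_l]=v_l$; as $\mathrm{dist}(v_l,i)=2$ (through $A$) and $\mathrm{dist}(v_l,j_l)=1$, this gives $L_N(i,j_l)=3$. Two applications of Lemma~\ref{lem:LCSA} give $[j_1,j_2]=[v_1,v_2]=:w$. Since $v_1,v_2$ are the two parents of the hybrid $A$ they share their temporal value, so any path between them would be a hybridisation path; such a path leaving $v_1$ would be forced into $A$ and then get stuck, the only child $i$ of $A$ being a tree node. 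Hence neither of $v_1,v_2$ is an ancestor of the other, so $w\neq v_1,v_2$; and since $v_l$ is the only parent of $j_l$, we get $\ell_N(j_1,j_2),\ell_N(j_2,j_1)\geq 2$, i.e. $L_N(j_1,j_2)\geq 4$. If equality holds, then $w$ is a common parent of $v_1$ and $v_2$; each $v_l$ has out-degree $2$, hence is a tree node with unique parent $w$, and the only leaves below $v_l$ are $j_l$ and $i$. Thus for $k\in S\setminus\{i,j_1,j_2\}$, Lemma~\ref{lem:LCSA} applied twice gives $[j_l,k]=[v_l,k]=[w,k]=:z$, independent of $l$, while $\ell_N(j_l,k)=\mathrm{dist}(z,w)+2$ and $\ell_N(k,j_l)=\mathrm{dist}(z,k)$ are independent of $l$ too, so $L_N(j_1,k)=L_N(j_2,k)$.

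The ``if'' part of (1) is the heart of the proof. Write $A$ for the unique parent of the tree leaf $i$. From $L_N(i,j_1)=3$, $[i,j_1]=[A,j_1]$ and $\ell_N(i,j_1)=\mathrm{dist}([i,j_1],A)+1\geq 1$, $\ell_N(j_1,i)\geq 1$, one sees that exactly one of the following holds: \textbf{(a)} $[i,j_1]=A$, and then $A$ is a tree node with a second child $b$ of which $j_1$ is a child; or \textbf{(b)} $[i,j_1]$ is a common parent of $A$ and $j_1$, which is then a tree node of out-degree $2$ with children exactly $A$ and $j_1$; and the same dichotomy holds for $j_2$. First I would show $A$ is a hybrid node by ruling out the tree case: if $A$ is a tree node, it has a unique parent $A'$ and a unique second child $b$; if $j_1,j_2$ both fall under (a) they are both children of $b$, forcing $L_N(j_1,j_2)=2$; if both fall under (b), then $A'$ has children $\{A,j_1\}=\{A,j_2\}$, forcing $j_1=j_2$; and in the mixed case (say $j_1$ under (a), $j_2$ under (b)) one has $j_1$ a child of $b$, $A'=[i,j_2]$ the parent of $A$ with children exactly $A$ and $j_2$, and a computation with Lemma~\ref{lem:LCSA} gives $[j_1,j_2]=A'$ with $\mathrm{dist}(A',j_1)=3$, $\mathrm{dist}(A',j_2)=1$, so $L_N(j_1,j_2)=4$ and the third hypothesis applies; one then derives a contradiction by exhibiting $k\in S\setminus\{i,j_1,j_2\}$ with $L_N(j_1,k)\neq L_N(j_2,k)$ --- if $b$ has a further child, a tree-descendant leaf of it works, since $[j_1,k]=b$ while $[j_2,k]=A'$ lies two arcs higher, so the two path lengths differ; if instead $b$ is hybrid with second parent $p$, a tree-descendant leaf $k$ of $p$ works, time-consistency ($\tau(A')<\tau(A)=\tau(p)$) forcing $k\notin\{i,j_1,j_2\}$ and the computation of $[j_1,k]=[b,k]=p$ versus $[j_2,k]=[A',k]$ giving $L_N(j_1,k)\neq L_N(j_2,k)$. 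This last computation, comparing the two least common semi-strict ancestors, is the delicate point of the whole argument. Once $A$ is known to be a hybrid node, it has in-degree $2$ and out-degree $1$, so case (a) is impossible for either $j_l$; thus both fall under (b), giving parents $v_1,v_2$ of $A$ with $v_l$ having children exactly $A$ and $j_l$; and $v_1\neq v_2$ (else $j_1=j_2$) are then precisely the two parents of $A$, so we are in the $H_1$-configuration and the reduction applies.

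Part (2) follows by reading off the effect of $H_1(i;j_1,j_2)$ directly: it deletes the leaves $i,j_1,j_2$ and the node $A$ and relabels $v_l$ as $j_l$; since $j_1,j_2,i$ are leaves and the only descendant of $A$ is $i$, ancestry among surviving nodes is unchanged, so $[j_l,k]$ and $[j_1,j_2]$ are the same (as nodes) before and after, and by Lemma~\ref{lem:LCSA} they equal $[v_l,k]_N$ and $[v_1,v_2]_N$ respectively. Since $v_l$ is the unique parent of $j_l$ in $N$, replacing $j_l$ by $v_l$ shortens by exactly $1$ every path that ends at $j_l$ and affects nothing else, which yields all the displayed identities --- the computation being identical to that in the proof of Lemma~\ref{lem:qb-H0}(2).
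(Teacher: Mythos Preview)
Your proposal is correct and follows essentially the same strategy as the paper. The paper organizes the converse in (1) by letting $x,y,z$ be the parents of $j_1,j_2,i$ and doing a case analysis on the directions of the forced arcs $x\leftrightarrow z$ and $y\leftrightarrow z$, whereas you center the analysis on whether $A$ is a tree or hybrid node; but the resulting sub-cases match up, and the key step --- producing a leaf $k$ with $L_N(j_1,k)\neq L_N(j_2,k)$ in the degenerate configuration, splitting according to whether the relevant neighbouring node is tree or hybrid and using time consistency in the hybrid case --- is the same in both arguments.
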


\begin{proof}
As far as (1) goes, the reduction $H_1(i;j_1,j_2)$ can be applied to
$N$ if, and only if, $j_1$ and $j_2$ are tree leaves that are not siblings and they share a sibling hybrid node that has  the tree leaf $i$ as its only child.
Now, if this application condition for $H_1(i;j_1,j_2)$ is satisfied, then $L_N(i,j_1)=3$,
because the parent of $j_1$ is an ancestor of $i$, a strict ancestor
of $j_1$, and clearly no proper descendant of it is an ancestor
of $i$ and $j_1$; by a similar reason, $L_N(i,j_2)=3$.  Moreover, since $j_1$
and $j_2$ are not sibling, $L_N(j_1,j_2)\geq 3$.  But if
$L_N(j_1,j_2)= 3$, then there would exist an arc from the parent of
$j_1$ to the parent of $j_2$, or vice versa, which would entail a node
of out-degree 3 that cannot exist in the fully resolved network $N$.
Therefore, $L_N(j_1,j_2)\geq 4$.  Finally, if $L_N(j_1,j_2)=4$, this
means that the parents $x$ and $y$ of $j_1$ and $j_2$ (that are tree
nodes, because they have out-degree 2 and $N$ is a phylogenetic network) are sibling: let $u$ be their
parent in common.  In this case, no leaf other than $j_1,j_2,i$ is a
descendant of $u$, and therefore, for every $k\in S\setminus\{j_1,j_2,i\}$,  
$$
[j_1,k]=[x,k]=[u,k]=[y,k]=[j_2,k]
$$
by Lemma \ref{lem:key},
and thus
$$
\begin{array}{l}
\ell_N(j_1,k)=\ell_N(j_2,k)=2+\mbox{distance from $[u,k]$ to $u$}\\
\ell_{N}(k,j_1)=\ell_{N}(k,j_2)=\mbox{distance from $[u,k]$ to $k$},
\end{array}
$$
 which implies that
$L_N(j_1,k)=L_N(j_2,k)$.


Conversely, assume that $L_N(i,j_1)=L_N(i,j_2)=3$, that
$L_N(j_1,j_2)\geq 4$, and that if $L_N(j_1,j_2)=4$, then
$L_N(j_1,k)=L_N(j_2,k)$ for every $k\in S\setminus\{j_1,j_2,i\}$.  Let
$x$, $y$ and $z$ be the parents of $j_1$, $j_2$ and $i$, respectively.
Notice that these parents are pairwise different (otherwise, the
LCSA-path length between a pair among $j_1,j_2,i$ would be 2).  Moreover,
since $N$ is a phylogenetic network, $j_1$, $j_2$ and $i$ are tree nodes.
Then, 
$L_N(i,j_1)=L_N(i,j_2)=3$ implies that there must exist an arc between
the nodes $x$ and $z$ and an arc between the nodes $y$ and $z$.

Now, if these arcs are $(z,x)$ and $(z,y)$, the node $z$ would have
out-degree 3, which is impossible.  Assume now that $(x,z)$ and
$(z,y)$ are arcs of $N$.  In this case, both $z$ and $x$ have
out-degree 2, which implies (recall that $N$ is a phylogenetic network) that they
are tree nodes.  Then, $x=[j_1,j_2]$ (it is an ancestor of
$j_2$, a strict ancestor of $j_1$, and no proper descendant of
it is an ancestor of $j_1$ and $j_2$) and therefore $L_N(j_1,j_2)=4$.
In this case, we assume that $L_N(j_1,k)=L_N(j_2,k)$ for every $k\in
S\setminus\{j_1,j_2,i\}$.  Now we must distinguish two cases,
depending on the type of node $y$:
\begin{itemize}
\item If $y$ is a tree node, let $p$ be its child other than $j_2$,
and let $k$ be a tree descendant leaf of $p$.  In this case, $[j_1,k]=x$
and $[j_2,k]=y$ (by the same reason why $x$ is $[j_1,j_2]$), and hence
$L_{N}(j_1,k)=L_{N}(j_2,k)+2$, against the assumption
$L_{N}(j_1,k)=L_{N}(j_2,k)$.

\item If $y$ is a hybrid node, let $p$ be its parent other than $z$,
and let $k$ be a tree descendant leaf of $p$ ($k\neq {j_2}$, because $j_2$ is not a tree descendant
of $p$).  In this case, $[j_2,k]=p$
(because $p$ is an ancestor of $j_2$ and a strict ancestor of $k$, and
the time consistency property implies that no intermediate node in the
path $p\pathgr k$ can be an ancestor of $y$).  
Now, if the length of the (only) path
$p\pathgr k$ is $h$, then $L_{N}(j_2,k)=h+2$, and for the equality
$L_{N}(j_1,k)=h+2$ to hold, either the arc $(x,p)$ belongs to $N$, which is impossible because $x$ would have out-degree 3, or a node in
the path $p\pathgr k$ is an ancestor of $x$, which is impossible because of the time consistency property.

\end{itemize}
In both cases we reach a contradiction that implies that the arcs
$(x,z),(z,y)$ do not exist in $N$.  By symmetry, the arcs
$(y,z),(z,x)$ do not exist in $N$, either.
Therefore, the only possibility is that $N$ contains the arcs $(x,z), (y,z)$, that is, that  $z$
is hybrid child of the nodes $x$ and $y$. This finishes the proof of (1).

As far as (2) goes, it is proved as in Lemma \ref{lem:qb-H0}.
\end{proof}

Now we can prove the main results in this section.

\begin{proposition}
\label{prop:D-bin-net-H}
Let $N$ and $N'$ be two fully resolved TCTC hybridization networks on
the same set $S$ of taxa.  Then, $L(N)=L(N')$ if, and only if, $N\cong
N'$.
\end{proposition}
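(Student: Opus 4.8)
The plan is to prove the non-trivial implication by algebraic induction on $n=|S|$, using the R and $\mathrm{H}_0$ reductions exactly in the spirit of \cite{cardona.ea:sbTSTC:2008,comparison2}. The implication $N\cong N'\Rightarrow L(N)=L(N')$ is immediate: an isomorphism of $S$-DAGs preserves labels and parents, hence it carries the path $[i,j]\pathgr i$ of $N$ to the path $[i,j]\pathgr i$ of $N'$ for all leaves $i,j$, so it preserves every $\ell_N(i,j)$ and therefore the vector $L(N)$. For the converse we argue by induction on $n$. In the base case $n=2$ there is, up to isomorphism, a single fully resolved TCTC hybridization network on $S$ (the cherry $1\leftarrow r\rightarrow 2$), so two such networks with equal LCSA-path lengths vectors are trivially isomorphic.

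Now let $n\geq 3$, assume the statement for fully resolved TCTC hybridization networks with $n-1$ leaves, and let $N,N'$ be two such networks on $S$ with $L(N)=L(N')$. By Corollary~\ref{cor-reduction-possible-bin}(a), some R or $\mathrm{H}_0$ reduction can be applied to $N$. By part~(1) of Lemma~\ref{lem:qb-R} and part~(1) of Lemma~\ref{lem:qb-H0}, whether a given reduction $R(i;j)$ or $H_0(i;j_1,j_2)$ can be applied to a fully resolved TCTC hybridization network is a property of its LCSA-path lengths vector alone (ultimately via the characterisation of sibling tree leaves in Lemma~\ref{lem:qb-2}); since $L(N)=L(N')$, exactly that same reduction can be applied to $N'$.

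Fix such a reduction, call it $\rho$. Applying $\rho$ to $N$ and to $N'$ produces fully resolved TCTC hybridization networks $N_1$ and $N_1'$ on the $(n-1)$-element taxon set $S\setminus\{i\}$ (recall that R and $\mathrm{H}_0$ reductions keep us inside the class of fully resolved TCTC hybridization networks and each removes exactly one leaf). By part~(2) of Lemma~\ref{lem:qb-R} and part~(2) of Lemma~\ref{lem:qb-H0}, the vector $L(N_1)$ is obtained from $L(N)$ by the very same explicit transformation by which $L(N_1')$ is obtained from $L(N')$; as $L(N)=L(N')$ this gives $L(N_1)=L(N_1')$. The induction hypothesis then yields $N_1\cong N_1'$. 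Finally, apply the inverse expansion $\rho^{-1}$ (an $\mathrm{R}^{-1}$ or $\mathrm{H}_0^{-1}$ expansion) to both $N_1$ and $N_1'$: by the second part of Lemma~\ref{lem:iso-red-bin} the result on $N_1$ is isomorphic to $N$ and the result on $N_1'$ is isomorphic to $N'$, while by the first part of Lemma~\ref{lem:iso-red-bin} the two results are isomorphic to each other. Chaining these isomorphisms gives $N\cong N'$, which closes the induction.

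The real content of the argument is packed into the already-proved Lemmas~\ref{lem:qb-2}, \ref{lem:qb-R} and \ref{lem:qb-H0}: both the applicability and the numerical effect of the two reductions are ``visible'' in the LCSA-path lengths vector. Given those, the induction itself is routine; the only points needing a little care are the bookkeeping of leaf labels so that $N_1$ and $N_1'$ genuinely live on the same taxon set (hence the induction hypothesis applies), and the standing fact, recorded right after the definitions of R, $\mathrm{H}_0$ and $\mathrm{H}_1$, that the R and $\mathrm{H}_0$ reductions never leave the class of fully resolved TCTC hybridization networks.
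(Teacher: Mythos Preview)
Your proof is correct and follows essentially the same approach as the paper: induction on $n$ with the base case settled by uniqueness of the network on at most two leaves, then using Corollary~\ref{cor-reduction-possible-bin}(a) to find an applicable R or $\mathrm{H}_0$ reduction, Lemmas~\ref{lem:qb-R} and~\ref{lem:qb-H0} to transfer both its applicability and its numerical effect to $N'$, the induction hypothesis to conclude $N_1\cong N_1'$, and Lemma~\ref{lem:iso-red-bin} to pull the isomorphism back through the inverse expansion. Your write-up is slightly more explicit about the `if' direction and about the bookkeeping of labels, but the argument is the paper's.
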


\begin{proof}
The `if' implication is obvious.  We prove the `only if' implication
by induction on the number $n$ of elements of $S$.

The cases $n=1$ and $n=2$ are straightforward, because there exist
only one TCTC-network on $S=\{1\}$ and one TCTC-network on $S=\{1,2\}$:
the one-node graph and the phylogenetic tree with leaves 1,2,
respectively.

Assume now that the thesis is true for fully resolved TCTC hybridization networks with $n$
leaves, and let $N$ and $N'$ be two fully resolved TCTC hybridization networks on the same set $S$ of $n+1$ labels such that $L(N)=L(N')$.   By Corollary \ref{cor-reduction-possible-bin}.(a), an $R(i;j)$ or a $H_0(i;j_1,j_2)$ can be applied to $N$. Moreover, since the possibility of applying one such reduction depends on the LCSA-path lengths vector by Lemmas \ref{lem:qb-R}.(1) and \ref{lem:qb-H0}.(1), and $L(N)=L(N')$, it will be possible to apply the same reduction to $N'$.
So, let $N_1$ and $N_1'$ be the  fully resolved TCTC hybridization networks obtained by applying the same R or $\textrm{H}_0$ reduction to $N$ and $N'$.

From Lemmas \ref{lem:qb-R}.(2) and \ref{lem:qb-H0}.(2) we deduce that
$L(N_1)=L(N_1')$ and hence, by the induction hypothesis, $N_1\cong
N_1'$.  Finally, if we apply to $N_1$ and $N_1'$ the  $\mathrm{R}^{-1}$ or $\mathrm{H}_0^{-1}$ expansion that is
inverse to the reduction applied to $N$ and $N'$, then, by Lemma \ref{lem:iso-red-bin}, we obtain
again $N$ and $N'$ and they are isomorphic.
\end{proof}

A similar argument, using Lemmas \ref{lem:qb-R} and \ref{lem:qb-H1}, proves the following result.

\begin{proposition}
\label{prop:D-bin-net-P}
Let $N$ and $N'$ be two fully resolved TCTC phylogenetic networks on
the same set $S$ of taxa.  Then, $L(N)=L(N')$ if, and only if, $N\cong
N'$.\qed
\end{proposition}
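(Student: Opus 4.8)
The plan is to transcribe, almost verbatim, the inductive argument that proves Proposition~\ref{prop:D-bin-net-H}, substituting everywhere the $\textrm{H}_0$ reductions by the $\textrm{H}_1$ reductions and Corollary~\ref{cor-reduction-possible-bin}.(a) by Corollary~\ref{cor-reduction-possible-bin}.(b). The `if' direction being trivial, the whole content lies in the `only if' direction, which I would prove by induction on $n=|S|$. For the base cases $n=1$ and $n=2$ there is a unique fully resolved TCTC phylogenetic network (the isolated node, and the cherry, respectively), so the implication holds trivially.

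For the inductive step, I would assume the statement for networks with $n$ leaves and take two fully resolved TCTC phylogenetic networks $N,N'$ on a common set $S$ of $n+1$ labels with $L(N)=L(N')$. By Corollary~\ref{cor-reduction-possible-bin}.(b), some $R(i;j)$ or some $H_1(i;j_1,j_2)$ reduction can be applied to $N$. By Lemmas~\ref{lem:qb-R}.(1) and \ref{lem:qb-H1}.(1), the applicability of such a reduction is a function of the LCSA-path lengths vector alone, and since $L(N)=L(N')$, the very same reduction can be applied to $N'$. Let $N_1$ and $N_1'$ be the networks so obtained; these are again fully resolved TCTC phylogenetic networks by the closure property recorded just before Lemma~\ref{lem:iso-red-bin}. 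By Lemmas~\ref{lem:qb-R}.(2) and \ref{lem:qb-H1}.(2), the effect of the reduction on the LCSA-path lengths vector is also determined by the vector itself, so $L(N_1)=L(N_1')$; as $N_1$ and $N_1'$ have $n$ leaves, the induction hypothesis yields $N_1\cong N_1'$. Applying to both the inverse $\textrm{R}^{-1}$ or $\textrm{H}_1^{-1}$ expansion recovers $N$ and $N'$, and by Lemma~\ref{lem:iso-red-bin} a common expansion preserves isomorphism, so $N\cong N'$.

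I do not expect any genuinely new obstacle in this proof: all the technical weight has been front-loaded into Lemma~\ref{lem:qb-H1}, whose `converse' part in item~(1) is the delicate step, since it must deduce from the three numerical conditions on $L_N$ that the only arc configuration around the parents $x,y,z$ of $j_1,j_2,i$ compatible with both the phylogenetic-network structure and time consistency is the one in which $z$ is a hybrid child of $x$ and $y$. Granting that lemma, the only points worth double-checking are that $\textrm{H}_1$ is the correct analogue of $\textrm{H}_0$ in the phylogenetic-network setting — which it is, since fully resolved phylogenetic networks have no hybrid leaves but do have out-degree-$1$ hybrid nodes — and that the reductions and expansions used never leave the class of fully resolved TCTC phylogenetic networks, which is exactly the closure statement already established in the text. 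Hence the proof is a purely formal adaptation of that of Proposition~\ref{prop:D-bin-net-H}.
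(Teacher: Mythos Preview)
Your proposal is correct and is precisely the argument the paper has in mind: the paper proves Proposition~\ref{prop:D-bin-net-P} by stating that ``a similar argument, using Lemmas~\ref{lem:qb-R} and~\ref{lem:qb-H1}, proves the following result,'' and you have spelled out exactly that argument, replacing $\textrm{H}_0$ by $\textrm{H}_1$ and Corollary~\ref{cor-reduction-possible-bin}.(a) by~(b). Your identification of Lemma~\ref{lem:qb-H1}.(1) as carrying all the real work, and of the closure of fully resolved TCTC phylogenetic networks under $R$ and $\textrm{H}_1$ as the only extra point to verify, is also exactly right.
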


\begin{remark}
\label{rem:noqb}
The LCSA-path lengths vectors do not separate quasi-binary TCTC-networks. Indeed, consider the TCTC-networks $N,N'$ depicted in Fig.~\ref{fig:qb}. They are quasi-binary (but neither fully resolved phylogenetic networks nor fully resolved hybridization networks), and a simple computation shows that
$$
L(N)=L(N')=(3,6,3,3,6,3).
$$
The network $N$ in Fig.~\ref{fig:qb} also shows that Lemma \ref{lem:qb-H1}.(1) is false for quasi-binary hybridization networks.
\end{remark}

\begin{figure}[htb]
\begin{center}
            \begin{tikzpicture}[thick,>=stealth,scale=0.5]
\draw(0,0) node[tre] (1) {};  \etq 1
\draw(3,0) node[hyb] (2) {};  \etq 2
\draw(7,0) node[tre] (3) {};  \etq 3
 \draw(9,0) node[tre] (4) {};  \etq 4
\draw(0,2) node[tre] (a) {};  
\draw(9,2) node[hyb] (A) {};  
\draw(1,6) node[tre] (b) {};  
\draw(5,6) node[tre] (d) {};  
\draw(7,4) node[tre] (c) {};  
\draw(3,8) node[tre] (r) {};  
             \draw[->] (a)--(1);
             \draw[->] (a)--(A);
             \draw[->] (A)--(4);
             \draw[->] (b)--(a);
             \draw[->] (b)--(2);
             \draw[->] (d)--(2);
               \draw[->] (d)--(c);
           \draw[->] (c)--(3);
             \draw[->] (c)--(A);
             \draw[->] (r)--(b);
             \draw[->] (r)--(d);
       \draw(3.5,-1.5) node  {$N$};
            \end{tikzpicture}
\qquad
            \begin{tikzpicture}[thick,>=stealth,scale=0.5]
 \draw(0,0) node[hyb] (1) {};  \etq 1
\draw(3,0) node[tre] (2) {};  \etq 2
 \draw(5,0) node[tre] (3) {};  \etq 3
\draw(7,0) node[tre] (4) {};  \etq 4
             \draw(-1,6) node[tre] (a) {};               
              \draw(3,6) node[tre] (b) {};  
              \draw(3,2) node[tre] (c) {};   
              \draw(5,2) node[hyb] (A) {};  
              \draw(7,2) node[tre] (d) {};   
              \draw(2,8) node[tre] (r) {};   
              \draw[->] (a)--(1);
             \draw[->] (b)--(1);
                     \draw[->] (a)--(d);
     \draw[->] (b)--(c);
             \draw[->] (c)--(2);
             \draw[->] (c)--(A);
             \draw[->] (d)--(A);
                \draw[->] (A)--(3);
          \draw[->] (d)--(4);
             \draw[->] (r)--(a);
             \draw[->] (r)--(b);
  \draw(3,-1.5) node  {$N'$};
            \end{tikzpicture}
       \end{center}
\caption{\label{fig:qb}
These two quasi-binary TCTC-networks have the same LCSA-path length vectors.}
\end{figure}
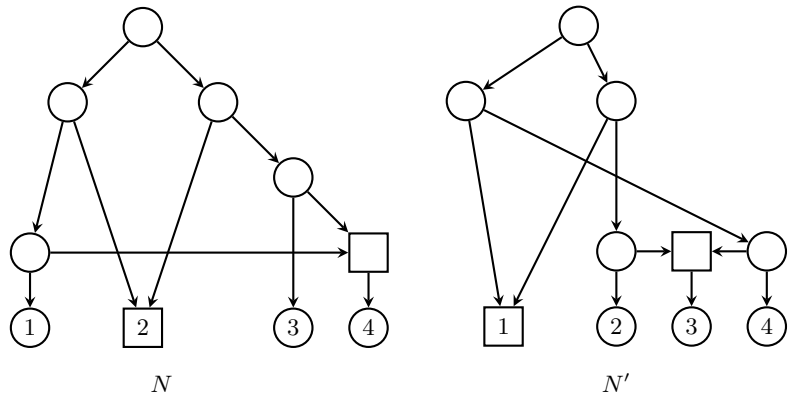

Let $\mathrm{FRH}_n$ (respectively,  $\mathrm{FRP}_n$) denote 
the classes of fully resolved TCTC hybridization (respectively, phylogenetic) networks on $S=\{1,\ldots,n\}$. We have just proved that the mappings
$$
L:\mathrm{FRH}_n\to  \RR^{n(n-1)/2},\quad L:\mathrm{FRP}_n\to  \RR^{n(n-1)/2}
$$
are injective, and therefore they can be used to induce metrics on $\mathrm{FRH}_n$ and $\mathrm{FRP}_n$ from metrics on $\RR^{n(n-1)/2}$.

\begin{proposition}
\label{prop:metric}
For every $n\geq 1$,  let $D$ be any metric on $\RR^{n(n-1)/2}$. The mappings
$d: \mathrm{FRH}_n\times \mathrm{FRH}_n  \to  \RR$ and 
$d: \mathrm{FRP}_n\times \mathrm{FRP}_n   \to  \RR$ defined by
$d(N_1,N_2)= D(L(N_1),L(N_2))$
 satisfy the axioms of metrics up to isomorphisms:
  \begin{enumerate}[(1)]
  \item $d(N_1,N_2)\ge 0$,
  \item $d(N_1,N_2)=0$ if, and only if, $N_1\cong N_2$,
  \item $d(N_1,N_2)=d(N_2,N_1)$,
  \item $d(N_1,N_3)\le d(N_1,N_2)+d(N_2,N_3)$.
  \end{enumerate}
 \end{proposition}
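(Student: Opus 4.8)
The plan is to recognize this as the standard fact that an injective map into a metric space pulls the metric back to a metric (on the source, up to the equivalence relation collapsed by the map), applied here to the map $L$ whose injectivity up to isomorphism was just established. Concretely, I would fix $n\geq 1$ and a metric $D$ on $\RR^{n(n-1)/2}$, and treat the two cases $\mathrm{FRH}_n$ and $\mathrm{FRP}_n$ in parallel, since the only case-dependent ingredient is the separation statement: Proposition~\ref{prop:D-bin-net-H} for fully resolved TCTC hybridization networks, and Proposition~\ref{prop:D-bin-net-P} for fully resolved TCTC phylogenetic networks.

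First I would dispatch axioms (1), (3) and (4), which are immediate and use nothing about networks beyond the definition $d(N_1,N_2)=D(L(N_1),L(N_2))$. For any $N_1,N_2,N_3$ in the relevant class, we have $d(N_1,N_2)=D(L(N_1),L(N_2))\geq 0$ by non-negativity of $D$; $d(N_1,N_2)=D(L(N_1),L(N_2))=D(L(N_2),L(N_1))=d(N_2,N_1)$ by the symmetry of $D$; and $d(N_1,N_3)=D(L(N_1),L(N_3))\leq D(L(N_1),L(N_2))+D(L(N_2),L(N_3))=d(N_1,N_2)+d(N_2,N_3)$ by the triangle inequality for $D$.

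Then I would prove axiom (2), which is the only place where the structure of TCTC-networks enters. For the forward-trivial direction, if $N_1\cong N_2$ then $L(N_1)=L(N_2)$: an isomorphism of $S$-DAGs preserves labels and arcs, hence preserves distances between nodes and the sets $CSA(\cdot,\cdot)$, hence the LCSAs and the numbers $\ell_N(i,j)$, so $L$ is an isomorphism invariant; therefore $d(N_1,N_2)=D(L(N_1),L(N_2))=0$ because $D$ is a metric. Conversely, if $d(N_1,N_2)=0$ then $D(L(N_1),L(N_2))=0$, so $L(N_1)=L(N_2)$ again because $D$ is a metric, and Proposition~\ref{prop:D-bin-net-H} (respectively, Proposition~\ref{prop:D-bin-net-P}) yields $N_1\cong N_2$.

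I do not expect a genuine obstacle here: all the mathematical content is already contained in Propositions~\ref{prop:D-bin-net-H} and~\ref{prop:D-bin-net-P}, and what remains is a routine verification. The only point worth a remark is the exact meaning of the conclusion: since $L$ identifies isomorphic networks, $d$ is a genuine metric on the set of isomorphism classes of fully resolved TCTC (hybridization, respectively phylogenetic) networks on $S$, equivalently a pseudometric on the networks themselves whose kernel is exactly the isomorphism relation — which is precisely the list of four axioms stated.
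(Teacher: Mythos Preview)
Your argument is correct and follows exactly the same approach as the paper: axioms (1), (3), and (4) are inherited directly from the corresponding properties of $D$, and axiom (2) follows from the separation axiom for $D$ together with Proposition~\ref{prop:D-bin-net-H} (respectively, Proposition~\ref{prop:D-bin-net-P}). Your write-up is simply a more explicit version of what the paper states in one sentence.
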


\begin{proof}
Properties (1), (3) and (4) are direct consequences of the
corresponding properties of $D$, while property (2) follows from the
separation axiom for $D$ (which says that $D(M_1,M_2)=0$ if, and only if, $M_1=M_2$)
and Proposition \ref{prop:D-bin-net-H} or \ref{prop:D-bin-net-P}, depending on the case. 
\end{proof}

For instance, using as $D$ the Manhattan distance on $\RR^{n(n-1)/2}$, we obtain the metric 
on $\mathrm{FRH}_n$ or $\mathrm{FRP}_n$
$$
d_1(N_1,N_2)=\sum_{1\leq i<j\leq n}|L_{N_1}(i,j)-L_{N_2}(i,j)|,
$$
and using as $D$ the Euclidean distance we obtain the metric
$$
d_2(N_1,N_2)=\sqrt{\sum_{1\leq i<j\leq n}(L_{N_1}(i,j)-L_{N_2}(i,j))^2}.
$$
These metrics generalize to fully resolved TCTC (hybridization or phylogenetic) networks 
the classical distances for fully resolved phylogenetic trees introduced by Farris \cite{farris:sz69} 
and Clifford \cite{willcliff:taxon71} around 1970.

\section{Splitted path lengths vectors for arbitrary networks}
\label{sec:split}

As we have seen in \S \ref{sec:trees} and Remark~\ref{rem:noqb}, the path lengths vectors do not
separate arbitrary  TCTC-networks. Since to separate arbitrary phylogenetic trees
we splitted the path lengths (Definition \ref{def:split-tree}), we shall use the same strategy in the networks setting.  In this connection, we already proved
in \cite{comparison2} that the 
matrix
$$
\ell(N)=\big(\ell_N(i,j)\big)_{i=1,\ldots,n\atop j=1,\ldots,n} 
$$
separates TCTC \emph{phylogenetic} networks  on $S=\{1,\ldots,n\}$ with tree nodes of arbitrary out-degree and hybrid nodes of arbitrary in-degree. But it is not true for TCTC  hybridization networks, as the following example shows.

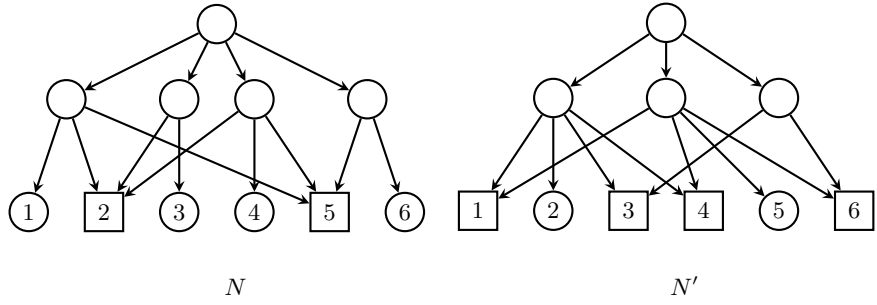
\begin{figure}[htb]
\begin{center}
            \begin{tikzpicture}[thick,>=stealth,scale=0.5]
\draw(0,1) node[tre] (1) {};  \etq 1
\draw(2,1) node[hyb] (2) {};  \etq 2
 \draw(4,1) node[tre] (3) {};  \etq 3
\draw(6,1) node[tre] (4) {};  \etq 4
\draw(8,1) node[hyb] (5) {};  \etq 5
\draw(10,1) node[tre] (6) {};  \etq 6
             \draw(1,4) node[tre] (d) {}; 
              \draw(4,4) node[tre] (e) {}; 
              \draw(9,4) node[tre] (g) {};  
              \draw(6,4) node[tre] (f) {}; 
              \draw(5,6) node[tre] (r) {}; 
             \draw[->] (d)--(1);
             \draw[->] (d)--(2);
             \draw[->] (d)--(5);
             \draw[->] (e)--(2);
             \draw[->] (e)--(3);
             \draw[->] (f)--(2);
               \draw[->] (f)--(4);
           \draw[->] (f)--(5);
             \draw[->] (g)--(5);
             \draw[->] (g)--(6);
             \draw[->] (r)--(d);
             \draw[->] (r)--(e);
             \draw[->] (r)--(f);
             \draw[->] (r)--(g);
  \draw(5.5,-1) node  {$N$};
            \end{tikzpicture}
\quad
            \begin{tikzpicture}[thick,>=stealth,scale=0.5]
 \draw(0,1) node[hyb] (1) {};  \etq 1
\draw(2,1) node[tre] (2) {};  \etq 2
 \draw(4,1) node[hyb] (3) {};  \etq 3
\draw(6,1) node[hyb] (4) {};  \etq 4
\draw(8,1) node[tre] (5) {};  \etq 5
\draw(10,1) node[hyb] (6) {};  \etq 6
             \draw(2,4) node[tre] (c) {}; 
              \draw(5,4) node[tre] (d) {}; 
              \draw(8,4) node[tre] (e) {};  
              \draw(5,6) node[tre] (r) {}; 
              \draw[->] (c)--(1);
             \draw[->] (c)--(2);
                     \draw[->] (c)--(3);
     \draw[->] (c)--(4);
             \draw[->] (d)--(1);
             \draw[->] (d)--(4);
             \draw[->] (d)--(5);
                \draw[->] (d)--(6);
          \draw[->] (e)--(3);
             \draw[->] (e)--(6);
             \draw[->] (r)--(c);
             \draw[->] (r)--(d);
             \draw[->] (r)--(e);
  \draw(5.5,-1) node  {$N'$};
            \end{tikzpicture}
       \end{center}
\caption{\label{fig:calh}
These two hybridization TCTC-networks are such that $\ell_N(i,j)=\ell_{N'}(i,j)$,
for every pair of leaves $i,j$.}
\end{figure}

\begin{example}
Consider the pair of non-isomorphic TCTC-networks $N$ and $N'$
depicted in Fig.~\ref{fig:calh}.  A simple computation shows that
$$
\ell(N)=\ell(N')=
\left(
\begin{array}{cccccc}
    0 & 1 & 2 & 2 & 1 & 2  \\[-1ex]  
    1 & 0 & 1 & 1 &  2 & 2  \\[-1ex]  
    2 & 1 & 0  & 2 & 2 & 2 \\[-1ex] 
    2 & 1 & 2 & 0 & 1 & 2  \\[-1ex] 
    1 & 2 & 2 & 1 & 0 & 1  \\[-1ex] 
    2 & 2 & 2 & 2 & 1 & 0   
 \end{array}
\right)
$$
\end{example}

So, in order to separate arbitrary TCTC-networks we need to add some extra information to the distances  $\ell_N(i,j)$ from LCSAs to leaves. The extra information we shall use is whether the LCSA of each pair of leaves is a strict ancestor of one leaf or the other (or both).
So, for every pair of different leaves $i,j$ of $N$,
let $h_N(i,j)$ be $-1$ if $[i,j]$ is a strict ancestor of $i$ but not of $j$,
 $1$ if $[i,j]$ is a strict ancestor of $j$ but not of $i$,
and  $0$ if $[i,j]$ is a strict ancestor of both $i$ and $j$. Notice that $h_N(j,i)=-h_N(i,j)$.

\begin{definition}
Let $N$ be a hybridization network on the set $S=\{1,\ldots,n\}$. 

For every $i,j\in S$, the \emph{splitted LCSA-path length} from $i$ to $j$ is the ordered  3-tuple
$$
L_N^{s}(i,j)=(\ell_N(i,j), \ell_N(j,i),h_N(i,j)).
$$
The
\emph{splitted LCSA-path lengths vector} of $N$ is
$$
L^{s}(N) = \big(L^{s}_N(i,j)\big)_{1\leq i<j\leq n} \in (\NN\times\NN\times\{-1,0,1\})^{n(n-1)/2}
$$
with its entries ordered lexicographically in $(i,j)$.
\end{definition}

\begin{example}
Consider the quasi-binary TCTC-networks $N$ and $N'$ depicted in Fig.~\ref{fig:qb}.
Then
$$
\begin{array}{rl}
L^s(N) &\!\! =\!\! \big((2,1,-1),(3,3,0),(1,2,-1),(1,2,1),(2,4,0),(1,2,-1)\big)\\
L^s(N') &\!\! =\!\! \big((1,2,1),(2,4,0),(1,2,1),(1,2,-1),(3,3,0),(2,1,1) \big)
\end{array}
$$
\end{example}

\begin{example}
Consider the TCTC-networks $N$ and $N'$ depicted in Fig.~\ref{fig:calh}.
Then
$$
\begin{array}{rl}
L^s(N) &\!\! =\!\! \big((1,1,-1), (2,2,0), (2,2,0), (1,1,-1), (2,2,0), (1,1,1), (1,1,1), \\
& \ (2,2,0), (2,2,0), (2,2,0), (2,2,0), (2,2,0), (1,1,-1), (2,2,0), (1,1,1)\big)\\
L^s(N') &\!\! =\!\! \big((1,1,1), (2,2,0), (2,2,0), (1,1,1), (2,2,0), (1,1,-1), (1,1,-1),\\
& \ (2,2,0), (2,2,0), (2,2,0), (2,2,0), (2,2,0), (1,1,1), (2,2,0), (1,1,-1)\big)
\end{array}
$$
\end{example}

\begin{remark}
If $N$ is a phylogenetic tree on $S$, then $h_N(i,j)=0$ for every  $i,j\in S$.
\end{remark}

We shall prove now that these splitted LCSA-path lengths vectors
separate arbitrary hybridization TCTC-networks. The master plan for proving it is similar to the one used in the proof of Proposition \ref{prop:D-bin-net-H}: induction based on the fact that the application conditions for the reductions introduced in Section \ref{sec:tctc} can be read in the splitted LCSA-path lengths vectors of TCTC-networks and that these reductions modify in a controlled way these vectors.


\begin{lemma}
\label{lem:red-U}
Let $N$ be a  TCTC-network on a set $S$ of taxa.

\begin{enumerate}[(1)]
\item The reduction $U(i)$ can be applied to $N$ if, and only if,
$\ell_N(i,j)\geq 2$ for every $j\in S\setminus\{i\}$.

\item If the reduction $U(i)$ can be applied to $N$,
then
$$
\begin{array}{l}
L_{N_{U(i)}}(i,j)=L_N(i,j)-(1,0,0)
\quad\mbox{for every $j\in S\setminus\{i\}$}\\
L_{N_{U(i)}}(j,k)=L_N(j,k)\quad\mbox{for every $j,k\in S\setminus\{i\}$}
\end{array}
$$
\end{enumerate}
\end{lemma}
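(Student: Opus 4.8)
The plan is to prove (1) and (2) separately: (1) is a structural characterisation of when $U(i)$ is applicable, and (2) is a routine bookkeeping computation, so essentially all the work is in one subcase of (1). Recall that $U(i)$ can be applied exactly when $i$ is a tree leaf whose unique parent $u$ has out-degree $1$. For the ``only if'' part of (1): since $i$ is a leaf and $j\ne i$ we have $[i,j]\ne i$, so $\ell_N(i,j)\ge 1$; and $\ell_N(i,j)=1$ would make $[i,j]$ a parent of $i$, hence $[i,j]=u$, hence $u$ an ancestor of $j$; but the only child of $u$ is the leaf $i$, so the only descendants of $u$ are $u$ and $i$, and neither is a leaf different from $i$ --- a contradiction. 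For the converse I would prove the contrapositive: if $U(i)$ is not applicable, then some leaf $j\ne i$ has $\ell_N(i,j)=1$. The first case is that $i$ is a tree leaf but $u$ has a second child $v\ne i$; picking any leaf descendant $j$ of $v$ (so $j\ne i$, as $i$ is not a descendant of $v$), $u$ is a common ancestor of $i,j$ and a strict ancestor of the tree leaf $i$, so $u\in CSA(i,j)$ and $[i,j]$ is a descendant of $u$; since $u$ is the only ancestor of $i$ that is a descendant of $u$ (by acyclicity, and because $[i,j]\ne i$), we get $[i,j]=u$ and $\ell_N(i,j)=1$.

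The second case of the converse is the one requiring real work: $i$ is a hybrid leaf, with parents $v_1,\dots,v_k$. Fix a tree descendant leaf $j$ of $v_1$ (it exists by the tree-child condition, and $j\ne i$ since $j$ is a tree node); the claim is that $[i,j]=v_1$, which yields $\ell_N(i,j)=1$. As $v_1$ is a common ancestor of $i$ and $j$ and a strict ancestor of $j$, we have $v_1\in CSA(i,j)$, so $[i,j]$ is a descendant of $v_1$; it remains to exclude $[i,j]=z$ with $z$ a \emph{proper} descendant of $v_1$. Here I would use that $j$ is a tree descendant of $v_1$, so the path $v_1\pathgr j$ is unique and is a tree path: since $z$ lies on some path $v_1\pathgr j$, it must be an interior node of this unique tree path, hence a tree node, and the initial segment $v_1\pathgr z$ is nontrivial and consists of tree arcs. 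On the other hand $z=[i,j]$ is an ancestor of $i$, hence an ancestor (possibly equal) of some parent $v_m$ of $i$. Now time consistency gives the contradiction: $\tau$ takes the same value on all of $v_1,\dots,v_k$ (tails of hybridization arcs into $i$), it increases strictly along $v_1\pathgr z$, and it is nondecreasing along $z\pathgr v_m$, so $\tau(v_1)<\tau(z)\le\tau(v_m)=\tau(v_1)$ --- impossible. Hence $[i,j]=v_1$. This time-consistency contradiction is the heart of the argument, and the main obstacle to get right is the structural fact that $z$ must sit on the unique tree path out of $v_1$.

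Finally, for (2): $N_{U(i)}$ is $N$ with the leaf $i$ deleted and its former parent $u$ relabelled $i$. Since $i$ is a leaf, deleting it and its incoming arc leaves ancestry, strict ancestry, the sets $CSA(\cdot,\cdot)$, all distances, and the signs $h$ among the remaining nodes unchanged, which gives at once $L^{s}_{N_{U(i)}}(j,k)=L^{s}_N(j,k)$ for $j,k\in S\setminus\{i\}$. For a pair $(i,j)$, the leaf labelled $i$ in $N_{U(i)}$ is the node $u$, and by Lemma~\ref{lem:LCSA} (applied to the tree node $i$ with parent $u$, noting $j$ is not a descendant of $i$) together with the invariance just noted, $[i,j]_N=[u,j]_N=[u,j]_{N_{U(i)}}=:z$. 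Every path from $z$ to $i$ in $N$ ends with the arc $u\to i$, so $\ell_{N_{U(i)}}(i,j)=\ell_N(i,j)-1$; the path from $z$ to $j$ is untouched, so $\ell_{N_{U(i)}}(j,i)=\ell_N(j,i)$; and every root-path to $i$ ends with $u\to i$ with $z\ne i$, so $z$ is a strict ancestor of $i$ in $N$ iff it is a strict ancestor of $u$, whence $h$ is unchanged. Assembling the three components gives $L^{s}_{N_{U(i)}}(i,j)=L^{s}_N(i,j)-(1,0,0)$, completing (2).
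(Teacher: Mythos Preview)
Your proof is correct and follows essentially the same approach as the paper: the same three-case analysis for (1) (tree leaf with out-degree-1 parent; tree leaf with a sibling; hybrid leaf) and the same use of Lemma~\ref{lem:LCSA} plus the ``strict ancestor of $i$ $\Leftrightarrow$ strict ancestor of its parent'' observation for (2). The only cosmetic differences are that in the tree-leaf-with-sibling case you take an arbitrary leaf descendant of the sibling (the paper takes a tree descendant leaf, though its argument does not actually need this), and in the hybrid case you spell out the time-consistency contradiction via $\tau$-values explicitly, whereas the paper simply asserts that an intermediate node of the tree path being an ancestor of $i$ ``would violate the time consistency property.''
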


\begin{proof}
As far as (1) goes, the reduction $U(i)$ can be applied to $N$ if, and only if,
the leaf $i$ is a tree node and  the only child of its parent.  Let us check now that
this last condition  is equivalent to
$\ell_N(i,j)\geq 2$ for every $j\in S\setminus\{i\}$.
To do this, we distinguish three cases:
\begin{itemize}
\item Assume that $i$ is a tree node and the only child of its parent $x$.
Then, for every $j\in S\setminus\{i\}$, the LCSA of $i$ and $j$ is a proper ancestor of $x$, and therefore $\ell_N(i,j)\geq 2$.

\item Assume that $i$ is a tree node and that it has a sibling $y$. Let $x$ be the parent of $i$ and $y$ and let $j$ be a tree descendant leaf of $y$. Then  $[i,j]=x$, because $x$ is a strict ancestor of $i$, an ancestor of $j$ and clearly no descendant of $x$ is an ancestor of both $i$ and $j$. Therefore, in this case, $\ell_N(i,j)=1$ for this leaf $j$.

\item Assume that $i$ is a hybrid node. Let $x$ be any parent of $i$ and let $j$ be a tree descendant of  $x$. Then, $[i,j]=x$, because $x$ is a strict ancestor of $j$, an ancestor of $i$, and no intermediate node in the unique path $x\pathgr j$ is an ancestor of $i$ (it would violate the time consistency property). Therefore,  in this case, $\ell_N(i,j)=1$ for this leaf $j$, too.
\end{itemize}
Since these three cases cover all possibilities, we conclude that
$i$ is a tree node without siblings if, and only if, $\ell_N(i,j)\geq 2$ for every $j\in S\setminus\{i\}$.
 This finishes the proof of (1).

As far as (2) goes, in $N_{U(i)}$ we replace the tree leaf $i$ by
its parent.  By Lemma \ref{lem:LCSA}, this does not modify any LCSA,
and it only shortens in 1 any path ending in $i$.  Therefore
$$
\begin{array}{l}
\ell_{N_{U(i)}}(i,j)=\ell_{N}(i,j)-1,\ \ell_{N_{U(i)}}(j,i)=\ell_{N}(j,i) \quad\mbox{for every $j\in S\setminus\{i\}$}\\
\ell_{N_{U(i)}}(j,k)=\ell_{N}(j,k),\ \ell_{N_{U(i)}}(k,j)=\ell_{N}(k,j) \quad\mbox{for every $j,k\in
S\setminus\{i\}$}\\
\end{array}
$$
As far as the $h$ component of the splitted LCSA-path lengths goes, notice that a node $u$ is a strict ancestor of a tree leaf $i$ if, and only if, it is a strict ancestor of its parent $x$ (because every path ending in $i$ contains $x$). Therefore, an internal node of $N_{U(i)}$ is a strict ancestor of the leaf $i$ in $N_{U(i)}$ if, and only if, it is a strict ancestor of the leaf $i$ in $N$. On the other hand, replacing a tree leaf without siblings by its only parent does not affect any path ending in another leaf, and therefore an internal node of $N_{U(i)}$ is a strict ancestor of a leaf $j\neq i$ in $N_{U(i)}$ if, and only if, it is a strict ancestor of the leaf $j$ in $N$.

So, by Lemma \ref{lem:LCSA}, the LCSA of a pair of leaves in $N$ and in $N_{U(i)}$ is the same, and we have just proved that this LCSA is a strict ancestor of exactly the same leaves in both networks: this implies that 
$$
\begin{array}{l}
h_{N_{U(i)}}(i,j)=h_{N}(i,j) \quad\mbox{for every $j\in S\setminus\{i\}$}\\
h_{N_{U(i)}}(j,k)=h_{N}(j,k) \quad\mbox{for every $j,k\in
S\setminus\{i\}$}\\
\end{array}
$$
\end{proof}

\begin{lemma}
\label{lem:red-T}
Let $N$ be a  TCTC-network on a set $S$ of taxa.

\begin{enumerate}[(1)]
\item The reduction $T(i;j)$ can be applied to $N$ if, and only if,
$L_N^s(i,j)=(1,1,0)$.

\item If the reduction $T(i;j)$ can be applied to $N$, then 
$$
L^s_{N_{T(i;j)}}(k,l)=L^s_N(k,l)\quad\mbox{for every $k,l\in S\setminus\{i\}$}
$$
\end{enumerate}
\end{lemma}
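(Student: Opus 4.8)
The statement decomposes into two assertions; both are obtained by unwinding the definitions, with Lemma~\ref{lem:hyb-nostr} doing the real work in part~(1). For part~(1), recall that by the definition of the $T$ reduction, $T(i;j)$ is applicable to $N$ exactly when $i$ and $j$ are two sibling tree leaves (and, since the reductions are considered only for networks with at least three leaves, deleting $i$ automatically keeps the common parent of $i$ and $j$ non-elementary, so no extra condition intervenes). So I must show that $i,j$ are sibling tree leaves if and only if $L_N^s(i,j)=(1,1,0)$. For the forward implication, let $u$ be the common parent of the tree leaves $i$ and $j$, and first identify $[i,j]$ with $u$: by Lemma~\ref{lem:hyb-nostr}, $u$ is a strict ancestor of its tree child $i$, so $u\in CSA(i,j)$ and hence $[i,j]$ is a descendant of $u$ and an ancestor of $i$. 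If $[i,j]\neq u$, then $[i,j]$ is a proper descendant of $u$; it cannot be $i$ itself (else $i$ would be an ancestor of the distinct leaf $j$), so there is a non-trivial path $[i,j]\pathgr i$, which must run through the unique parent $u$ of $i$, and concatenating a non-trivial path $u\pathgr[i,j]$ with the initial piece $[i,j]\pathgr u$ of that path yields a cycle, contradicting acyclicity. Hence $[i,j]=u$, so $\ell_N(i,j)=\ell_N(j,i)=1$, and since $u$ is a strict ancestor of both of its tree children (Lemma~\ref{lem:hyb-nostr} again), $h_N(i,j)=0$.

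For the converse in part~(1), $\ell_N(i,j)=\ell_N(j,i)=1$ forces $[i,j]$ to be a common parent $u$ of $i$ and $j$; then $h_N(i,j)=0$ says $u$ is a strict ancestor of \emph{both} $i$ and $j$, and applying Lemma~\ref{lem:hyb-nostr} to $u$ and each of its children $i,j$ shows both are tree nodes, so $i,j$ are sibling tree leaves and $T(i;j)$ applies. This is the one point needing care: the third coordinate is essential, since $\ell_N(i,j)=\ell_N(j,i)=1$ alone only guarantees that \emph{one} of $i,j$ is a tree leaf, and $h_N(i,j)=0$ is precisely what upgrades this to both.

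For part~(2), the key observation is that a leaf is never an intermediate node of a path, so deleting the tree leaf $i$ and its single incoming arc does not disturb any path whose two endpoints differ from $i$: such a path in $N$ never visits $i$, hence already lies in $N_{T(i;j)}$, while conversely $N_{T(i;j)}$ is a subgraph of $N$. Consequently the ancestor relation among nodes $\neq i$ is the same in $N$ and $N_{T(i;j)}$, and — since the root is unchanged and every path from it to a node $\neq i$ avoids $i$ — so is the strict-ancestor relation. Now fix $k,l\in S\setminus\{i\}$; no common ancestor of $k$ and $l$ can be $i$, so $CSA_N(k,l)=CSA_{N_{T(i;j)}}(k,l)$, whence $[k,l]$ is the same node of the two networks, the shortest paths from $[k,l]$ to $k$ and to $l$ are unchanged (giving $\ell_{N_{T(i;j)}}(k,l)=\ell_N(k,l)$ and $\ell_{N_{T(i;j)}}(l,k)=\ell_N(l,k)$), and the strict-ancestor status of $[k,l]$ relative to $k$ and to $l$ is unchanged (giving $h_{N_{T(i;j)}}(k,l)=h_N(k,l)$). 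Hence $L^s_{N_{T(i;j)}}(k,l)=L^s_N(k,l)$. The only delicate step in the whole argument is pinning down $[i,j]=u$ in part~(1) via acyclicity and the uniqueness of the parent of a tree node; part~(2) is routine bookkeeping once the remark that leaves are never intermediate nodes is in place.
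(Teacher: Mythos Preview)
Your proof is correct and follows essentially the same approach as the paper's: part~(1) reduces to showing that $i,j$ are sibling tree leaves iff $L_N^s(i,j)=(1,1,0)$, with Lemma~\ref{lem:hyb-nostr} supplying the equivalence between ``strict ancestor of both children'' and ``both children are tree nodes'', and part~(2) is the observation that deleting a leaf affects no path among the remaining nodes. Your argument is simply more explicit than the paper's in two places --- you spell out via acyclicity why the common parent $u$ must equal $[i,j]$ (the paper just asserts it), and you unpack in detail why the $CSA$ sets, shortest paths, and strict-ancestor relations are preserved in part~(2) (the paper compresses this to one sentence) --- but the underlying ideas are identical.
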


\begin{proof}
As far as (1) goes, $T(i;j)$ can be applied to $N$ if, and only if,
the leaves $i$ and $j$ are tree nodes and sibling.  Let us prove that this last condition is equivalent to $\ell_N(i,j)=\ell_{N}(j,i)=1$ and $h_N(i,j)=0$.
Indeed, if  the leaves $i$ and $j$ are tree nodes and sibling, then their parent is their LCSA and moreover it is a strict ancestor of both of them, which implies that 
$\ell_N(i,j)=\ell_{N}(j,i)=1$ and $h_N(i,j)=0$.
Conversely, assume that $\ell_N(i,j)=\ell_{N}(j,i)=1$ and $h_N(i,j)=0$.
  The
equalities $\ell_N(i,j)=\ell_{N}(j,i)=1$ imply that $[i,j]$ is a parent of $i$ and $j$,
and $h_N(i,j)=0$ implies that this parent of $i$ and $j$ is a strict ancestor of both of them, and therefore, by Lemma \ref{lem:hyb-nostr}, that $i$ and $j$ are tree nodes. This finishes the proof of (1).

As far as (2) goes, in $N_{T(i;j)}$ we simply remove the leaf $i$
without removing anything else. Therefore, no path ending in a remaining leaf is affected, and as a consequence   no $L^s(k,l)$ with
$k,l\neq i$, is modified.
\end{proof}

\begin{lemma}
\label{lem:red-H}
Let $N$ be a  TCTC-network on a set $S$ of taxa.

\begin{enumerate}[(1)]
\item The reduction $H(i;j_1,\ldots,j_k)$ can be applied to $N$ if,
and only if,
\begin{itemize}
\item $L_N^s(i,j_l)=(1,1,1)$, for every $l\in\{1,\ldots, k\}$.

\item $\ell_N(j_a,j_b)\geq 2$ or $\ell_N(j_b,j_a)\geq 2$ for every $a,b\in \{1,\ldots,k\}$.

\item For every $s\notin \{j_1,\ldots,j_k\}$, if  $\ell_N(i,s)=1$ and $h_N(i,s)=1$, then $\ell_N(j_l,s)=1$ and $h_N(j_l,s)=0$  for some $l\in\{1,\ldots, k\}$.
\end{itemize}

\item If the reduction $H(i;j_1,\ldots,j_k)$ can be applied to $N$, then
$$
L_{N_{H(i;j_1,\ldots,j_k)}}(s,t)=L_N(s,t)\quad\mbox{for every $s,t\in S\setminus\{i\}$}
$$
\end{enumerate}
\end{lemma}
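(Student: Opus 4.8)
The plan is to treat the two parts separately; part~(2) is immediate, and part~(1) is an ``if and only if'' whose delicate half is the ``if''.

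\emph{Part~(2).} The reduction $H(i;j_1,\dots,j_k)$ deletes only the leaf $i$ together with the arcs into it. Since $i$ is a leaf it lies on no path between two other nodes, so for every pair of surviving nodes both the set of paths joining them and the set of paths from the root reaching them are unchanged; hence ancestor and strict-ancestor relations among surviving nodes, and therefore $CSA(s,t)$, $[s,t]$, and the distances from $[s,t]$ to $s$ and to $t$, are unchanged for all $s,t\in S\setminus\{i\}$. In particular the whole triple $L_N^s(s,t)$ is preserved, which gives $L_{N_{H(i;j_1,\dots,j_k)}}(s,t)=L_N(s,t)$.

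\emph{Part~(1).} I would first record that $H(i;j_1,\dots,j_k)$ is applicable precisely when $i$ is a hybrid leaf, the (unique) parents $v_1,\dots,v_k$ of the tree leaves $j_1,\dots,j_k$ are pairwise distinct parents of $i$, and these are \emph{all} the parents of $i$; then I would match these requirements with the three displayed conditions. The condition $L_N^s(i,j_l)=(1,1,1)$ is equivalent to ``$i$ is a hybrid leaf and $j_l$ is a tree leaf whose parent $v_l$ is a parent of $i$'': if $\ell_N(i,j_l)=\ell_N(j_l,i)=1$ then $[i,j_l]$ is a common parent $v_l$ of $i$ and $j_l$, and $h_N(i,j_l)=1$ makes $v_l$ a strict ancestor of $j_l$ but not of $i$, so Lemma~\ref{lem:hyb-nostr} forces $j_l$ to be a tree node and $i$ a hybrid node; conversely, if $v_l$ is a common parent of the hybrid node $i$ and the tree leaf $j_l$, then $[i,j_l]=v_l$ (no proper descendant of $v_l$ can be an ancestor of the tree leaf $j_l$) and a direct computation returns $(1,1,1)$. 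Condition (B) is equivalent to the $v_l$ being pairwise distinct, since $\ell_N(j_a,j_b),\ell_N(j_b,j_a)\le 1$ holds iff $j_a$ and $j_b$ share their unique parent, i.e.\ $v_a=v_b$. It is convenient to note that ``$\ell_N(j_l,s)=1$ and $h_N(j_l,s)=0$'' holds iff $[j_l,s]=v_l$ is a strict ancestor of $s$, i.e.\ iff $s$ is a strict descendant of $v_l$ (again because $v_l$ is the only parent of the tree leaf $j_l$). Granting (A) and (B), condition (C) should then be equivalent to ``$v_1,\dots,v_k$ are all the parents of $i$''; the easy half is: if they are all the parents and $\ell_N(i,s)=1$ then $[i,s]$ is a parent of $i$, hence some $v_l$, and $h_N(i,s)=1$ makes $s$ a strict descendant of $v_l$, which is exactly the conclusion of (C). The distance-change formulas in part~(2) of the lemma follow from the observations made above once the reduction is known to be legitimate.

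The main obstacle is the reverse implication: assuming (A), (B), (C), one must rule out an \emph{extra} parent $v$ of $i$ not among $v_1,\dots,v_k$. I would argue by contradiction. By the tree-child condition $v$ has a tree child $w\neq i$ (as $i$ is hybrid), and $w$ has a tree-descendant leaf $s$ (itself if $w$ is a leaf, otherwise by \cite[Lem.~2]{cardona.ea:07a}); so $s$ is a tree leaf, a strict descendant of $v$ along the \emph{unique} tree path $v\pathgr s$, and $s\notin\{j_1,\dots,j_k\}$ (otherwise $v_l$ would be a tree descendant of $v$, contradicting $\tau(v)=\tau(i)=\tau(v_l)$, since $\tau$ strictly increases along tree paths and coincides at the parents of a hybrid node). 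The two sub-claims are (i) $[i,s]=v$, and (ii) $s$ is not a strict descendant of any $v_l$. For (i): $v\in CSA(i,s)$, so $[i,s]$ is a descendant of $v$; were it different from $v$, it would lie on the tree path $v\pathgr s$, giving $\tau([i,s])>\tau(v)=\tau(i)$, impossible since $[i,s]$ is an ancestor of $i$ and $\tau$ is non-decreasing along arcs. Hence $\ell_N(i,s)=1$, and $h_N(i,s)=1$ because $v$ is not a strict ancestor of its hybrid child $i$ (Lemma~\ref{lem:hyb-nostr}). For (ii): if $s$ were a strict descendant of some $v_l$, then $v_l$ cannot lie on the tail $v\pathgr s$ of the paths reaching $s$ (same $\tau$ argument), so $v_l$ lies on every root-to-$v$ path, i.e.\ $v_l$ is a proper strict ancestor of $v$; but $\tau(v_l)=\tau(v)$ forces the non-trivial path $v_l\pathgr v$ to consist of hybridization arcs only, so the child of $v_l$ on it is a hybrid node which, by Lemma~\ref{lem:str-int}, is also a strict descendant of $v_l$ --- contradicting Lemma~\ref{lem:hyb-nostr}. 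Combining (i), (ii) and the reformulation above, the pair $i,s$ satisfies the hypothesis of (C) but violates its conclusion, contradicting (C); hence $v_1,\dots,v_k$ are all the parents of $i$ and the reduction applies. The remaining checks --- the precise distance changes and the bookkeeping of strict ancestries --- are routine.
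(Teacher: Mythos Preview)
Your proposal is correct and follows essentially the same route as the paper's proof: both reduce the three bullet conditions to the structural requirements on $i$ and the $j_l$'s via Lemma~\ref{lem:hyb-nostr}, and both handle the hard converse of condition~(C) by picking a tree-descendant leaf $s$ of a putative parent $v$ of $i$ and using the temporal representation to force $v$ to coincide with some $v_l$. The only organizational difference is that you phrase this last step as a proof by contradiction (assume an \emph{extra} parent and produce an $s$ violating~(C)), whereas the paper takes an arbitrary parent and shows directly that it equals some $v_l$; the underlying case analysis---$v_l$ on the tree path $v\pathgr s$ versus $v_l$ a strict ancestor of $v$ through a purely hybrid path---is the same in both.
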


\begin{proof}
As far as (1) goes, $H(i;j_1,\ldots,j_k)$ can be applied to $N$ if,
and only if, $j_1,\ldots,j_k$ are tree leaves that are not sibling of each other, the leaf $i$ is a hybrid sibling of $j_1,\ldots,j_k$, and   the only parents of $i$ are those
of $j_1,\ldots,j_k$.  Now:
\begin{itemize}

\item For each $l=1,\ldots,k$, the condition  $L_N^s(i,j_l)=(1,1,1)$ says that
$i$ and $j_l$ are sibling, and that their parent in common is a strict ancestor of $j_l$ but not of $i$. Using Lemma \ref{lem:hyb-nostr},  we conclude that this condition is equivalent to the fact that $i$ and $j_l$ are sibling, $j_l$ is a tree node,  and $i$ a hybrid node.

\item Assume that $j_1,\ldots, j_k$ are tree leaves, with parents $v_1,\ldots,v_k$, respectively. In this case, the condition 
$\ell_N(j_a,j_b)\geq 2$ or $\ell_N(j_b,j_a)\geq 2$  is equivalent to the fact that $j_a,j_b$  are not sibling.
Indeed, if $j_a$ and $j_b$ are sibling, then $\ell_N(j_a,j_b)=\ell_N(j_b,j_a)=1$.
Conversely, if $j_a$ and $j_b$ are not sibling, then there are two possibilities:
either $v_a$ is an ancestor of $j_b$, but not its parent, in which case $v_a=[j_a,j_b]$ and $\ell_N(j_b,j_a)\geq 2$, or $v_a$ is not an ancestor of $j_b$, in which case $[j_a,j_b]$ is a proper ancestor of $v_a$ and hence $\ell_N(j_a,j_b)\geq 2$.

\item Assume that $j_1,\ldots, j_k$ are tree leaves, with parents $v_1,\ldots,v_k$, respectively, and that $i$ is a hybrid sibling of them. Let us see that the only parents of $i$ are $v_1,\ldots,v_k$ if, and only if,
for every $s\notin \{j_1,\ldots,j_k\}$, $\ell_N(i,s)=1$ and $h_N(i,s)=1$ imply that $\ell_N(j_l,s)=1$ and $h_N(j_l,s)=0$ for
some $l=1,\ldots,k$. 

 Indeed, assume that the only parents of $i$ are $v_1,\ldots,v_k$, and let $s\notin\{j_1,\ldots,j_k\}$ be a leaf such that $\ell_N(i,s)=1$ and $h_N(i,s)=1$. Since $\ell_N(i,s)=1$, some parent of $i$, say $v_l$, is the LCSA of $i$ and $s$, and $h_N(i,s)=1$ implies that $v_l$ is a strict ancestor of $s$. But then
 $v_l$ will be the LCSA of its tree leaf $j_l$ and $s$ and strict ancestor of both of them,
 and thus $\ell_N(j_l,s)=1$ and $h_N(j_l,s)=0$.

Conversely, assume that, for every $s\notin \{j_1,\ldots,j_k\}$, $\ell_N(i,s)=1$ and $h_N(i,s)=1$ imply that $\ell_N(j_l,s)=1$ and $h_N(j_l,s)=0$ for
some $l=1,\ldots,k$. 
Let $v$ 
be a parent of $i$, and let  $s$
be a tree descendant leaf of $v$. Then, 
$v=[i,s]$ ($v$ is a strict ancestor of $s$, an ancestor of $i$, and no intermediate node in the unique path $v\pathgr s$ is an ancestor of $i$, by the time consistency property) and thus
$\ell_N(i,s)=1$; moreover, $h_N(i,s)=1$ by Lemma \ref{lem:hyb-nostr}.
Now, 
if $s=j_l$, for some $l=1,\ldots,k$, then $v=v_l$. On the other hand, if $s\notin \{j_1,\ldots,j_k\}$, then 
by assumption, there will exist some $j_l$ such that $\ell_N(j_l,s)=1$ and $h_N(j_l,s)=0$, that is, 
such that $v_l$ is a strict ancestor of $s$. This implies that $v=v_l$. Indeed, if $v\neq v_l$, then either
$v_l$ is an intermediate node in the path $v\pathgr s$, and in particular a tree descendant of $v$, which is forbidden by the time consistency because $v$ and $v_l$ have the hybrid child $i$ in common,
or $v$ is a proper descendant of $v_l$ through a path where $v_l$ and all the intermediate nodes are hybrid (if some of these nodes were of tree type, the temporal representation of $v$ would be greater than that of $v_l$, contradicting again the time consistency), in which case the child of $v_l$ in this path would be a hybrid child of $v_l$ that is a strict descendant of it (because it is intermediate in the path $v_l\pathgr v\pathgr s$ and $s$ is a strict descendant of $v_l$), which is impossible by Lemma  \ref{lem:hyb-nostr}.
\end{itemize}
This finishes the proof of (1).  

As far as (2) goes, in
$N_{H(i;j_1,\ldots,j_k)}$ we simply remove the hybrid leaf $i$
without removing anything else, and therefore no splitted LCSA-path length of a pair of remaining leaves   is affected.
\end{proof}

\begin{theorem}
\label{prop:D-arb-net}
Let $N$ and $N'$ be two TCTC-networks on the same set $S$ of taxa.
Then, $L^s(N)=L^s(N')$ if, and only if, $N\cong N'$.
\end{theorem}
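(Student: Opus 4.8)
The plan is to reproduce, for the vectors $L^s$ and the reductions U, T, H of Section~\ref{sec:tctc}, the argument scheme already used for Proposition~\ref{prop:D-bin-net-H}. The `if' implication is immediate: an isomorphism of $S$-DAGs preserves labels, arcs, paths and strict ancestry, hence it preserves the quantities $\ell_N(i,j)$, $\ell_N(j,i)$ and $h_N(i,j)$, and therefore $L^s$. For the `only if' implication I would argue by induction on the number of nodes of the network, equivalently on the pair $(n,m)$ formed by the number of leaves and the number of internal nodes, ordered lexicographically: indeed, each of U, T, H either removes one leaf (cases T and H) or keeps the number of leaves and removes one internal node (case U). The base cases are the TCTC-networks to which no reduction applies; by Proposition~\ref{thm-reduction-possible} these have at most two leaves, and a TCTC-network on at most two leaves is determined up to isomorphism by $L^s$ (for $n=1$ there is a unique such network; for $n=2$ one checks, using Corollary~\ref{lem:root}, Lemma~\ref{lem:hyb-nostr} and time consistency, that both leaves are tree nodes and the network is forced to have the shape of a caterpillar rooted at $[1,2]$, hence is recoverable from $\ell_N(1,2)$ and $\ell_N(2,1)$; in particular the two-leaf phylogenetic tree is the unique two-leaf TCTC-network with $L^s=\big((1,1,0)\big)$). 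Thus the statement holds whenever $N$ is one of these base networks.

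For the inductive step, let $N,N'$ be TCTC-networks on $S=\{1,\dots,n\}$ with $n\ge 3$ and $L^s(N)=L^s(N')$. I would fix once and for all a deterministic selection rule which, from the splitted LCSA-path lengths vector of a TCTC-network with at least three leaves, picks one applicable reduction --- for instance: if some $T(i;j)$ applies, take the one with $(i,j)$ lexicographically least; otherwise, if some $H(i;j_1,\dots,j_k)$ applies, take the lexicographically least admissible tuple; otherwise take $U(i)$ with $i$ least. This rule is well defined because Proposition~\ref{thm-reduction-possible} guarantees that some reduction applies, and, crucially, it depends \emph{only} on $L^s$: by Lemma~\ref{lem:red-U}(1), Lemma~\ref{lem:red-T}(1) and Lemma~\ref{lem:red-H}(1), applicability of each concrete reduction $U(i)$, $T(i;j)$, $H(i;j_1,\dots,j_k)$ is equivalent to an explicit condition on the splitted LCSA-path lengths vector. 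Since $L^s(N)=L^s(N')$, the rule selects the very same reduction for $N$ and for $N'$; let $N_1$ and $N_1'$ be the resulting TCTC-networks (on $S$, respectively on $S\setminus\{i\}$).

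Next, by Lemma~\ref{lem:red-U}(2), Lemma~\ref{lem:red-T}(2) and Lemma~\ref{lem:red-H}(2), the vector $L^s(N_1)$ is obtained from $L^s(N)$ by a transformation that depends only on the selected reduction and on $L^s(N)$ itself: a T or an H reduction simply deletes the entries indexed by pairs containing the removed leaf, while a $U(i)$ reduction subtracts $(1,0,0)$ from the entries $L^s_N(i,k)$ and leaves all other entries unchanged. Applying this same transformation to the equal vectors $L^s(N)=L^s(N')$ yields $L^s(N_1)=L^s(N_1')$. Since $N_1$ and $N_1'$ are strictly smaller than $N$ and $N'$ in the chosen ordering (one leaf fewer, or the same leaves and one internal node fewer), the induction hypothesis gives $N_1\cong N_1'$. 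Applying to this isomorphism the inverse expansion ($\textrm{U}^{-1}$, $\textrm{T}^{-1}$ or $\textrm{H}^{-1}$) that undoes the selected reduction and invoking Lemma~\ref{lem:iso-red}, the two resulting networks are isomorphic, and they are (isomorphic to) $N$ and $N'$ respectively; hence $N\cong N'$.

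The heavy lifting is already done in Lemmas~\ref{lem:red-U}, \ref{lem:red-T} and~\ref{lem:red-H}: once applicability and effect of the three reductions are known to be encoded in $L^s$, the induction is a routine assembly, just as in Proposition~\ref{prop:D-bin-net-H}. The one point inside the induction requiring genuine care --- and where I would expect the only real subtlety --- is the \emph{canonicalisation of the choice of reduction}: it is not enough that a reduction of the same \emph{kind} be applicable to $N$ and to $N'$; the literally identical reduction (same labels) must be performed on both, since otherwise the equality of the vectors need not propagate. Fixing a deterministic, $L^s$-driven rule as above settles this, because equal vectors then force equal choices. The other mildly delicate spot is the $n\le 2$ base case, which must be handled by hand because the reductions U, T, H are defined only for $n\ge 3$.
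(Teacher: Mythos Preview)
Your proof is correct and follows essentially the same route as the paper's: double induction on $(n,m)$, using Proposition~\ref{thm-reduction-possible} to guarantee an applicable reduction, Lemmas~\ref{lem:red-U}--\ref{lem:red-H} to transport both the applicability condition and the effect of the reduction through $L^s$, and Lemma~\ref{lem:iso-red} to invert. Your explicit $L^s$-driven selection rule is a harmless formalization of what the paper leaves implicit, and your handling of the $n=2$ base case is in fact more careful than the paper's: since elementary nodes are allowed, $\TCTC_2$ is \emph{not} a single network, and one does need to argue (as you do) that any two-leaf TCTC-network is a tree determined by $\ell_N(1,2)$ and $\ell_N(2,1)$.
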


\begin{proof}
The `if' implication is obvious.  We prove the `only if' implication
by double induction on the number $n$ of elements of $S$ and the
number $m$ of internal nodes of $N$.

As in Proposition \ref{prop:D-bin-net-H}, the cases $n=1$ and $n=2$
are straightforward, because both $\TCTC_1$ and $\TCTC_2$ consist of a single network.

On the other hand, the case when $m=1$, for every $n$, is also
straightforward: assuming $S=\{1,\ldots,n\}$, the network $N$ is in this case the phylogenetic tree with Newick
string \texttt{(1,2,\ldots,n);}, consisting only of the root and the
leaves, and in particular $L^s_N(i,j)=(1,1,0)$ for every $1\leq
i<j\leq n$.  If $L^s(N)=L^s(N')$, we have that $L^s_{N'}(i,j)=(1,1,0)$
for every $1\leq i<j\leq n$, and therefore all leaves in $N'$ are tree nodes and
sibling of each other by Lemma \ref{lem:hyb-nostr}.  Since the
root of a hybridization network cannot be elementary, this says that
$N'$ is also a phylogenetic tree with Newick
string \texttt{(1,2,\ldots,n);} and hence it is
isomorphic to $N$.

Let now $N$ and $N'$ two TCTC-networks with $n\geq 3$ leaves such that
$L^s(N)=L^s(N')$  and $N$ has $m\geq 2$   internal nodes.  Assume as induction hypothesis that the thesis in the theorem is
true for pairs of TCTC-networks $N_1,N_1'$ with $n-1$ leaves or with $n$ leaves and such
that $N_1$ has $m-1$ internal nodes.

By Proposition \ref{thm-reduction-possible}, a reduction $U(i)$, $T(i;j)$ or $H(i;j_1,\ldots, j_k)$ can be applied to $N$. Since the application conditions for such a reduction depend only on the splitted LCSA-path lengths vectors by Lemmas \ref{lem:red-U}.(1), \ref{lem:red-T}.(1) and
\ref{lem:red-H}.(1), and $L^s(N)=L^s(N')$, we conclude that we can apply the same reduction to $N'$.

Now, we apply the same reduction to $N$ and $N'$ to obtain new TCTC-networks $N_1$ and $N_1'$, respectively. If the reduction was of the form $U(i)$, $N_1$ and $N_1'$ have $n$ leaves and $N_1$ has $m-1$ internal nodes; if the reduction was of the forms $T(i;j)$ or $H(i;j_1,\ldots,j_k)$, $N_1$ and $N_1'$ have $n-1$ leaves. In all cases, $L^s(N_1)=L^s(N_1')$ by 
Lemmas \ref{lem:red-U}.(2), \ref{lem:red-T}.(2) and
\ref{lem:red-H}.(2), and therefore, by the induction hypothesis, $N_1\cong N_1'$.

Finally, by Lemma \ref{lem:iso-red}, $N$ and $N'$ are obtained from $N_1$ and $N_1'$ by applying the same expansion
$\textrm{U}^{-1}$, $\textrm{T}^{-1}$, or $\textrm{H}^{-1}$, and they are isomorphic. 
\end{proof}

The vectors of splitted LCSA-path lengths vectors do not separate hybridization networks much more general than the TCTC, as we following examples show.

\begin{remark}
The vectors of splitted distances do not separate arbitrary (that, is, possibly time inconsistent) tree-child phylogenetic networks. Indeed, the non-isomorphic tree-child binary phylogenetic networks $N$ and $N'$ depicted in Fig.~\ref{fig:notc} have the same $L^s$ vectors:
$$
L^s(N)=L^s(N')=\big((2,1,1),(4,1,1),(3,1,1)\big)
.
$$
\end{remark}

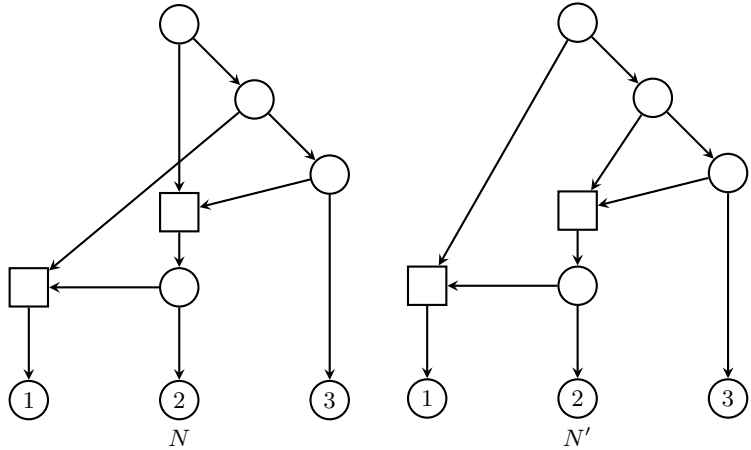
\begin{figure}[htb]
\begin{center}
%
              \begin{tikzpicture}[thick,>=stealth,scale=0.5]
              \draw(0,0) node[tre] (1) {};  \etq 1
              \draw(4,0) node[tre] (2) {};  \etq 2
              \draw(8,0) node[tre] (3) {};  \etq 3  
              \draw(0,3) node[hyb] (A) {};  
              \draw(4,3) node[tre] (u) {}; 
             \draw(4,5) node[hyb] (B) {};  
             \draw(8,6) node[tre] (t) {};  
             \draw(6,8) node[tre] (s) {};  
             \draw(4,10) node[tre] (r) {}; 
           \draw[->] (A)--(1);
            \draw[->] (u)--(A);
             \draw[->] (u)--(2);
            \draw[->] (B)--(u);
            \draw[->] (s)--(t);
            \draw[->] (t)--(B);
            \draw[->] (s)--(A);
             \draw[->] (r)--(s);
            \draw[->] (r)--(B);
            \draw[->] (t)--(3);
 \draw(4,-1) node  {$N$};
            \end{tikzpicture}
            \qquad            \begin{tikzpicture}[thick,>=stealth,scale=0.5]
              \draw(0,0) node[tre] (1) {};  \etq 1
              \draw(4,0) node[tre] (2) {};  \etq 2
              \draw(8,0) node[tre] (3) {};  \etq 3  
              \draw(0,3) node[hyb] (A) {};  
              \draw(4,3) node[tre] (u) {}; 
             \draw(4,5) node[hyb] (B) {};  
             \draw(8,6) node[tre] (t) {};  
             \draw(6,8) node[tre] (s) {};  
             \draw(4,10) node[tre] (r) {}; 
           \draw[->] (A)--(1);
            \draw[->] (u)--(A);
             \draw[->] (u)--(2);
            \draw[->] (B)--(u);
            \draw[->] (s)--(t);
            \draw[->] (t)--(B);
            \draw[->] (s)--(B);
             \draw[->] (r)--(s);
            \draw[->] (r)--(A);
            \draw[->] (t)--(3);
 \draw(4,-1) node  {$N'$};
            \end{tikzpicture}

\end{center}
\caption{\label{fig:notc}
These two tree-child binary phylogenetic networks have the same splitted LCSA-path lengths vectors.}
\end{figure}

\begin{remark}
The splitted LCSA-path lengths vectors do not separate {tree-sibling}   time consistent  phylogenetic networks, either.
Consider for instance the tree-sibling time consistent fully resolved phylogenetic networks $N$ and $N'$ depicted in Figure \ref{fig:treesib}. A simple computation shows that they have the same $L^s$ vectors, but they are not isomorphic. 
\end{remark}

\begin{figure}[htb]
\begin{center}
            \begin{tikzpicture}[thick,>=stealth,scale=0.5]
              \draw(0,0) node[tre] (1) {};  \etq 1
               \draw(2,0) node[tre] (2) {};  \etq 2
               \draw(6,0) node[tre] (3) {};  \etq 3
               \draw(8,0) node[tre] (4) {};  \etq 4
               \draw(10,0) node[tre] (5) {};  \etq 5
               \draw(12,0) node[tre] (6) {};  \etq 6
               \draw(16,0) node[tre] (7) {};  \etq 7
               \draw(18,0) node[tre] (8) {};  \etq 8
               \draw(0,3) node[tre] (a) {};  
               \draw(4,3) node[tre] (b) {};  
                  \draw(8,3) node[tre] (c) {};  
               \draw(10,3) node[tre] (d) {};  
               \draw(14,3) node[tre] (e) {};  
                  \draw(18,3) node[tre] (f) {};  
                  \draw(3,6) node[tre] (g) {};  
                   \draw(7,6) node[tre] (h) {};  
                   \draw(5,7) node[tre] (i) {};  
                  \draw(13,7) node[tre] (j) {};  
                  \draw(9,9) node[tre] (r) {};  
                  \draw(2,2) node[hyb] (A) {};  
                  \draw(6,2) node[hyb] (B) {};  
                  \draw(12,2) node[hyb] (C) {};  
                  \draw(16,2) node[hyb] (D) {};  
               \draw[->] (a)--(1);
            \draw[->] (a)--(A);
             \draw[->] (A)--(2);
            \draw[->] (b)--(A);
            \draw[->] (b)--(B);
              \draw[->] (B)--(3);
           \draw[->] (c)--(B);
            \draw[->] (c)--(4);
                \draw[->] (d)--(5);
            \draw[->] (d)--(C);
             \draw[->] (C)--(6);
            \draw[->] (e)--(C);
            \draw[->] (e)--(D);
              \draw[->] (D)--(7);
           \draw[->] (f)--(D);
            \draw[->] (f)--(8);
            \draw[->] (g)--(a);
            \draw[->] (g)--(f);
            \draw[->] (h)--(c);
             \draw[->] (h)--(d);
            \draw[->] (i)--(g);
            \draw[->] (i)--(h);
            \draw[->] (j)--(b);
            \draw[->] (j)--(e);
            \draw[->] (r)--(i);
            \draw[->] (r)--(j);
   \draw(9,-1) node  {$N$};
            \end{tikzpicture}
  \vspace*{3ex}
  
             \begin{tikzpicture}[thick,>=stealth,scale=0.5]
              \draw(0,0) node[tre] (1) {};  \etq 1
               \draw(2,0) node[tre] (2) {};  \etq 2
               \draw(6,0) node[tre] (7) {};  \etq 7
               \draw(8,0) node[tre] (8) {};  \etq 8
               \draw(10,0) node[tre] (5) {};  \etq 5
               \draw(12,0) node[tre] (6) {};  \etq 6
               \draw(16,0) node[tre] (3) {};  \etq 3
               \draw(18,0) node[tre] (4) {};  \etq 4
               \draw(0,3) node[tre] (a) {};  
               \draw(4,3) node[tre] (b) {};  
                  \draw(8,3) node[tre] (c) {};  
               \draw(10,3) node[tre] (d) {};  
               \draw(14,3) node[tre] (e) {};  
                  \draw(18,3) node[tre] (f) {};  
                  \draw(3,6) node[tre] (g) {};  
                   \draw(7,6) node[tre] (h) {};  
                   \draw(5,7) node[tre] (i) {};  
                  \draw(13,7) node[tre] (j) {};  
                  \draw(9,9) node[tre] (r) {};  
                  \draw(2,2) node[hyb] (A) {};  
                  \draw(6,2) node[hyb] (B) {};  
                  \draw(12,2) node[hyb] (C) {};  
                  \draw(16,2) node[hyb] (D) {};  
               \draw[->] (a)--(1);
            \draw[->] (a)--(A);
             \draw[->] (A)--(2);
            \draw[->] (b)--(A);
            \draw[->] (b)--(B);
              \draw[->] (B)--(7);
           \draw[->] (c)--(B);
            \draw[->] (c)--(8);
                \draw[->] (d)--(5);
            \draw[->] (d)--(C);
             \draw[->] (C)--(6);
            \draw[->] (e)--(C);
            \draw[->] (e)--(D);
              \draw[->] (D)--(3);
           \draw[->] (f)--(D);
            \draw[->] (f)--(4);
            \draw[->] (g)--(a);
            \draw[->] (g)--(c);
            \draw[->] (h)--(d);
             \draw[->] (h)--(f);
            \draw[->] (i)--(g);
            \draw[->] (i)--(h);
            \draw[->] (j)--(b);
            \draw[->] (j)--(e);
            \draw[->] (r)--(i);
            \draw[->] (r)--(j);
   \draw(9,-1) node  {$N'$};
            \end{tikzpicture}

\end{center}
\caption{\label{fig:treesib}
These two tree-sibling time consistent binary phylogenetic networks have the same splitted LCSA-path lengths vectors.}
\end{figure}
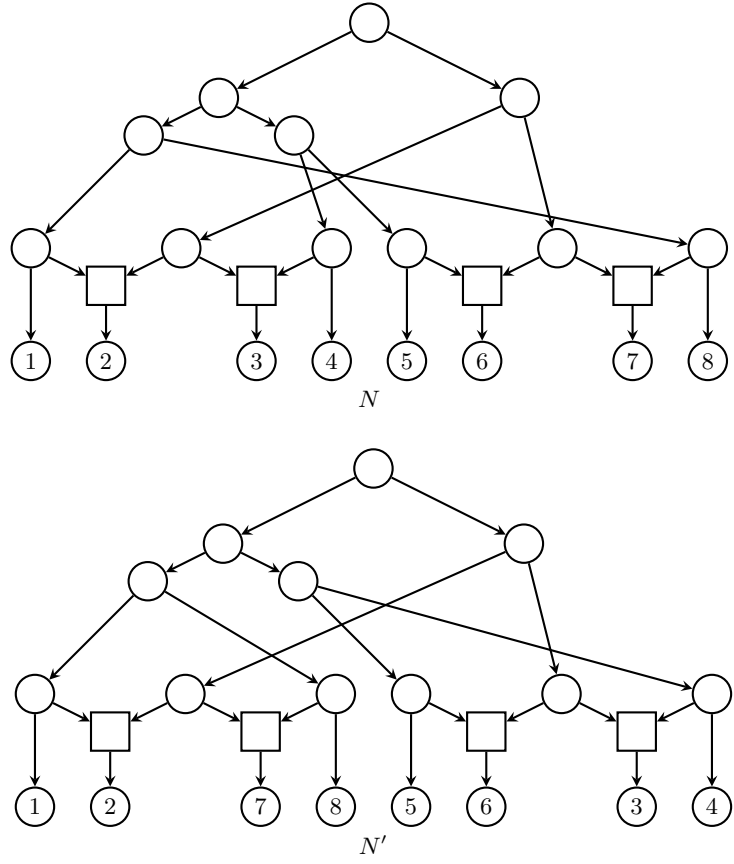

As in the fully resolved case, the injectivity of the mapping
$$
L^s:\TCTC_n\to  \RR^{3n(n-1)/2}
$$
makes it possible to induce metrics on $\TCTC_n$ from metrics on $\RR^{3n(n-1)/2}$.
The proof of the following result is similar to that of Proposition \ref{prop:metric}.

\begin{proposition}
For every $n\geq 1$,  let $D$ be any metric on $\RR^{3n(n-1)/2}$. The mapping
$d^s: \TCTC_n\times \TCTC_n  \to  \RR$ defined by
$d (N_1,N_2) = D(L^s(N_1),L^s(N_2))$
 satisfies the axioms of metrics up to isomorphisms. \qed
  \end{proposition}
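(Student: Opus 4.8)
The plan is to deduce all four axioms from the corresponding properties of the metric $D$, exactly as in the proof of Proposition~\ref{prop:metric}: the point is simply that $d^s$ is the pullback of $D$ along the map $L^s:\TCTC_n\to\RR^{3n(n-1)/2}$, once we identify the finite coordinate set $\{-1,0,1\}$ appearing in $(\NN\times\NN\times\{-1,0,1\})^{n(n-1)/2}$ with a subset of $\RR$ in the obvious way, so that $d^s(N_1,N_2)=D(L^s(N_1),L^s(N_2))$ makes sense.

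First I would dispatch axioms (1), (3) and (4), which are purely formal and use nothing about TCTC-networks. Non-negativity, $d^s(N_1,N_2)\ge 0$, is immediate from $D\ge 0$. Symmetry follows from symmetry of $D$, since $d^s(N_1,N_2)=D(L^s(N_1),L^s(N_2))=D(L^s(N_2),L^s(N_1))=d^s(N_2,N_1)$. For the triangle inequality I would feed the three vectors $L^s(N_1),L^s(N_2),L^s(N_3)$ into the triangle inequality for $D$:
\[
d^s(N_1,N_3)=D(L^s(N_1),L^s(N_3))\le D(L^s(N_1),L^s(N_2))+D(L^s(N_2),L^s(N_3))=d^s(N_1,N_2)+d^s(N_2,N_3).
\]

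The only axiom requiring input from the structure theory is (2), and there the real work has already been done in Theorem~\ref{prop:D-arb-net}. I would combine the separation axiom for $D$ (that $D(M_1,M_2)=0$ if and only if $M_1=M_2$) with that theorem to get the chain of equivalences $d^s(N_1,N_2)=0 \iff D(L^s(N_1),L^s(N_2))=0 \iff L^s(N_1)=L^s(N_2) \iff N_1\cong N_2$, the last step being precisely Theorem~\ref{prop:D-arb-net}; this is axiom (2) ``up to isomorphism''. Thus the sole nontrivial ingredient is the injectivity of $L^s$ modulo isomorphism supplied by Theorem~\ref{prop:D-arb-net}, and I do not anticipate any obstacle beyond invoking it correctly; the rest is the same routine bookkeeping as in Proposition~\ref{prop:metric}.
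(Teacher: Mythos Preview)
Your proposal is correct and follows exactly the approach indicated by the paper, which simply states that the proof is similar to that of Proposition~\ref{prop:metric}: axioms (1), (3), (4) come directly from the corresponding properties of $D$, and axiom (2) from the separation axiom for $D$ together with Theorem~\ref{prop:D-arb-net}. There is nothing to add.
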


For instance, using as $D$ the Manhattan distance or the Euclidean distance, we obtain, respectively, the metrics on $\TCTC_n$
$$
\begin{array}{rl}
\displaystyle d^s_1(N_1,N_2)& \displaystyle
=\sum_{1\leq i<  j\leq n}\big(|\ell_{N_1}(i,j)-\ell_{N_2}(i,j)|+|\ell_{N_1}(j,i)-\ell_{N_2}(j,i)|\\[-1ex]
&  \displaystyle \qquad\qquad\qquad\qquad \qquad\qquad \qquad +  |h_{N_1}(i,j)-h_{N_2}(i,j)|\big)\\[1ex]
& \displaystyle =\sum_{1\leq i\neq  j\leq n}\big(|\ell_{N_1}(i,j)-\ell_{N_2}(i,j)|
+\frac{1}{2} |h_{N_1}(i,j)-h_{N_2}(i,j)|\big)\\[3ex]
\displaystyle  d^s_2(N_1,N_2)& \displaystyle
=\Big(\sum_{1\leq i<  j\leq n}\big((\ell_{N_1}(i,j)-\ell_{N_2}(i,j))^2+(\ell_{N_1}(j,i)-\ell_{N_2}(j,i))^2\\[-1ex]
& \displaystyle\qquad\qquad\qquad\qquad\qquad\qquad \qquad  +(h_{N_1}(i,j)-h_{N_2}(i,j))^2\big)\Big)^{\frac 12}\\[1ex]
& \displaystyle =\sqrt{\sum_{1\leq i \neq  j\leq n}\big((\ell_{N_1}(i,j)-\ell_{N_2}(i,j))^2
+\frac{1}{2}(h_{N_1}(i,j)-h_{N_2}(i,j))^2\big)}
\end{array}
$$
These metrics generalize to TCTC-networks the splitted nodal metrics for arbitrary phylogenetic trees defined in \cite{cardona.ea:08a}. and the nodal metric for TCTC \emph{phylogenetic} networks defined in \cite{comparison2}.

\section{Conclusions}

A classical result of Smolenskii \cite{smolenskii:63} establishes that
the vectors of distances between pairs of leaves separate unrooted
phylogenetic trees on a given set of taxa.  This result generalizes
easily to fully resolved rooted phylogenetic trees
\cite{cardona.ea:08a}, and it lies at the basis of the classical
definitions of \emph{nodal distances} for unrooted as well as for
fully resolved rooted phylogenetic trees based on the comparison of
these vectors
\cite{bluis.ea:2003,farris:sz69,farris:sz73,phipps:sz71,steelpenny:sb93,willcliff:taxon71}.
But these vectors do not separate arbitrary rooted phylogenetic trees,
and therefore they cannot be used to compare the latter in a sound
way.  This problem was overcome in \cite{cardona.ea:08a} by
introducing the   \emph{splitted  path lengths}  matrices 
 and 
 showing that they separate arbitrary rooted
phylogenetic trees on a given set of taxa.  It is possible then to
define \emph{splitted nodal} metrics for arbitrary rooted phylogenetic
trees by comparing these matrices.

In this paper we have generalized these results to the class $\TCTC_n$ of
tree-child time consistent hybridization networks (TCTC-networks) with $n$ leaves. For every pair $i,j$ of leaves in a TCTC-network $N$, we have defined the \emph{LCSA-path length} $L_N(i,j)$ 
and the \emph{splitted LCSA-path length} $L_N^s(i,j)$
between $i$ and $j$
and we have proved that the vectors $L(N)=(L_N(i,j))_{1\leq
i<j\leq n}$ separate fully resolved networks in $\TCTC_n$ and
the vectors
$L^s(N)=(L_N^s(i,j))_{1\leq i<j\leq
n}$  separate arbitrary  TCTC-networks.

The vectors $L(N)$ and $L^s(N)$ can be computed in low polynomial time
by means of simple algorithms that do not require the use of
sophisticated data structures.  Indeed, let $n$ be the number of
leaves and $m$ the number of internal nodes in $N$.  As we explained
in \cite[\S V.D]{comparison1}, for each internal node $v$ and for each
leaf $i$, it can be decided whether $v$ is a strict or a non-strict
ancestor of $i$, or not an ancestor of it at all, by computing by
breadth-first search the shortest paths from the root to each leaf
before and after removing each of the $m$ nodes in turn, because a
non-strict descendant of a node will still be reachable from the root
after removing that node, while a strict descendant will not.  All
this information can be computed in $O(m(n+m))$ time, and once it has
been computed the least common semi-strict ancestor of two leaves can
be computed in $O(m)$ time by selecting the node of least height among
those which are ancestors of the two leaves and strict ancestors of at
least one of them.  This allows the computation of $L(N)$ and $L^s(N)$
in $O(m^2+n^2m)$ time.

These vectors $L(N)$ and $L^s(N)$ can be used then to define metrics
for fully resolved and arbitrary TCTC-networks, respectively, from
metrics for real-valued vectors.  The  metrics obtained in this way can be understood as
generalizations to $\TCTC_n$ of the (non-splitted or splitted) nodal metrics for phylogenetic
trees and they can be computed in low polynomial time if the metric
used to compare the vectors can be done so:
this is the case, for instance, when this metric is the Manhattan or
the Euclidean metric (in the last case, computing the square root with
$O(10^{m+n})$ significant digits \cite{batra:2008}, which should
be more than enough).

It remains to study the main properties of the  metrics   defined in this way, like for instance
their diameter or the distribution of their values.  It is important
to recall here that these are open problems even for the classical
nodal distances for fully resolved rooted phylogenetic trees.

\section*{Acknowledgment}

The research reported in this paper has been partially supported by
the Spanish 
DGI projects MTM2006-07773 COMGRIO and MTM2006-15038-C02-01.

\end{document}